\theoremstyle{definition}
\newtheorem{theo}{Theorem}
\newtheorem{prop}{Proposition}
\newtheorem{lemm}{Lemma}
\newtheorem{fact}{Fact}
\newtheorem{rem}{Remark}
\newtheorem{rei}{Example}
\newtheorem{defi}{Definition}
\def\halmos{$\hfill \square$}
\begin{document}

\title{Characterizations of Controlled Generation of
Right Linear Grammars with Unknown Behaviors}


\author{Daihei Ise\footnote{
    Graduate School of Informatics and Engineering, The University of Electro-Communications}
  \and Satoshi Kobayashi\footnotemark[1]}


\date{}

\maketitle

\begin{abstract}
This paper deals with the control generation of right linear grammars
with unknown behaviors (RLUBs, for short) in which
derivation behavior is not determined completely.
In particular,
we consider a physical property of control devices
used in control systems and formulate it as a partial
order over control alphabet of the control system.
We give necessary and sufficient conditions
for given finite language classes to be generated by
RLUBs and their control systems using a given partial order
over control alphabet. 
\end{abstract}



\section{Introduction}
\label{sec:introduction}

Many molerular computing paradigms have been proposed and studied.
Among them, H systems (\cite{Tom}), P systems (\cite{Paun2}), and
R systems (\cite{Rozenberg2}) are monuments of the theoretical
works leading the molecular computing theory. 
From experimental point of view, 
nucleic acids are materials which are suitable for implementing information processing since the hybridization according to Watson-Crick base pairing can be utilized to encode programs into base sequences (\cite{Adleman,Tom}), 
and various DNA computers have been proposed
(\cite{Tom,Qian1,Qian2,Hagiya1,CRN,Winfree}). 
On the other hand, significant progress has been made in the technology
for controlling DNA hybridization by photo irradiation
(\cite{Fujimoto4,Fujimoto1}), 
temperature change (\cite{Komiya2}), etc. 
These technologies have been applied
to design photo responsive or temperature dependent
DNA devices. 
The progress in DNA nanotechnology and in the control of DNA hybridization 
poses a question whether we can construct a universal system for 
generating a desired DNA nano-structure by controlling with a sequence of
external signals (such as temperature change, or photo irradiation, etc.).
Kimoto, et al., proposed a grammatical system to model such a universal system
for generating linear nano-structures (\cite{Kimoto,Kimoto2}). 
They proposed to use right linear grammars for modelling the control process of generating linear structures,
inspired from the work by Winfree (\cite{Winfree}) 
on the relationship between the class of formal grammars and the class 
of DNA nano-structures, and also from 
many important works on regulated rewriting theories
(\cite{Paun}). Although there have been some works
on the control of R systems (\cite{Petre6,Yako}),
this paper deals with the control of generative process of
formal grammars and we have interests in generating linear structures. 

The model proposed by Kimoto, et al. (\cite{Kimoto,Kimoto2}), 
called {\em right linear grammars with unknown behaviors} (RLUBs, for short), 
is defined as $H=(G,M)$, where $G$ is a right linear grammar, and $M$ is a generative condition which is closely
related to the length of reaction time spent for the generation of
linear nano-structures when we implement $H$ using chemical reactions.
A generative condition is defined as a pair $M=(\mu_t,\mu_b)$, 
where $\mu_t$ and $\mu_b$ are intervals of integers satisfying 
$\mu_b\subseteq\mu_t$. 
Intuitively speaking, $\mu_t$ specifies the set of integers representing 
depth of derivations of $G$ which may possibly occur (upper bound).
On the other hand, $\mu_b$ specifies the set of integers representing
depth of derivations of $G$ which are guaranteed to occur (lower bound).

The control system $C$ for $H$ is defined as a triple $C=(\Gamma,\phi,T)$,
where $\Gamma$ is a finite set of control symbols (each corresponding to
temperature, wave length of light, etc.), $\phi$ is a control function,
and $T$ is a set of strings over $\Gamma$. A control symbol $t\in \Gamma$
activates a specified set $\phi(t)$ of production rules of $G$ of $H$.
However, the invoked set $\phi(t)$ does not determine complete behavior
of $H$ under the control of $C$. We only know the upper bound and
the lower bound of the behavior of $H$. 
$H$ may take any behavior between
the upper and lower bound defined by $\phi(t)$ and $M=(\mu_t,\mu_b)$.
This sort of incompleteness of the knowledge about the behavior of $H$
is motivated by the fact that it is impossible to predict the
behavior of chemical reaction systems completely.

In \cite{Kimoto,Kimoto2}, 
Kimoto, et al., mainly discussed the problem
of controlled generation of a {\em target string} using the framework of RLUBs. Therefore, there has been no general discussion about 
which language classes can be generated by the control of RLUBs
under various generative conditions.
In this paper, we will give necessary and sufficient conditions
of finite language classes to be generated by the control of RLUBs.
The results of this paper could be an important progress for 
future research topics on general theory of controlled generation
of RLUBs.

Furthermore, this paper also extends the notion of monotone property
of control systems in the following way. 
Let us consider temperature dependent DNA devices $M_1$, $M_2$, and $M_3$ such that
$M_i$ is activated at temperature $T_i$ ($i=1,2,3$) or bellow
where $T_1<T_2<T_3$ holds.
Then, the devices activated at temperature $T_i$ are
activated also at temperature $T_j$ if $T_j\le T_i$.
This sort of physical constraint is called {\em monotone property}
in \cite{Kimoto,Kimoto2}.
Let us consider additional photo responsive DNA devices 
$D_1$, $D_2$, and $D_3$ such that $D_i$ is activated by photo irradiation with wave length of $\lambda_i$ nm ($i=1,2,3$) 
or shorter, where $\lambda_1<\lambda_2<\lambda_3$ holds.
In case that we use both of temperature dependent and photo responsive devices, 
a control symbol can be formulated by a pair of solution temperature and
wave length of photo irradiation.
Thus, we have 9 control symbols
$(T_i,\lambda_j)$ for $1 \leq i,j \leq 3$.
For instance, the control symbol
$(T_2,\lambda_1)$ activates DNA devices $M_2$, $M_3$, $D_1$, $D_2$, and $D_3$
and $(T_1,\lambda_2)$ activates $M_1$, $M_2$, $M_3$, $D_2$, and $D_3$.
Let $S(i,j)$ be the set of DNA devices activated by the control symbol
$(T_i,\lambda_j)$. Then, we have that $S(i_1,j_1)\subseteq S(i_2,j_2)$
holds if and only if $i_2 \leq i_1$ and $j_2 \leq j_1$ hold.
Figure \ref{fig1} shows set inclusion relations over the elements
in $\{\,S(i,j)\,\mid\, 1 \leq i,j \leq 3\,\}$,
where
only the relations among the {\em adjacent} elements are shown\footnote{
Here, we mean that $S(i_1,j_1)$ and
$S(i_2,j_2)$ are adjacent if $|i_1-i_2|+|j_1-j_2|=1$ holds.}. 
Therefore, the set inclusion relation of DNA devices
relative to 9 control symbols
could be a partial order.
This paper gives general theoretical analysis for any given partial order
imposed on DNA devices.

The rest of this paper is organized as follows.
Section \ref{sec:preliminaries} introduces essential definitions and notations.
Section \ref{sec:rlub-and-its-control} introduces the definition of a right linear grammar with unknowun behavior (RLUB) and its control system based on \cite{Kimoto,Kimoto2} and also introduces new definition related to RLUBs which were not defined in \cite{Kimoto,Kimoto2}.
Section \ref{sec:Important-Definitions-and-Lemma} introduces the important definitions used for the characterization and shows some propositions and a lemma about the definitions.
Section \ref{sec:Characterization-of-Controlled-Generation-of-RLUBs} shows the characterization of the controlled generation of RLUBs in the erasing mode.
Section \ref{sec:Conclusions} describes the conclusion and the future work.

\begin{figure}
  \begin{center}
    \begin{tabular}{ccccc}
      S(1,1) & $\supseteq$ & S(1,2) & $\supseteq$ & S(1,3) \\
      \rotatebox[origin=c]{90}{$\subseteq$} &  & \rotatebox[origin=c]{90}{$\subseteq$} &  & \rotatebox[origin=c]{90}{$\subseteq$} \\
      S(2,1) & $\supseteq$ & S(2,2) & $\supseteq$ & S(2,3) \\
      \rotatebox[origin=c]{90}{$\subseteq$} &  & \rotatebox[origin=c]{90}{$\subseteq$} &  & \rotatebox[origin=c]{90}{$\subseteq$} \\
      S(3,1) & $\supseteq$ & S(3,2) & $\supseteq$ & S(3,3) \\
    \end{tabular}
    \caption{Set inclusion relations of $S(i,j)$ \label{fig1}}
  \end{center}
\end{figure}

\section{Preliminaries}
\label{sec:preliminaries}

We introduce necessary definitions and notations
based on \cite{Kimoto,Kimoto2}.

Let $\mathbb{Z}$, $\mathbb{Z}_{\geq 0}$
and $\mathbb{Z}_{\geq 1}$ be the sets of integers, non-negative integers,
and positive integers, respectively.
For $i,j \in \mathbb{Z}_{\geq 0}$ with $i \leq j$,
$[i,j]$ is called an {\em interval}
and defined as the finite set
$\{ k \in \mathbb{Z}_{\geq 0} \mid i \leq k \leq j  \}$.
The number of elements of a finite set $X$ is denoted by $|X|$.
The set of all subsets of $X$ is called
the {\em power set of $X$} and is denoted by $2^{X}$.



A finite and non-empty set of symbols is called a {\em finite alphabet}.
Let $\Sigma$ be a finite alphabet.
The length of a string $w$ over $\Sigma$
is denoted by $|w|$.
An {\em empty string} is a string of length 0, and is denoted by $\epsilon$.
The set of all strings over $\Sigma$ is denoted by ${\Sigma}^{*}$.
We define
${\Sigma}^{+} \overset{\mathrm{def}}{=} {\Sigma}^{*} - \{ \epsilon \}$.
A subset of ${\Sigma}^{*}$ is called a {\em language over $\Sigma$}.
A set of languages over $\Sigma$ is called
a {\em class of languages over $\Sigma$}.
A sequence $a a \cdots a \; (a \in \Sigma)$ of length $k$ is denoted by $a^k$, where $k$ is an integer in $\mathbb{Z}_{\geq 0}$.
For a string $w$ and an integer $i$ with $1\le i\le|w|$,
the $i$-th symbol of a sequence $w$ is denoted by $w[i]$.
For a string $w$ and integers $i,j$ with $1 \leq i \leq j \leq |w|$,
by $w[i,j]$, we denote the substring of $w$
starting from the $i$-th symbol $w[i]$ and ending at the $j$-th symbol $w[j]$.
For a finite alphabet $\Sigma$ and an interval $\mu$,
we define
$\Sigma^{\mu} \overset{\mathrm{def}}{=}
\{ w \in {\Sigma}^{*} \mid |w| \in \mu \}$.
For $x,y,z \in \Sigma^*$ such that $xz=y$,
we say that $x$ is a {\em prefix of} $y$.
For $x,y \in \Sigma^*$ and $z \in \Sigma^+$ such that $xz=y$,
we say that $x$ is a {\em proper prefix of} $y$.
For a language $L$ over $\Sigma$, we define a set $\mathrm{Alph}(L)$ as
a set of symbols used in strings in $L$.
Formally, we define $\mathrm{Alph}(L) \overset{\mathrm{def}}{=} \{ a \in \Sigma \mid \exists w \in L \; \exists x,y \in \Sigma^* \mbox{ such that } xay=w \}$.
For a class $\mathcal{L}$ of languages, we define $\mathrm{Alph}(\mathcal{L}) \overset{\mathrm{def}}{=} \bigcup_{L \in \mathcal{L}} \mathrm{Alph}(L)$.

A {\em right linear grammar} is a 4-tuple $G=(V,\Sigma,S,P)$, where
$V$ and $\Sigma$ are finite alphabets such that $V \cap \Sigma = \emptyset$,
$S \in V$ is a special symbol, called a {\em start symbol}, and
$P$ is a finite set of {\em production rules} of the form $A \to aB$, $A \to a$ or $A \to \epsilon \;(A,B\in V, a \in \Sigma)$.
An element of $V$ is called a {\em nonterminal symbol} or a {\em nonterminal}, and an element of $\Sigma$ is called
a {\em terminal symbol} or a {\em terminal}.

Let $G=(V,\Sigma,S,P)$ be a right linear grammar.
For $x,y \in {(V \cup \Sigma)}^*$ and $r:\alpha \to \beta \in P$,
we write $x \underset{r}{\Rightarrow} y$
if there exist $z_1, z_2 \in {(V \cup \Sigma)}^*$ such that
$x = z_1 \alpha z_2$ and $y = z_1 \beta z_2$.
For $x,y \in {(V \cup \Sigma)}^*$, $n \in \mathbb{Z}_{\geq 1}$, and $r_1 , r_2 , \ldots , r_n \in P$,
we write $x \underset{r_1 r_2 \cdots r_n}{\Rightarrow} y$
if there exist $x_0, x_1 , \ldots, x_n \in {(V \cup \Sigma)}^*$ such that
$x=x_0 \underset{r_1}{\Rightarrow} x_1 \underset{r_2}{\Rightarrow} \cdots \underset{r_n}{\Rightarrow} x_n = y$.
For $x \in {(V \cup \Sigma)}^*$ and $\epsilon$,
we write $x \underset{\epsilon}{\Rightarrow} x$.
A subset of $P^*$ is called a {\em behavior of $G$}.
For $x,y \in {(V \cup \Sigma)}^*$ and a behavior $R$ of $G$,
we write $x \underset{R}{\Rightarrow} y$
if there exists $\alpha \in R$ such that $x \underset{\alpha}{\Rightarrow} y$.
The set $L(G) = \{ w \in {\Sigma}^* \mid S \underset{P^*}{\Rightarrow} w  \}$ is called a {\em language generated by $G$}.
Let $i \in \mathbb{Z}_{\geq 1}$.
We define a set $\mathrm{prf}_i (R)$ as
a set of prefixes of length $i$ found in $R$.
For $\alpha \in P^*$, we simply write $\mathrm{prf}_i (\alpha)$ instead of $\mathrm{prf}_i (\{ \alpha \})$.

Let $X$ be a set and $n$ be a positive integer.
By $X^n$, we denote the $n$-term Cartesian product of $X$.
An element of $X^n$ can be regarded as a sequence of
elements of $X$ of length $n$.
For instance, for an interval $\mu$ and a positive integer $n$,
$\lambda \in \mu^n$ implies that $\lambda$ is a sequence of
integers in the set $\mu$ of length $n$.
For sequences of elements in $X$,
we use the same conventional notations as those used for strings.
By $X^+$, we denote the set of sequences of elements in $X$
of finite length at least $1$.
For a sequence $\lambda \in X^+$ and $i \in Z_{\geq 1}$,
$|\lambda|$ and $\lambda[i]$ are defined in the same way
as $|w|$ and $w[i]$ for a string $w$.
The only  notational difference between sequences of
elements in $X$ and strings is
that $\lambda \in X^+$ is represented as a bracketed
comma delimited sequence
$\lambda=<x_1, \ldots x_k> \; (x_1,\ldots,x_k \in X)$,
although a string over a finite alphabet $\Sigma$ is
usually written as $a_1 \cdots a_k \; (a_1,\ldots,a_k \in \Sigma)$.

\section{RLUB and its Control}
\label{sec:rlub-and-its-control}

We introduce the definition of a right linear grammar with unknown behavior (RLUB)
and its control system based on \cite{Kimoto,Kimoto2}.

Let $G=(V,\Sigma,S,P)$ be a right linear grammar.
For a subset $X$ of ${(V \cup \Sigma)}^*$ and a behavior $R$ of $G$,
we define a set $E(X,R)$ as
a set of elements in $X$ to which the first rules found in $R$ can be applied.
Formally, we define
\begin{align*}
  E(X,R) \overset{\mathrm{def}}{=} \{ x \in X \mid \exists x' \in {(V \cup \Sigma)}^* \; \exists \alpha \in \mathrm{prf}_1 (R) \mbox{ such that } x \underset{\alpha}{\Rightarrow} x' \}.
\end{align*}
For subsets $X,Y$ of ${(V \cup \Sigma)}^*$ and a behavior $R$ of $G$,
we say that {\em $Y$ is generated from $X$ by $R$ in the erasing mode}, written $X {\underset{R}{\Rightarrow}}^{e} Y$, if 
\begin{align}
  \label{eq:arrowR}
  Y = \{ y \in {(V \cup \Sigma)}^* \mid \exists x \in E(X,R) \exists \alpha \in R (x \underset{\alpha}{\Rightarrow} y) \}.
\end{align}
In order to generate $Y$ from $X$ by $R$ in the erasing mode, we apply $\alpha \in R$ to elements in $E(X,R)$.
Therefore, the elements in $X-E(X,R)$ are erased from $X$.
Note that if $\epsilon \in R$ holds, Y contains all the elements in $E(X,R)$.

Let $\mu_t$ and $\mu_b$ be non-empty intervals.
A pair $M=(\mu_t,\mu_b)$ is called a {\em generative condition} ({\em GC}, for short)
if $\mu_b \subseteq \mu_t$ holds.

A {\em right linear grammar with unknown behavior} ({\em RLUB}, for short) is a pair $H=(G,M)$, where
$G=(V,\Sigma,S,P)$ is a right linear grammar and $M=(\mu_t,\mu_b)$ is a GC.
A {\em control system for $H$} is a triple $C=(\Gamma,\phi,T)$, where
$\Gamma$ is a finite alphabet, called a {\em control alphabet}, such that $\Gamma \cap ( V \cup \Sigma ) = \emptyset$,
$\phi:\Gamma \to 2^P$ is an {\em injective} function, called a {\em control function}, and
$T$ is a subset of ${\Gamma}^{+}$, called a {\em set of control sequences}.
\begin{rem}
  In \cite{Kimoto,Kimoto2}, GCs $M=(\mu_t,\mu_b)$ with possibly infinite $\mu_t$ and $\mu_b$ are considered,
  where an infinite interval is defined as an infinite set of integers $\{ i \in \mathbb{Z}_{\geq 0} \mid i \geq k \}$ for some $k \in \mathbb{Z}_{\geq 0}$.
  However, in this paper, we focus on the generation of classes of {\em finite} languages.
  Therefore, this paper requires that $\mu_t$ and $\mu_b$ are finite.
\end{rem}
\begin{rem}
  \label{rem:phi-injection-reason}
  In \cite{Kimoto,Kimoto2}, control systems $C=(\Gamma,\phi,T)$ with $\phi$ being not injective are also considered.
  However, such control systems do not make much practical sense by the following reason.
  Consider a control system $C=(\Gamma,\phi,T)$ such that $\phi$ is not injective.
  Then, there are at least two control symbols $t_1$ and $t_2$ in $\Gamma$ such that $\phi(t_1)=\phi(t_2)$.
  From application point of view,
  we can say that this control system has a useless control symbol since the use of $t_1$ can be replaced by that of $t_2$.
  Furthermore, the existence of such extra control symbol requires extra experimental adjustment of reaction conditions.
  Therefore, control systems whose control functions are not injective do not make much practical sense.
  Thus, this paper requires that $\phi$ is injective.
\end{rem}
Let $\tau = t_1 t_2 \cdots t_n$ be a control sequence for some $n$ in ${\mathbb{Z}_{\geq 1}}$ and some $t_i$'s in $\Gamma$.
We say that {\em $H$ and $C$ generate a language $L$ using $\tau$ in the erasing mode}, written $L_e(H,C,\tau)=L$, if
for any behaviors $R_1 , R_2 , \ldots , R_n$ such that ${\phi(t_i)}^{\mu_b} \subseteq R_i \subseteq {\phi(t_i)}^{\mu_t} (\mbox{for } i=1,2,\ldots,n)$
\footnote{In the representation ${\phi(t)}^{\mu}$, we regard $\phi(t)$ as a set of symbols.
Therefore, ${\phi(t)}^{\mu} = \{ \alpha \in {\phi(t)}^* \mid |\alpha| \in \mu \}$ holds.},
there exist $X_1 , X_2 , \ldots, X_n$ such that the following two conditions are satisfied:
\begin{enumerate}
  \renewcommand{\labelenumi}{(r\arabic{enumi})}
\item $\{ S \}
  {\underset{R_1}{\Rightarrow}}^{e} X_1
  {\underset{R_2}{\Rightarrow}}^{e} \cdots
  {\underset{R_n}{\Rightarrow}}^{e} X_n$, and
\item $X_n \cap {\Sigma}^* = L$.
\end{enumerate}
The language $L_e(H,C,\tau)$ is not defined if such $L$ does not exist.
We say that {\em $H$ and $C$ synchronously generate a language $L$ using $\tau$ in the erasing mode}, written $L_e^{\mathrm{syn}}(H,C,\tau)=L$,
if the following condition (r3), as well as (r1) and (r2), is satisfied:
\begin{enumerate}
  \renewcommand{\labelenumi}{(r\arabic{enumi})}
  \setcounter{enumi}{2}
\item $(X_1 \cup X_2 \cup \cdots \cup X_{n-1}) \cap {\Sigma}^* = \emptyset$.
  \label{page:condition_r3}
\end{enumerate}
The language $L_e^{\mathrm{syn}}(H,C,\tau)$ is not defined if such $L$ does not exist.
We say that {\em $H$ and $C$ generate a class $\mathcal{L}$ of languages in the erasing mode}, written $\mathcal{L}_e(H,C)=\mathcal{L}$,
if the following two conditions are satisfied:
\begin{enumerate}
  \renewcommand{\labelenumi}{(R\arabic{enumi})}
\item for any $\tau \in T$,  $L_e(H,C,\tau)$ is defined, and
\item $\mathcal{L} = \{ L_e(H,C,\tau) \mid \tau \in T \}$.
\end{enumerate}
The class $\mathcal{L}_e(H,C)$ of languages is not defined if the above condition (R1) is not satisfied.
We say that {\em $H$ and $C$ synchronously generate a class $\mathcal{L}$ of languages in the erasing mode}, written $\mathcal{L}_e^{\mathrm{syn}}(H,C)=\mathcal{L}$,
if the following two conditions are satisfied:
\begin{enumerate}
  \renewcommand{\labelenumi}{(R\arabic{enumi}')}
\item for any $\tau \in T$,  $L_e^{\mathrm{syn}}(H,C,\tau)$ is defined, and
\item $\mathcal{L} = \{ L_e^{\mathrm{syn}}(H,C,\tau) \mid \tau \in T \}$.
\end{enumerate}
The class $\mathcal{L}_e^{\mathrm{syn}}(H,C)$ of languages is not defined if the above condition (R1') is not satisfied.

\begin{rei}
  \label{rei:generation}

  Let $H=(G,M)$ be an RLUB,
  where $G = (V,\Sigma,S,P)$,
  $V=\{ S,A,B,C \}$,
  $\Sigma= \{ a,b,c,d \}$,
  $P= \{ r_1 : S \to aA , \; r_2 : A \to bB , \; r_3 : A \to b , \; r_4 : B \to \epsilon, \; r_5 : B \to cC , \; r_6 : A \to c , \; r_7 : B \to d \}$,
  $M = (\mu_t, \mu_b)$,
  $\mu_t = [1,2]$,
  $\mu_b = [1,1]$.
  Let $C=(\Gamma,\phi,T)$ be a control system for $H$,
  where
  $\Gamma = \{ t_1, t_2, t_3, t_4 \}$,
  $\phi(t_1) = \{ r_1,r_2 \}$,
  $\phi(t_2) = \{ r_3,r_4,r_5 \}$,
  $\phi(t_3) = \{ r_3,r_6 \}$,
  $\phi(t_4) = \{ r_3,r_6,r_7 \}$,
  $T= \{ t_1 t_2, t_1 t_3 \}$.
  We have $L_e^{\mathrm{syn}}(H,C,t_1 t_2) = \{ ab \}$
  since
  we have that for any behaviors $R_1$ and $R_2$ such that\footnote{
  Note that we have
  ${\phi(t_1)}^{\mu_b} = \{ r_1, r_2 \}$,
  ${\phi(t_1)}^{\mu_t} = \{ r_1,r_2,r_1r_1,r_1r_2,r_2r_1,r_2r_2 \}$,
  ${\phi(t_2)}^{\mu_b} = \{ r_3,r_4,r_5 \}$, and
  ${\phi(t_2)}^{\mu_t} = \{ r_3,r_4,r_5,r_3r_3,r_3r_4,r_3r_5,r_4r_3,r_4r_4,r_4r_5,r_5r_3,r_5r_4,r_5r_5 \}$.
  }
  \begin{align}
    {\phi(t_1)}^{\mu_b} \subseteq & R_1 \subseteq {\phi(t_1)}^{\mu_t} \mbox{ and} \label{eq:rei:generation:r1} \\
    {\phi(t_2)}^{\mu_b} \subseteq & R_2 \subseteq {\phi(t_2)}^{\mu_t}, \label{eq:rei:generation:r2}
  \end{align}
  there exist $X_1$ and $X_2$ such that the following three conditions are satisfied:
  \begin{enumerate}
    \renewcommand{\labelenumi}{(r\arabic{enumi})}
  \item $\{ S \} {\underset{R_1}{\Rightarrow}}^{e} X_1 {\underset{R_2}{\Rightarrow}}^{e} X_2$,
  \item $X_2 \cap {\Sigma}^* = \{ ab \}$, and
  \item $X_1 \cap {\Sigma}^* = \emptyset$.
  \end{enumerate}
  For example, for $R_1 = \{ r_1,r_2 \}$ and $R_2 = \{ r_3,r_4,r_5 \}$,
  the sets $X_1 = \{ aA \}$ and $X_2 = \{ ab \}$ satisfy the above three conditions (r1),(r2), and (r3).
  For example, for $R_1 = \{ r_1,r_2,r_1r_2 \}$ and $R_2 = \{ r_3,r_4,r_5,r_3r_3,r_3r_4,r_5r_5 \}$,
  the sets $X_1 = \{ aA,abB \}$ and $X_2 = \{ ab,abcC \}$ satisfy the above three conditions (r1),(r2), and (r3).
  In this way, we can verify that for any behaviors $R_1$ and $R_2$ satisfying (\ref{eq:rei:generation:r1}) and (\ref{eq:rei:generation:r2}), there exist $X_1$ and $X_2$ satisfying the above three conditions (r1),(r2), and (r3).
  However, there are $2^4 \times 2^9$ combinations of $R_1$ and $R_2$ satisfying (\ref{eq:rei:generation:r1}) and (\ref{eq:rei:generation:r2}).
  Therefore, it is hard to verify that there exist $X_1$ and $X_2$ for any such pair of $R_1$ and $R_2$.
  Actually, by Theorem \ref{theo:generation-kimoto} mentioned later,
  it suffices to verify the only two cases:
  the case of $R_1 = {\phi(t_1)}^{\mu_b}$ and $R_2 = {\phi(t_2)}^{\mu_b}$ and the case of $R_1 = {\phi(t_1)}^{\mu_t}$ and $R_2 = {\phi(t_2)}^{\mu_t}$.
  
  We can also verify that $L_e^{\mathrm{syn}}(H,C,t_1 t_3) = \{ ab,ac \}$.
  Therefore, we have $\mathcal{L}_e^{\mathrm{syn}}(H,C)=\{ \{ ab \}, \{ ab,ac \} \}$.

  Here, we consider the control system $C'=(\Gamma,\phi,T')$ for $H$, where $T'=\{ t_1t_2,t_1t_3,t_1t_4 \}$.
  Then, we have that $L_e^{\mathrm{syn}}(H,C',t_1t_4)$ is not defined.
  We will show this by contradiction.
  Assume that $L_e^{\mathrm{syn}}(H,C',t_1t_4)$ is defined.
  Let $L$ be a language such that $L_e^{\mathrm{syn}}(H,C',t_1t_4)=L$ holds.
  We have that for any behaviors $R_1$ and $R_2$ such that
  ${\phi(t_1)}^{\mu_b} \subseteq R_1 \subseteq {\phi(t_1)}^{\mu_t}$ and
  ${\phi(t_4)}^{\mu_b} \subseteq R_2 \subseteq {\phi(t_4)}^{\mu_t}$,
  there exist $X_1$ and $X_2$ such that the following three conditions are satisfied:
  \begin{enumerate}
    \renewcommand{\labelenumi}{(r\arabic{enumi})}
  \item $\{ S \} {\underset{R_1}{\Rightarrow}}^{e} X_1 {\underset{R_2}{\Rightarrow}}^{e} X_2$,
  \item $X_2 \cap {\Sigma}^* = L$, and
  \item $X_1 \cap {\Sigma}^* = \emptyset$.
  \end{enumerate}
  For $R_1 = {\phi(t_1)}^{\mu_b}$ and $R_2 = {\phi(t_4)}^{\mu_b}$,
  we have $\{ S \} {\underset{R_1}{\Rightarrow}}^{e} \{ aA \} {\underset{R_2}{\Rightarrow}}^{e} \{ ab,ac \}$.
  Therefore, by (r2), $L=\{ ab,ac \}$ hods.
  However, since for $R_1 = {\phi(t_1)}^{\mu_t}$ and $R_2 = {\phi(t_4)}^{\mu_t}$,
  we have $\{ S \} {\underset{R_1}{\Rightarrow}}^{e} \{ aA,abB \} {\underset{R_2}{\Rightarrow}}^{e} \{ ab,ac,abd \}$,
  we have $L=\{ ab,ac,abd \}$, which contradicts $L=\{ ab,ac \}$.
  Therefore, we have that $L_e^{\mathrm{syn}}(H,C',t_1t_4)$ is not defined.
  Moreover, we have that $\mathcal{L}_e^{\mathrm{syn}}(H,C')$ is not defined.
  \halmos
\end{rei}


In \cite{Kimoto}, Kimoto et al. proved the following Theorem \ref{theo:generation-kimoto}.
\begin{theo}
  \label{theo:generation-kimoto}
  (Theorem 2 in \cite{Kimoto} for RLUBs)
  Let $M=(\mu_t,\mu_b)$ be a GC,
  $G=(V,\Sigma,S,P)$ be a right linear grammar,
  $L$ be a language over $\Sigma$, and
  $C=(\Gamma,\phi,T)$ be a control system.
  Consider RLUBs $H=(G,M)$, $H_1=(G,(\mu_t,\mu_t))$, $H_2=(G,(\mu_b,\mu_b))$, and
  any control sequence $\tau \in T$.
  The equality $L_e(H,C,\tau) = L$ holds if and only if $L_e(H_1,C,\tau) = L_e(H_2,C,\tau) = L$ holds.
  The equality $L_e^{\mathrm{syn}}(H,C,\tau) = L$ holds if and only if $L_e^{\mathrm{syn}}(H_1,C,\tau) = L_e^{\mathrm{syn}}(H_2,C,\tau) = L$ holds.
\end{theo}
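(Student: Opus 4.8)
The plan is to reduce both equivalences to a single \emph{monotonicity} property of the one-step erasing-mode derivation. The first observation is that $X{\underset{R}{\Rightarrow}}^{e}Y$ is \emph{functional}: for fixed $X$ and behavior $R$, equation (\ref{eq:arrowR}) determines $Y$ uniquely. Hence, once $\tau=t_1\cdots t_n$ and behaviors $R_1,\dots,R_n$ are fixed, the sets $X_1,\dots,X_n$ in condition (r1) are forced, so ``there exist $X_1,\dots,X_n$ satisfying (r1) and (r2) [and (r3)]'' simply means ``the unique chain determined by (r1) satisfies (r2) [and (r3)]''. In particular, for $H_1=(G,(\mu_t,\mu_t))$ the admissible behaviors are uniquely $R_j={\phi(t_j)}^{\mu_t}$, for $H_2=(G,(\mu_b,\mu_b))$ uniquely $R_j={\phi(t_j)}^{\mu_b}$, and $L_e(H_1,C,\tau)$ and $L_e(H_2,C,\tau)$ are always defined.

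The monotonicity lemma I would prove is: if $X\subseteq X'$ and $R\subseteq R'$ and $X{\underset{R}{\Rightarrow}}^{e}Y$, $X'{\underset{R'}{\Rightarrow}}^{e}Y'$, then $Y\subseteq Y'$. This is short: $R\subseteq R'$ gives $\mathrm{prf}_1(R)\subseteq\mathrm{prf}_1(R')$, which with $X\subseteq X'$ yields $E(X,R)\subseteq E(X',R')$ straight from the definition of $E$, and then $E(X,R)\subseteq E(X',R')$ together with $R\subseteq R'$ yields $Y\subseteq Y'$ straight from (\ref{eq:arrowR}). By induction on $n$ along the chain (r1) this gives the key consequence: writing $B_j={\phi(t_j)}^{\mu_b}$, $U_j={\phi(t_j)}^{\mu_t}$ (and noting $B_j\subseteq U_j$ because $\mu_b\subseteq\mu_t$), if $B_j\subseteq R_j\subseteq U_j$ for all $j$ and $\{S\}{\underset{B_1}{\Rightarrow}}^{e}Y_1\cdots{\underset{B_n}{\Rightarrow}}^{e}Y_n$, $\{S\}{\underset{R_1}{\Rightarrow}}^{e}X_1\cdots{\underset{R_n}{\Rightarrow}}^{e}X_n$, $\{S\}{\underset{U_1}{\Rightarrow}}^{e}Z_1\cdots{\underset{U_n}{\Rightarrow}}^{e}Z_n$, then $Y_j\subseteq X_j\subseteq Z_j$ for every $j$.

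With this in hand the ``only if'' directions are immediate: the extreme behavior sequences above are admissible for $H$, so $L_e(H,C,\tau)=L$ (resp. $L_e^{\mathrm{syn}}(H,C,\tau)=L$) forces the corresponding chains to satisfy (r2) (resp. (r2) and (r3)) with value $L$, which is precisely $L_e(H_1,C,\tau)=L_e(H_2,C,\tau)=L$ (resp. its synchronous analogue). For the ``if'' direction, assume $L_e(H_1,C,\tau)=L_e(H_2,C,\tau)=L$, take any admissible $R_1,\dots,R_n$, and let $X_1,\dots,X_n$ be the induced chain. The iterated lemma gives $Y_n\subseteq X_n\subseteq Z_n$, so $L=Y_n\cap\Sigma^*\subseteq X_n\cap\Sigma^*\subseteq Z_n\cap\Sigma^*=L$, i.e. (r2) holds; as $R_1,\dots,R_n$ were arbitrary, $L_e(H,C,\tau)=L$. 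In the synchronous case, $L_e^{\mathrm{syn}}(H_1,C,\tau)=L$ additionally gives $(Z_1\cup\cdots\cup Z_{n-1})\cap\Sigma^*=\emptyset$, and $X_j\subseteq Z_j$ for $j\le n-1$ then yields (r3); with (r2) this gives $L_e^{\mathrm{syn}}(H,C,\tau)=L$.

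I do not expect a serious obstacle. The only points requiring care are the bookkeeping around ``definedness'' --- handled by functionality of ${\underset{R}{\Rightarrow}}^{e}$, so that for $H_1,H_2$ definedness is free in the non-synchronous setting and coincides with the condition (r3) we already track in the synchronous setting --- and keeping the three parallel chains $Y_j$, $X_j$, $Z_j$ aligned through the induction. The monotonicity lemma together with its iteration is the heart of the argument.
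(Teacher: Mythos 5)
Your argument is correct, and the key steps check out against the paper's definitions: the one-step relation ${\underset{R}{\Rightarrow}}^{e}$ is indeed functional (equation (\ref{eq:arrowR}) determines $Y$ from $X$ and $R$), $R\subseteq R'$ does give $\mathrm{prf}_1(R)\subseteq\mathrm{prf}_1(R')$ and hence $E(X,R)\subseteq E(X',R')$, and the sandwich $Y_j\subseteq X_j\subseteq Z_j$ along the chain delivers both (r2) and, via $X_j\subseteq Z_j$ for $j<n$, the synchronous condition (r3), with the definedness bookkeeping for $H_1,H_2$ handled exactly as you say. Note that this paper does not prove the statement at all --- it imports it as Theorem 2 of Kimoto et al.\ \cite{Kimoto} --- so there is no in-paper proof to compare against; your monotonicity/sandwich argument is the natural self-contained proof of that cited result.
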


We will introduce new notions and notations related to RLUBs
which were not defined in \cite{Kimoto,Kimoto2}.
Let $H=(G,M)$ be an RLUB and $C=(\Gamma,\phi,T)$ be a control system for $H$,
where $G=(V,\Sigma,S,P)$.
A binary relation $\preceq_C$ over $\Gamma$ is defined as follows:
\begin{align*}
  \preceq_C \; \overset{\mathrm{def}}{=} \; \{ (t_1,t_2) \in {\Gamma}^2 \mid \phi(t_1) \subseteq \phi(t_2) \}.
\end{align*}
For example, the control system $C$ defined in Example \ref{rei:generation} satisfies $\preceq_C = \{ (t,t) \mid t \in \Gamma \} \cup \{ (t_3,t_4) \}$.
A binary relation is a {\em partial order}
if it is reflexive, antisymmetric,
and transitive(\cite{birkhoff-lattice})
\footnote{
A binary relation $R$ over $U$ is reflexive if for any $a\in U$,
$aRa$ holds.
It is antisymmetric if for any $a,b\in U$,
$aRb$ and $bRa$ imply $a=b$.
It is transitive if for any $a,b,c\in U$,
$aRb$ and $bRc$ imply $aRc$.
}.
It is straightforward to show the following Proposition \ref{prop:preceq-quasiorder-partialorder} since $\phi$ is an injection.
\begin{prop}
  \label{prop:preceq-quasiorder-partialorder}
  Let $C=(\Gamma,\phi,T)$ be a control system.
  The binary relation $\preceq_C$ is a partial order.
\end{prop}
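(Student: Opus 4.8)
The plan is to verify the three defining properties of a partial order directly from the definition of $\preceq_C$, reducing each to the corresponding property of set inclusion $\subseteq$ on the family $\{\phi(t)\mid t\in\Gamma\}\subseteq 2^P$, and invoking injectivity of $\phi$ only where it is genuinely needed.

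First I would establish reflexivity: for every $t\in\Gamma$ we have $\phi(t)\subseteq\phi(t)$, so $(t,t)\in\,\preceq_C$. Next, transitivity: given $(t_1,t_2)\in\,\preceq_C$ and $(t_2,t_3)\in\,\preceq_C$, by definition $\phi(t_1)\subseteq\phi(t_2)$ and $\phi(t_2)\subseteq\phi(t_3)$, hence $\phi(t_1)\subseteq\phi(t_3)$ by transitivity of $\subseteq$, so $(t_1,t_3)\in\,\preceq_C$. These two steps are immediate and use nothing about $\phi$ beyond its being a function.

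The only step that uses a hypothesis of the proposition in an essential way is antisymmetry. Suppose $(t_1,t_2)\in\,\preceq_C$ and $(t_2,t_1)\in\,\preceq_C$. Then $\phi(t_1)\subseteq\phi(t_2)$ and $\phi(t_2)\subseteq\phi(t_1)$, so $\phi(t_1)=\phi(t_2)$ by antisymmetry of $\subseteq$. Since $\phi:\Gamma\to 2^P$ is injective (this is built into the definition of a control system, cf. Remark \ref{rem:phi-injection-reason}), we conclude $t_1=t_2$. This is the point I would flag as the ``heart'' of the argument, though it is hardly an obstacle: it is worth remarking explicitly that without injectivity of $\phi$ the relation $\preceq_C$ would in general be only a preorder, and that this is precisely why the restriction on $\phi$ matters here.

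Putting the three parts together gives that $\preceq_C$ is reflexive, transitive, and antisymmetric, hence a partial order, completing the proof. No case analysis or computation is required; the whole argument is a transfer of the poset structure of $(2^P,\subseteq)$ along the injection $\phi$.
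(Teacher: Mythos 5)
Your proof is correct and follows exactly the argument the paper has in mind: the paper states the proposition as straightforward precisely because $\phi$ is an injection, and your verification of reflexivity, transitivity, and antisymmetry (with injectivity used only for antisymmetry) is the standard fleshing-out of that remark.
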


\section{Important Definitions and Lemma}
\label{sec:Important-Definitions-and-Lemma}

For an integer sequence $\lambda \in {\mathbb{Z}_{\geq 0}}^+$,
we define $\sum \lambda \overset{\mathrm{def}}{=} \sum_{k=1}^{|\lambda|} \lambda[k]$.
For example, for $\lambda=<5,3,4>$, we have $\sum \lambda = 5+3+4=12$.

For an integer sequence $\lambda \in {\mathbb{Z}_{\geq 0}}^+$ and an integer $i$ such that $1 \leq i \leq \sum \lambda$,
we define $(\lambda,i)$ as the smallest integer $j$ such that $\sum_{k=1}^{j} \lambda[k] \geq i$.
For example, for $\lambda=<5,3,4>$, we have $(\lambda,1)=\cdots=(\lambda,5)=1$, $(\lambda,6)=\cdots=(\lambda,8)=2$, and $(\lambda,9)=\cdots=(\lambda,12)=3$.
We have the following Fact \ref{fact:lambda}.
\begin{fact}
  \label{fact:lambda}
  Let $\lambda \in {\mathbb{Z}_{\geq 0}}^+$ and $j \in [1,|\lambda|]$.
  Then, $(\lambda,i ) = j$ holds for any integer $i$ such that $\sum_{k=1}^{j-1} \lambda[k] +1 \leq i \leq \sum_{k=1}^{j} \lambda[k]$,
  where the sum of the empty integer sequences is defined as zero\footnote{
  The sum $\displaystyle \sum_{k=n}^{m} \lambda[k]$ is $0$ if $n>m$ holds.
  }.
\end{fact}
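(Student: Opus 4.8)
The plan is to unwind the definition of $(\lambda,i)$ directly. Write $s_m \overset{\mathrm{def}}{=} \sum_{k=1}^{m} \lambda[k]$ for $0 \le m \le |\lambda|$, with $s_0 = 0$ by the stated convention that the empty sum is zero; since every $\lambda[k]$ is nonnegative, the partial sums are monotone, $s_0 \le s_1 \le \cdots \le s_{|\lambda|} = \sum\lambda$. The claim concerns integers $i$ with $s_{j-1}+1 \le i \le s_j$. First I would dispose of the degenerate case: if $\lambda[j]=0$ then $s_{j-1}=s_j$, the range $[s_{j-1}+1,\, s_j]$ contains no integer, and there is nothing to prove; so from now on assume $\lambda[j]\ge 1$ and fix an integer $i$ in that (now nonempty) range.

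Next I would check that $(\lambda,i)$ is in fact defined, i.e. that $1 \le i \le \sum\lambda$. The lower bound is immediate from $i \ge s_{j-1}+1 \ge 1$, and the upper bound from $i \le s_j \le s_{|\lambda|} = \sum\lambda$ by monotonicity of the partial sums. So $(\lambda,i)$, the smallest $m$ with $s_m \ge i$, is well defined.

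The heart of the argument is then two one-line inequalities. For $(\lambda,i)\le j$: since $s_j \ge i$, the index $j$ itself satisfies the defining condition $s_m \ge i$, so the smallest such $m$ is at most $j$. For $(\lambda,i)\ge j$: for any $m$ with $m \le j-1$ we have $s_m \le s_{j-1} < s_{j-1}+1 \le i$, so no index strictly below $j$ satisfies $s_m \ge i$, whence the smallest such index is at least $j$. Combining the two inequalities yields $(\lambda,i)=j$, as claimed.

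I do not expect a genuine obstacle here; the proof is essentially a direct verification against the definition. The only points needing care are the bookkeeping around the empty sum $s_0=0$ and the vacuously-handled case $\lambda[j]=0$ in which the interval of admissible $i$ is empty — which is precisely why the statement carries the footnote pinning down the empty-sum convention.
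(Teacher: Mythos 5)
Your proof is correct: the two inequalities $(\lambda,i)\le j$ (since $s_j\ge i$) and $(\lambda,i)\ge j$ (since $s_m\le s_{j-1}<i$ for $m\le j-1$, using nonnegativity of the entries), together with the well-definedness check $1\le i\le\sum\lambda$ and the vacuous case $\lambda[j]=0$, constitute exactly the routine verification the paper has in mind. The paper states this as a Fact without proof, and your argument is the intended straightforward unwinding of the definition of $(\lambda,i)$.
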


Let $\Gamma$ be a finite control alphabet.
We define $\Phi (\Gamma) \overset{\mathrm{def}}{=} \{ (\tau,\lambda) \mid \tau \in \Gamma^+, \lambda \in {\mathbb{Z}_{\geq 0}}^+, |\tau|=|\lambda| \}$.
We will define a binary relation over $\Phi (\Gamma)$, which is one of the most important definitions in this paper.
\begin{defi}
  \label{def:Rightarrow-M-preceq}
  Let $\Gamma$ be a finite control alphabet.
  Let $\tau_1, \tau_2 \in {\Gamma}^+$, and $\lambda_1,\lambda_2 \in {\mathbb{Z}_{\geq 0}}^+$
  such that $(\tau_1,\lambda_1),(\tau_2,\lambda_2) \in \Phi(\Gamma)$ holds.
  Let $M=(\mu_{t},\mu_{b})$ be a GC, and $\preceq$ be a partial order over $\Gamma$.
  We write $(\tau_1,\lambda_1) \underset{M,\preceq}{\Rightarrow} (\tau_2,\lambda_2)$ if the following three conditions hold:
  \begin{align*}
    &\mbox{(A1)} \quad \sum \lambda_1 = \sum \lambda_2 = m \mbox{ holds for some } m \in \mathbb{Z}_{\geq 0},\\
    &\mbox{(A2)} \quad \lambda_1 \in {\mu_{b}}^{|\tau_1|} \mbox{ implies } \lambda_2 \in {\mu_{t}}^{|\tau_2|}, \mbox{ and} \\
    &\mbox{(A3)} \quad \mbox{for any } i \in [1,m] , \; \tau_1[(\lambda_1,i)] \preceq \tau_2[(\lambda_2,i)] \mbox{ holds}.
  \end{align*}
  \halmos
\end{defi}
It is straightforward to show that the binary relation $\underset{M,\preceq}{\Rightarrow}$ is reflexive.
Note that the binary relation $\underset{M,\preceq}{\Rightarrow}$ is not transitive.
For example, for a GC $M=([3,11],[4,8])$ and a control symbol $t$,
we have $(t,<6>) \underset{M,\preceq}{\Rightarrow} (tt,<3,3>)$ and
$(tt,<3,3>) \underset{M,\preceq}{\Rightarrow} (tt,<1,5>)$,
but we have that $(t,<6>) \underset{M,\preceq}{\Rightarrow} (tt,<1,5>)$ does not hold since (A2) of Definition \ref{def:Rightarrow-M-preceq} does not hold.

\begin{rei}
  \label{rei:Rightarrow}
  Let $M=([3,11],[4,8])$ be a GC,
  $\Gamma=\{ t_1,t_2,t_3 \}$ be a finite control alphabet,
  and $\preceq = \{ (t,t) \mid t \in \Gamma \} \cup \{ (t_2,t_3) \}$ be a partial order over $\Gamma$.
  \begin{enumerate}
    \renewcommand{\labelenumi}{(\alph{enumi})}
  \item
    We have $(t_1 t_2 t_3,<4,5,6>) \underset{M,\preceq}{\Rightarrow} (t_1 t_3,<4,11>)$ because the following three conditions hold:
    \begin{align*}
      & \mbox{(A1)}\quad \sum <4,5,6> = \sum <4,11> = 15 \mbox{ holds}, \displaybreak[1] \\
      & \mbox{(A2)}\quad <4,5,6> \in {[4,8]}^{|t_1 t_2 t_3|} \mbox{ and } <4,11> \in {[3,11]}^{|t_1 t_3|} \mbox{ hold, and} \displaybreak[1] \\
      & \mbox{(A3)}\quad \mbox{for any } i \in [1,15] , \; t_1 t_2 t_3[(<4,5,6>,i)] \preceq t_1 t_3[(<4,11>,i)] \mbox{ holds},
    \end{align*}
    where the third condition can be verified by the following expressions:
    \begin{alignat*}{5}
      & t_1 t_2 t_3[(<4,5,6>,1)] = t_1 t_2 t_3[1]=t_1 \quad& &\preceq \quad&  & t_1 = t_1 t_3 [1] = t_1 t_3[(<4,11>,1)], \displaybreak[1] \\
      & t_1 t_2 t_3[(<4,5,6>,2)] = t_1 t_2 t_3[1]=t_1 & &\preceq& &             t_1 = t_1 t_3 [1] = t_1 t_3[(<4,11>,2)], \displaybreak[1] \\
      & t_1 t_2 t_3[(<4,5,6>,3)] = t_1 t_2 t_3[1]=t_1 & &\preceq& &             t_1 = t_1 t_3 [1] = t_1 t_3[(<4,11>,3)], \displaybreak[1] \\
      & t_1 t_2 t_3[(<4,5,6>,4)] = t_1 t_2 t_3[1]=t_1 & &\preceq& &             t_1 = t_1 t_3 [1] = t_1 t_3[(<4,11>,4)], \displaybreak[1] \\
      & t_1 t_2 t_3[(<4,5,6>,5)] = t_1 t_2 t_3[2]=t_2 & &\preceq& &             t_3 = t_1 t_3 [2] = t_1 t_3[(<4,11>,5)], \displaybreak[1] \\
      & & & \vdots \displaybreak[1] \\
      & t_1 t_2 t_3[(<4,5,6>,15)] = t_1 t_2 t_3[3]=t_3 & &\preceq& &             t_3 = t_1 t_3 [2] = t_1 t_3[(<4,11>,15)].
    \end{alignat*}
  \item
    We have that there exists no integer sequence $\lambda' \in {\mathbb{Z}_{\geq 1}}^{+}$ such that $(t_1 t_1 t_3,<4,5,6>) \underset{M,\preceq}{\Rightarrow} (t_1 t_2, \lambda')$.
    Assume that there exists an integer sequence $\lambda' \in {\mathbb{Z}_{\geq 1}}^{+}$ such that $(t_1 t_1 t_3,<4,5,6>) \underset{M,\preceq}{\Rightarrow} (t_1 t_2, \lambda')$.
    We can write $\lambda'=<x_1,x_2>$ since $|\lambda'|=|t_1 t_2|=2$.
    By (A1) of Definition \ref{def:Rightarrow-M-preceq}, $x_1 + x_2 = \sum <4,5,6> = 15$ holds.
    By (A2) of Definition \ref{def:Rightarrow-M-preceq}, we have $x_2 \in [3,11]$, which implies $x_2 \geq 3 \geq 1$.
    Therefore, $(<x_1,x_2>,15)=2$ holds, which implies $t_1 t_2 [(\lambda',15)]=t_1 t_2[2]=t_2$.
    However, we have $t_1 t_1 t_3[(<4,5,6>,15)]=t_1 t_1 t_3[3]=t_3$, which contradicts (A3) of Definition \ref{def:Rightarrow-M-preceq}.
    \halmos
  \end{enumerate}
\end{rei}
By generalizing (b) of Example \ref{rei:Rightarrow}, we obtain the following Proposition \ref{prop:haji-ga-not-preceq}.
\begin{prop}
  \label{prop:haji-ga-not-preceq}
  Let $\Gamma$ be a finite control alphabet,
  $t_1, t_2 \in \Gamma$ be control symbols, and
  $\tau_1, \tau_2 \in \Gamma^*$ be control sequences.
  Let $\lambda_1, \lambda_2 \in {\mathbb{Z}_{\geq 1}}^+$ be integer sequences.
  Let $M$ be a GC and
  $\preceq$ be a partial order over $\Gamma$.
  If $t_1 \preceq t_2$ does not hold,
  then
  neither $(t_1 \tau_1,\lambda_1) \underset{M,\preceq}{\Rightarrow} (t_2 \tau_2, \lambda_2)$ nor
  $(\tau_1 t_1,\lambda_1) \underset{M,\preceq}{\Rightarrow} (\tau_2 t_2, \lambda_2)$ holds.
  
\end{prop}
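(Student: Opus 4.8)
The plan is to assume, for contradiction, that one of the two relations holds and to extract from condition (A3) of Definition \ref{def:Rightarrow-M-preceq} the inequality $t_1 \preceq t_2$, contradicting the hypothesis. The point is that in each of the two cases there is a distinguished index $i \in [1,m]$ (with $m$ as in (A1)) at which the two symbols compared by (A3) are precisely $t_1$ on the left-hand side and $t_2$ on the right-hand side, so that (A3) directly yields $t_1 \preceq t_2$.

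For the first relation $(t_1\tau_1,\lambda_1) \underset{M,\preceq}{\Rightarrow} (t_2\tau_2,\lambda_2)$ I would use $i = 1$. Since $\lambda_1,\lambda_2 \in {\mathbb{Z}_{\geq 1}}^+$ are nonempty sequences of positive integers, (A1) gives $m = \sum\lambda_1 = \sum\lambda_2 \geq 1$, so $1 \in [1,m]$; and since $\lambda_1[1] \geq 1$ and $\lambda_2[1] \geq 1$, Fact \ref{fact:lambda} applied with $j = 1$ yields $(\lambda_1,1) = 1$ and $(\lambda_2,1) = 1$. Condition (A3) at $i = 1$ then reads $(t_1\tau_1)[1] \preceq (t_2\tau_2)[1]$, i.e. $t_1 \preceq t_2$, the desired contradiction.

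For the second relation $(\tau_1 t_1,\lambda_1) \underset{M,\preceq}{\Rightarrow} (\tau_2 t_2,\lambda_2)$ I would instead use $i = m$, the last index. Writing $n_1 = |\lambda_1| = |\tau_1 t_1|$ and $n_2 = |\lambda_2| = |\tau_2 t_2|$, the fact that $\lambda_1[n_1] \geq 1$ (resp. $\lambda_2[n_2] \geq 1$) makes the index interval of Fact \ref{fact:lambda} for $j = n_1$ (resp. $j = n_2$) contain $m$, so $(\lambda_1,m) = n_1$ and $(\lambda_2,m) = n_2$. Hence the symbols compared by (A3) at $i = m$ are $(\tau_1 t_1)[n_1] = t_1$ and $(\tau_2 t_2)[n_2] = t_2$, and (A3) again forces $t_1 \preceq t_2$.

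The whole argument is just bookkeeping with the index function $(\lambda,\cdot)$ through Fact \ref{fact:lambda}; the only things requiring a moment's attention are checking $m \geq 1$ so that (A3) is non-vacuous, and invoking Fact \ref{fact:lambda} at the correct extreme position ($j = 1$ in the first case, $j = |\lambda_i|$ in the second). I do not anticipate any genuine obstacle, so the proof should be very short.
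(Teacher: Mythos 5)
Your argument is correct and is essentially identical to the paper's own proof: both cases are handled by evaluating condition (A3) of Definition \ref{def:Rightarrow-M-preceq} at the distinguished index ($i=1$ for the prefix case, $i=m$ for the suffix case), using Fact \ref{fact:lambda} together with $\lambda_1,\lambda_2 \in {\mathbb{Z}_{\geq 1}}^+$ to show that the compared symbols are exactly $t_1$ and $t_2$. No gaps; nothing further is needed.
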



\begin{proof}

  Since $\lambda_1[1] \geq 1$ and $\lambda_2[1] \geq 1$ hold,
  we have $(\lambda_1,1)=(\lambda_2,1)=1$ by Fact \ref{fact:lambda},
  which implies $t_1 \tau_1[(\lambda_1,1)]=t_1 \tau_1 [1]=t_1$ and $t_2 \tau_2[(\lambda_2,1)]=t_2 \tau_2 [1]=t_2$.
  Thus, the condition (A3) of Definition \ref{def:Rightarrow-M-preceq} does not hold,
  which implies that $(t_1 \tau_1,\lambda_1) \underset{M,\preceq}{\Rightarrow} (t_2 \tau_2, \lambda_2)$ does not hold.

  Assume that $(\tau_1 t_1,\lambda_1) \underset{M,\preceq}{\Rightarrow} (\tau_2 t_2, \lambda_2)$ holds.
  Then, we have that $\sum \lambda_1 = \sum \lambda_2 = m$ holds for some $m \in \mathbb{Z}_{\geq 1}$.
  Since $\lambda_1[|\lambda_1|] \geq 1$ and $\lambda_2[|\lambda_2|] \geq 1$ hold,
  we have $(\lambda_1,m)=|\lambda_1|$ and $(\lambda_2,m)=|\lambda_2|$ by Fact \ref{fact:lambda},
  which implies $\tau_1 t_1[(\lambda_1,m)]=\tau_1 t_1[|\lambda_1|]=t_1$ and $\tau_2 t_2[(\lambda_2,m)]=\tau_2 t_2[|\lambda_2|]=t_2$.
  Thus, the condition (A3) of Definition \ref{def:Rightarrow-M-preceq} does not hold, which is a contradiction.
\end{proof}

We have the following Proposition \ref{prop:preceq-subset-rightarrow-subset}.
\begin{prop}
  \label{prop:preceq-subset-rightarrow-subset}

  Let $M$ be a GC.
  Let $\Gamma$ be a finite control alphabet.
  Let $\preceq_1$ and $\preceq_2$ be partial orders over $\Gamma$.
  Assume that $\preceq_1 \; \subseteq \; \preceq_2$ holds.
  Then, we have $\underset{M,\preceq_1}{\Rightarrow} \; \subseteq \; \underset{M,\preceq_2}{\Rightarrow}$.
  
\end{prop}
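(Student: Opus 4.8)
The plan is to unfold the definition of $\underset{M,\preceq_1}{\Rightarrow}$ and check that each of the three defining conditions (A1), (A2), (A3) of Definition \ref{def:Rightarrow-M-preceq} is preserved when we pass to $\preceq_2$. Suppose $(\tau_1,\lambda_1) \underset{M,\preceq_1}{\Rightarrow} (\tau_2,\lambda_2)$ for arbitrary $(\tau_1,\lambda_1),(\tau_2,\lambda_2) \in \Phi(\Gamma)$; I want to conclude $(\tau_1,\lambda_1) \underset{M,\preceq_2}{\Rightarrow} (\tau_2,\lambda_2)$. Conditions (A1) ($\sum\lambda_1 = \sum\lambda_2 = m$) and (A2) ($\lambda_1 \in \mu_b^{|\tau_1|}$ implies $\lambda_2 \in \mu_t^{|\tau_2|}$) make no reference to the partial order at all, so they carry over verbatim.

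The only condition that mentions the order is (A3), which asserts $\tau_1[(\lambda_1,i)] \preceq_1 \tau_2[(\lambda_2,i)]$ for every $i \in [1,m]$. Here I simply invoke the hypothesis $\preceq_1 \subseteq \preceq_2$: since the pair $(\tau_1[(\lambda_1,i)],\tau_2[(\lambda_2,i)])$ lies in $\preceq_1$, it also lies in $\preceq_2$, i.e. $\tau_1[(\lambda_1,i)] \preceq_2 \tau_2[(\lambda_2,i)]$ for all such $i$. Hence (A3) holds with $\preceq_2$ in place of $\preceq_1$, and all three conditions of Definition \ref{def:Rightarrow-M-preceq} are satisfied for the pair $(M,\preceq_2)$, giving $(\tau_1,\lambda_1) \underset{M,\preceq_2}{\Rightarrow} (\tau_2,\lambda_2)$.

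This establishes the claimed set inclusion $\underset{M,\preceq_1}{\Rightarrow} \subseteq \underset{M,\preceq_2}{\Rightarrow}$. There is no real obstacle here: the argument is a direct verification, and the essential point is merely that among the defining conditions of $\underset{M,\preceq}{\Rightarrow}$, only (A3) depends on $\preceq$ and it does so monotonically. (One minor bookkeeping remark: the definition of $\Phi(\Gamma)$ and of $(\lambda,i)$ requires $\sum\lambda_1, \sum\lambda_2 \geq 1$ implicitly through the indexing $i \in [1,m]$; when $m = 0$ the quantifier in (A3) is vacuous and the inclusion is trivial, so no separate case is needed beyond noting this.)
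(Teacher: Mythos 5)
Your proof is correct and follows essentially the same route as the paper's: unfold Definition \ref{def:Rightarrow-M-preceq}, observe that (A1) and (A2) do not involve the order, and upgrade (A3) from $\preceq_1$ to $\preceq_2$ using the inclusion $\preceq_1 \subseteq \preceq_2$. No issues.
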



\begin{proof}

  Let $(\tau_1,\lambda_1),(\tau_2,\lambda_2) \in \Phi(\Gamma)$.
  Assume that $(\tau_1,\lambda_1) \underset{M,\preceq_1}{\Rightarrow} (\tau_2,\lambda_2)$ holds.
  Then, by Definition \ref{def:Rightarrow-M-preceq}, we have that
  \begin{align*}
    &\mbox{(A1)} \quad \sum \lambda_1 = \sum \lambda_2 = m \mbox{ holds for some } m \in \mathbb{Z}_{\geq 0},\\
    &\mbox{(A2)} \quad \lambda_1 \in {\mu_{b}}^{|\tau_1|} \mbox{ implies } \lambda_2 \in {\mu_{t}}^{|\tau_2|}, \mbox{ and} \\
    &\mbox{(A3)} \quad \mbox{for any } i \in [1,m] , \; \tau_1[(\lambda_1,i)] \preceq_1 \tau_2[(\lambda_2,i)] \mbox{ holds}.
  \end{align*}
  Since $\preceq_1 \; \subseteq \; \preceq_2$ holds, by (A3), we have that
  \begin{align*}
    &\mbox{(A3)'} \quad \mbox{for any } i \in [1,m] , \; \tau_1[(\lambda_1,i)] \preceq_2 \tau_2[(\lambda_2,i)] \mbox{ holds}.
  \end{align*}
  Therefore, by (A1), (A2), and (A3)', we have $(\tau_1,\lambda_1) \underset{M,\preceq_2}{\Rightarrow} (\tau_2,\lambda_2)$.
  Thus, we have $\underset{M,\preceq_1}{\Rightarrow} \; \subseteq \; \underset{M,\preceq_2}{\Rightarrow}$.
\end{proof}

In section \ref{sec:rlub-and-its-control},
we defined a binary relation ${\underset{R}{\Rightarrow}}^{e}$.
Here, by giving a restriction to ${\underset{R}{\Rightarrow}}^{e}$,
we will define a binary relation $\overset{l}{\underset{R}{\Rightarrow^{e}}}$ for $l \in \mathbb{Z}_{\geq 0}$.
\begin{defi}
  \label{defi:rightarrow-R-l}
  Let $G=(V,\Sigma,S,P)$ be a right linear grammar.
  For subsets $X,Y$ of ${(V \cup \Sigma)}^*$,
  a behavior $R$ of $G$, and
  $l \in \mathbb{Z}_{\geq 0}$,
  we say that {\em $Y$ is generated from $X$ by $R$ in $l$ steps in the erasing mode}, written $X \overset{l}{\underset{R}{\Rightarrow^{e}}} Y$, if
  \begin{align}
    \label{eq:arrowRl}
    Y = \{ y \in {(V \cup \Sigma)}^* \mid \exists x \in E(X,R) \exists \alpha \in R (|\alpha|=l \land x \underset{\alpha}{\Rightarrow} y) \}.
  \end{align}
  \halmos
\end{defi}
Note that the following Remark \ref{rem:subset-rightarrow} and Remark \ref{rem:l-in-mu} hold.
\begin{rem}
  \label{rem:subset-rightarrow}
  If we have $X \overset{l}{\underset{R}{\Rightarrow^{e}}} Y$ and $X \underset{R}{\Rightarrow^e} Y'$,
  the set $Y$ is a subset of $Y'$
  since the difference between (\ref{eq:arrowR}) and (\ref{eq:arrowRl}) is just the expression $|\alpha|=l$.
\end{rem}
\begin{rem}
  \label{rem:l-in-mu}
  Let $\mu$ be an interval and $t \in \Gamma$.
  Assume that $X \overset{l}{\underset{{\phi(t)}^{\mu}}{\Rightarrow^{e}}} Y$ holds.
  Then, we have that $l \not\in \mu$ implies $Y = \emptyset$.
\end{rem}
The relation $X \overset{l}{\underset{R}{\Rightarrow^{e}}} Y$ can be extendedly defined for the case that $R$ and $l$ are sequences in the following way.
\begin{defi}
  \label{defi:rightarrow-gamma-lambda}
  Let $\gamma=<R_1,\ldots,R_n>$ be a sequence of behaviors for some $n \in \mathbb{Z}_{\geq 1}$ and some $R_i$'s, where $R_i$'s are bahaviors.
  Let $\lambda=<l_1,\ldots,l_n>$ be a integer sequence for some $l_i$'s in $\mathbb{Z}_{\geq 0}$.
  We write $X \overset{\lambda}{\underset{\gamma}{\Rightarrow^e}} Y$
  if there exist $X_0,\ldots,X_n \subseteq {(V \cup \Sigma)}^*$ such that
  \begin{align*}
    X = X_0 \overset{l_1}{\underset{R_1}{\Rightarrow^{e}}} X_1 \overset{l_2}{\underset{R_2}{\Rightarrow^{e}}} \cdots \overset{l_n}{\underset{R_n}{\Rightarrow^{e}}} X_n = Y.
  \end{align*}
  \halmos
\end{defi}

\begin{rei}
  Let $G=(V,\Sigma,S,P)$ be a right linear grammar,
  where $V=\{ S \}$, $\Sigma=\{ a,b,c \}$, and $P=\{r_1:S\to aS, \; r_2:S\to bS, \; r_3:S\to c \}$.
  Let $R_1 = \{ {r_1}^2, {r_2}^3 \}$ and $R_2=\{ {r_1}^2,r_1r_3, {r_2}^3 \}$ be behaviors of $G$.
  Then, we have $\{ S \} \overset{3}{\underset{R_1}{\Rightarrow^{e}}} \{ b^3 S \} \overset{2}{\underset{R_2}{\Rightarrow^{e}}} \{ b^3 a^2 S, b^3 a c \}$.
  Therefore, we have $ \{ S \} \overset{<3,2>}{\underset{<R_1,R_2>}{\Rightarrow^e}} \{ b^3 a^2 S, b^3 a c \}$.
  \halmos
\end{rei}

Let $C=(\Gamma,\phi,T)$ be a control system.
For a control sequence $\tau=t_1 \cdots t_m \; (t_i \in \Gamma \mbox{ for } i \in [1,m])$ and
an interval $\mu$,
we often consider a sequence $< {\phi(t_1)}^{\mu},\ldots,{\phi(t_m)}^{\mu} >$ of behaviors.
Therefore, we introduce the following Definition \ref{defi:phi-tau-mu}.
\begin{defi}
  \label{defi:phi-tau-mu}
  Let $C=(\Gamma,\phi,T)$ be a control system.
  For a control sequence $\tau =t_1 \cdots t_m \; (t_i \in \Gamma \mbox{ for } i \in [1,m])$ and
  an interval $\mu$,
  we define ${\phi(\tau)}^{\mu} \overset{\mathrm{def}}{=} < {\phi(t_1)}^{\mu},\ldots,{\phi(t_m)}^{\mu} >$.
  \halmos
\end{defi}

We have the following Lemma \ref{lemm:binaryrelation-subset}.
\begin{lemm}
  \label{lemm:binaryrelation-subset}
  Let $H=((V,\Sigma,S,P),({\mu}_{t},{\mu}_{b}))$ be an RLUB and
  $C=(\Gamma,\phi,T)$ be a control system for $H$.
  Let $W,X,Y,Z \subseteq {(V \cup \Sigma)}^*$
  such that $W \subseteq Y$.
  Let $\alpha,\beta \in \Gamma^+$,
  $\lambda_\alpha \in {\mathbb{Z}_{\geq 0}}^+$, and
  $\lambda_\beta \in {\mathbb{Z}_{\geq 1}}^+$
  such that $(\alpha,\lambda_\alpha) \underset{M,\preceq_C}{\Rightarrow} (\beta,\lambda_\beta)$,
  $W \overset{\lambda_\alpha}{\underset{{\phi(\alpha)}^{{\mu}_{b}}}{\Rightarrow^e}} X$, and
  $Y \overset{\lambda_\beta}{\underset{{\phi(\beta)}^{{\mu}_{t}}}{\Rightarrow^e}} Z$ hold.
  Then, $X \subseteq Z$ holds.
\end{lemm}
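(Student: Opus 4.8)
The plan is to prove $X \subseteq Z$ elementwise. Fix an arbitrary $x \in X$ (if $X = \emptyset$ the inclusion is vacuous); I will extract from the two given chains a single rule word over $P$ that derives $x$ from an element of $W$, and then reinterpret it along $\beta$ so as to place $x$ in $Z$. Write $\alpha = a_1 \cdots a_p$, $\beta = b_1 \cdots b_q$, $\lambda_\alpha = <l_1,\ldots,l_p>$ and $\lambda_\beta = <l'_1,\ldots,l'_q>$. First I would unfold $W \overset{\lambda_\alpha}{\underset{{\phi(\alpha)}^{\mu_b}}{\Rightarrow^{e}}} X$ via Definitions \ref{defi:phi-tau-mu} and \ref{defi:rightarrow-gamma-lambda} into $W = X_0 \overset{l_1}{\underset{{\phi(a_1)}^{\mu_b}}{\Rightarrow^{e}}} \cdots \overset{l_p}{\underset{{\phi(a_p)}^{\mu_b}}{\Rightarrow^{e}}} X_p = X$, and then trace $x \in X_p$ backwards through Definition \ref{defi:rightarrow-R-l}, using the inclusion $E(X_{k-1},R) \subseteq X_{k-1}$: this produces elements $v_0 \in W$, $v_1,\ldots,v_p = x$ and rule words $\sigma_k \in {\phi(a_k)}^{\mu_b}$ with $|\sigma_k| = l_k$ and $v_{k-1} \underset{\sigma_k}{\Rightarrow} v_k$. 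Concatenation yields $v_0 \underset{\sigma}{\Rightarrow} x$ with $\sigma := \sigma_1 \cdots \sigma_p$ and $|\sigma| = \sum\lambda_\alpha =: m$. Crucially, the existence of each $\sigma_k$ of length exactly $l_k$ inside ${\phi(a_k)}^{\mu_b}$ forces $l_k \in \mu_b$, so $\lambda_\alpha \in {\mu_b}^{|\alpha|}$, whence condition (A2) of Definition \ref{def:Rightarrow-M-preceq} gives $\lambda_\beta \in {\mu_t}^{|\beta|}$.

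Next I would re-group $\sigma$ according to $(\beta,\lambda_\beta)$. By construction the $i$-th rule $\sigma[i]$ lies in the $(\lambda_\alpha,i)$-th block (Fact \ref{fact:lambda}), hence $\sigma[i] \in \phi(\alpha[(\lambda_\alpha,i)])$; condition (A3) upgrades this to $\sigma[i] \in \phi(\beta[(\lambda_\beta,i)])$. Since $\sum\lambda_\beta = m = |\sigma|$ by (A1), I may cut $\sigma = \sigma'_1 \cdots \sigma'_q$ with $|\sigma'_j| = l'_j$; by Fact \ref{fact:lambda} again, every rule appearing in $\sigma'_j$ is some $\sigma[i]$ with $(\lambda_\beta,i) = j$, so $\sigma'_j \in {\phi(b_j)}^{*}$, and together with $l'_j \in \mu_t$ this gives $\sigma'_j \in {\phi(b_j)}^{\mu_t}$.

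Finally I would fold $\sigma$ into the right-hand chain. Unfold $Y \overset{\lambda_\beta}{\underset{{\phi(\beta)}^{\mu_t}}{\Rightarrow^{e}}} Z$ as $Y = Y_0 \overset{l'_1}{\underset{{\phi(b_1)}^{\mu_t}}{\Rightarrow^{e}}} \cdots \overset{l'_q}{\underset{{\phi(b_q)}^{\mu_t}}{\Rightarrow^{e}}} Y_q = Z$, and let $u_j$ be the sentential form reached from $v_0$ along the derivation $v_0 \underset{\sigma}{\Rightarrow} x$ after its first $l'_1 + \cdots + l'_j$ rules, so that $u_0 = v_0$, $u_q = x$, and $u_{j-1} \underset{\sigma'_j}{\Rightarrow} u_j$ for each $j$. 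I claim $u_j \in Y_j$ for all $j$, by induction on $j$: the base case is $u_0 = v_0 \in W \subseteq Y = Y_0$; for the inductive step, assuming $u_{j-1} \in Y_{j-1}$, note that $\sigma'_j$ is non-empty because $\lambda_\beta \in {\mathbb{Z}_{\geq 1}}^+$, so $\sigma'_j[1] \in \mathrm{prf}_1({\phi(b_j)}^{\mu_t})$ and $\sigma'_j[1]$ is applicable to $u_{j-1}$, which gives $u_{j-1} \in E(Y_{j-1},{\phi(b_j)}^{\mu_t})$; applying $\sigma'_j \in {\phi(b_j)}^{\mu_t}$ of length $l'_j$ to $u_{j-1}$ then places $u_j$ in $Y_j$ by Definition \ref{defi:rightarrow-R-l}. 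Taking $j = q$ yields $x = u_q \in Y_q = Z$, completing the argument.

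I expect the main obstacle to be the re-grouping step: one must see that the two multi-step derivations should first be flattened into a single common rule word $\sigma$ of length $m$, and that Fact \ref{fact:lambda} together with condition (A3) is exactly what licenses re-cutting $\sigma$ from its $\lambda_\alpha$-blocking to the $\lambda_\beta$-blocking while keeping each new block inside the correct $\phi(b_j)$. Everything else is bookkeeping: the $E(\cdot,\cdot)$-membership checks on the $Y$-side (where $\lambda_\beta \in {\mathbb{Z}_{\geq 1}}^+$ enters) and the appeal to (A2), which relies on the small observation that a non-empty $X$ already forces $\lambda_\alpha \in {\mu_b}^{|\alpha|}$.
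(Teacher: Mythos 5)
Your proof is correct and follows essentially the same route as the paper's: extract the length-$m$ rule word behind a fixed $x\in X$, use Fact \ref{fact:lambda} together with (A3) and the definition of $\preceq_C$ to pass from the $\lambda_\alpha$-blocking to the $\lambda_\beta$-blocking, and push the re-cut derivation through the $Y$-chain using $W\subseteq Y$, (A2), and $\lambda_\beta\in{\mathbb{Z}_{\geq 1}}^+$. The only cosmetic difference is that the paper disposes of the case $\lambda_\alpha\not\in{\mu_b}^{|\alpha|}$ up front via Remark \ref{rem:l-in-mu} (giving $X=\emptyset$), whereas you obtain $\lambda_\alpha\in{\mu_b}^{|\alpha|}$ directly from the non-emptiness of $X$, and you spell out the $E(\cdot,\cdot)$-membership checks a bit more explicitly.
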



\begin{proof}
  
  Let $X_i \subseteq {(V \cup \Sigma)}^*$ for each $i \in [1,|\alpha|]$ such that
  \begin{align}
    \label{eq:derivation-detail-wx}
    W \overset{\lambda_\alpha[1]}{\underset{{\phi(\alpha[1])}^{{\mu}_{b}}}{\Rightarrow^e}} X_1 \overset{\lambda_\alpha[2]}{\underset{{\phi(\alpha[2])}^{{\mu}_{b}}}{\Rightarrow^e}} \cdots \overset{\lambda_\alpha[|\lambda_\alpha|]}{\underset{{\phi(\alpha[|\alpha|])}^{{\mu}_{b}}}{\Rightarrow^e}} X_{|\alpha|}=X.
  \end{align}
  Let $Y_i \subseteq {(V \cup \Sigma)}^*$ for each $i \in [1,|\beta|]$ such that
  \begin{align}
    \label{eq:derivation-detail-yz}
    Y \overset{\lambda_\beta[1]}{\underset{{\phi(\beta[1])}^{{\mu}_{t}}}{\Rightarrow^e}} Z_1 \overset{\lambda_\beta[2]}{\underset{{\phi(\beta[2])}^{{\mu}_{t}}}{\Rightarrow^e}} \cdots \overset{\lambda_\beta[|\lambda_\beta|]}{\underset{{\phi(\beta[|\beta|])}^{{\mu}_{t}}}{\Rightarrow^e}} Z_{|\beta|}=Z.
  \end{align}
  Note that $|\alpha|=|\lambda_\alpha|$ and $|\beta|=|\lambda_\beta|$ hold
  since the relation $\underset{M,\preceq_C}{\Rightarrow}$ is defined over $\Phi(\Gamma)$.
  
  By $(\alpha,\lambda_\alpha) \underset{M,\preceq_C}{\Rightarrow} (\beta,\lambda_\beta)$, we have that
  \begin{align}
    & \sum \lambda_{\alpha} = \sum \lambda_{\beta} = n \mbox{ holds for some } n \in \mathbb{Z}_{\geq 0},\label{eq:rightarrow-def1} \\
    & \lambda_{\alpha} \in {\mu_{b}}^{|\alpha|} \mbox{ implies } \lambda_{\beta} \in {\mu_{t}}^{|\beta|}, \mbox{ and} \label{eq:rightarrow-def2} \\
    & \mbox{for any } i \in [1,n] , \alpha[(\lambda_{\alpha},i)] \preceq_C \beta[(\lambda_{\beta},i)] \mbox{ holds}. \label{eq:rightarrow-def3}
  \end{align}

  If $\lambda_{\alpha} \not\in {\mu_{b}}^{|\alpha|}$ holds,
  by (\ref{eq:derivation-detail-wx}) and Remark \ref{rem:l-in-mu},
  we have $X=\emptyset$, which implies
  $X \subseteq Z$.
  Therefore, it suffices to consider the case that $\lambda_{\alpha} \in {\mu_{b}}^{|\alpha|}$ holds.
  
  By (\ref{eq:rightarrow-def2}), we have $\lambda_{\beta} \in {\mu_{t}}^{|\beta|}$.
  Assume $x \in X$.
  There exists at least one derivation process $\pi$ of $x$ which can contribute to the generation of $x \in X$ in the process (\ref{eq:derivation-detail-wx}).
  Let us write such $\pi$ as $x_0 \underset{r_1}{\Rightarrow} x_1 \underset{r_2}{\Rightarrow} \cdots \underset{r_n}{\Rightarrow} x_n$,
  where $x_i \in {(V \cup \Sigma)}^* \; (i=0,\ldots,n)$, $x_0 \in W$, $x_n=x$, and $r_i \in P \; (i=1,\ldots,n)$.
  Then, by $x \in X$ and (\ref{eq:derivation-detail-wx}), we have $r_i \in \phi(\alpha[j])$
  for any integer $j$ with $1 \leq j \leq |\lambda_\alpha|$ and
  for any integer $i$ with $\sum_{k=1}^{j-1} \lambda_\alpha [k]+1 \leq i \leq \sum_{k=1}^{j} \lambda_\alpha [k]$.
  Therefore, by Fact \ref{fact:lambda}, we have $r_i \in \phi(\alpha[(\lambda_\alpha,i)])$
  for any integer $i$ with $1 \leq i \leq n$.
  In addition, by (\ref{eq:rightarrow-def3}), we have $\phi(\alpha[(\lambda_{\alpha},i)]) \subseteq \phi(\beta[(\lambda_{\beta},i)])$ for any integer $i$ with $1 \leq i \leq n$.
  Therefore, $r_i \in \phi(\beta[(\lambda_{\beta},i)])$ holds for any integer $i$ with $1 \leq i \leq n$.
  Then, by Fact \ref{fact:lambda}, we have $r_i \in \phi(\beta[j])$
  for any integer $j$ with $1 \leq j \leq |\lambda_\beta|$ and
  for any integer $i$ with $\sum_{k=1}^{j-1} \lambda_\beta [k]+1 \leq i \leq \sum_{k=1}^{j} \lambda_\beta [k]$,
  where we should note that by $\lambda_\beta \in {\mathbb{Z}_{\geq 1}}^+$, we have $\sum_{k=1}^{j-1} \lambda_\beta [k]+1 \leq \sum_{k=1}^{j} \lambda_\beta [k]$.
  Therefore, by $\lambda_{\beta} \in {\mu_{t}}^{|\beta|}$, we have
  \begin{align*}
    r_{\sum_{k=1}^{j-1} \lambda_\beta [k]+1} \; \cdots \; r_{\sum_{k=1}^{j} \lambda_\beta [k]} \in {\phi(\beta[j])}^{\mu_t} \mbox{ for any integer } j \mbox{ with } 1 \leq j \leq |\lambda_\beta|,
  \end{align*}
  where $r_{\sum_{k=1}^{j-1} \lambda_\beta [k]+1} \; \cdots \; r_{\sum_{k=1}^{j} \lambda_\beta [k]}$ is not an empty sequence.
  In addition, $x_0 \in Y$ holds since $W \subseteq Y$ holds.
  Therefore, we have $x_{\sum_{k=1}^{j} \lambda_\beta [k]} \in Z_j$ for any integer $j$ with $1 \leq j \leq |\lambda_\beta|$.
  Then, we have $x = x_n = x_{\sum \lambda_{\beta}} \in Z_{|\lambda_\beta|} = Z_{|\beta|} = Z$.
\end{proof}

For an integer $n \in \mathbb{Z}_{\geq 1}$,
an integer $m \in \mathbb{Z}_{\geq 0}$, and
an interval $\mu$,
we define a set $\mathrm{Div}(n,m,\mu)$ as a set of integer sequences $\lambda$ such that
$|\lambda|=n$ holds,
$\Sigma \lambda = m$ holds, and
every element of $\lambda$ is in $\mu$.
Formally, we define
\begin{align*}
  \mathrm{Div}(n,m,\mu) \overset{\mathrm{def}}{=} \{ \lambda \in \mu^n \mid \sum \lambda = m \}.
\end{align*}
For example, we have $\mathrm{Div}(2,10,[4,8])=\{ <4,6>,<5,5>,<6,4> \}$, $\mathrm{Div}(3,13,[4,8])=\{ <4,4,5>,<4,5,4>,<5,4,4> \}$, and $\mathrm{Div}(3,11,[4,8])=\emptyset$.

\section{Characterization of Controlled Generation of RLUBs}
\label{sec:Characterization-of-Controlled-Generation-of-RLUBs}

In section \ref{subsec:condition-c1},
we introduce the condition (C),
which plays a very important role in this section.
Section \ref{subsec:construction-of-h-c-syn} and \ref{subsec:proof-characterization-syn} will show
that the condition (C) is necessary and sufficient for the controlled synchronous generation of RLUBs in the erasing mode.
More precisely, section \ref{subsec:construction-of-h-c-syn} gives
a method of constructing an RLUB $H_*$ and its control system $C_*$, under the assumption that the condition (C) holds.
Then, section \ref{subsec:proof-characterization-syn} shows that $H_*$ and $C_*$ synchronously generate a given language class $\mathcal{L}$.
Moreover, we show that the condition (C) is also necessary for the controlled synchronous generation of RLUBs in the erasing mode,
which leads to the characterization of the controlled synchronous generation of RLUBs in the erasing mode.
Finally, we introduce the condition (C') which is obtained by modifying (C), and
use it to characterize the controlled (possibly) non-synchronous generation of RLUBs in the erasing mode.



\subsection{Condition (C)}
\label{subsec:condition-c1}

\begin{defi}
  \label{defi:condition-syn}
  Let $\mathcal{L}$ be a finite class of non-empty finite languages over a finite alphabet $\Sigma$,
  $M=(\mu_t,\mu_b)$ be a GC, and
  $(\Gamma,\preceq)$ be a partially ordered finite control alphabet.
  Let $\theta$ be an injection from $\mathcal{L}$ to $\Gamma^+$ and
  $\delta_{\mathcal{L}}=\{ \delta_L \mid L \in \mathcal{L} \}$ be a class of Boolean functions $\delta_L$ from $L \; (\in \mathcal{L})$ to $\{ 0,1 \}$.
  We say that {\em $\theta$ and $\delta_{\mathcal{L}}$ satisfy the condition (C) with respect to $\mathcal{L}$, $M$, and $(\Gamma,\preceq)$} if the following (c1) and (c2) hold:
  \begin{align*}
    \mbox{(c1)} \; & \mathrm{Alph}(\theta(\mathcal{L})) = \Gamma \mbox{ holds}, \mbox{ and}\\
    \mbox{(c2)} \; & \mbox{for any } L \in \mathcal{L}, \mbox{ for any } w \in L, \mbox{ there exists } \lambda \in \mathrm{Div}(|\theta(L)|,|w|+\delta_{L}(w),\mu_b) \\
    & \mbox{such that the following (s1) and (s2) hold}: \\
    & \mbox{(s1)} \; \forall L' \in \mathcal{L} \left( \left( \exists \lambda' \in {\mathbb{Z}_{\geq 1}}^+ \left( (\theta(L),\lambda) \underset{M,\preceq}{\Rightarrow} (\theta(L'),\lambda') \right) \right) \mbox{ implies } w \in L' \right), \\
    & \mbox{(s2)} \; \forall L' \in \mathcal{L}, \forall \tau \in \Gamma^+
    \left(
    \begin{aligned}
      & \left( \exists \lambda' \in {\mathbb{Z}_{\geq 1}}^+ \left( (\theta(L),\lambda) \underset{M,\preceq}{\Rightarrow} (\tau,\lambda') \right) \right) \mbox{ implies } \\
      & \quad \left( \tau \mbox{ is not a proper prefix of } \theta(L') \right)
    \end{aligned}
    \right).
  \end{align*}
  \halmos
\end{defi}

We will show in Theorem \ref{theo:sufficient-syn} that the existence of $\theta$ and $\delta_{\mathcal{L}}$ satisfying (C) with respect to $\mathcal{L}$, $M$, and $(\Gamma,\preceq)$
allows us to construct an RLUB $H=(G,M)$ and a control system $C=(\Gamma,\phi,T)$ for $H$
such that
$\preceq = \preceq_C$ holds and
$H$ and $C$ synchronously generate $\mathcal{L}$ in the erasing mode.
Before showing Theorem \ref{theo:sufficient-syn},
we give examples and we give a proposition.

\begin{rei}
  \label{rei:jouken-1}
  Let $L_1=\{ a^{15} \}$, $L_2=\{ a^{15},b^{7} \}$, $L_3=\{ c^{5} \}$, and $L_4=\{ c^{5},d^{4} \}$.
  Let $\mathcal{L}_{\dag}=\{ L_1,L_2,L_3,L_4 \}$ be a finite class of non-empty finite languages.
  Let $M_{\dag}=([3,11],[4,8])$ be a GC,
  ${\Gamma_{\dag}}=\{ t_1,t_2,t_3 \}$ be a finite control alphabet,
  and $\preceq_{\dag} = \{ (t,t) \mid t \in {\Gamma_{\dag}} \} \cup \{ (t_2,t_3) \}$ be a partial order over ${\Gamma_{\dag}}$.
  Then, we define an injection $\theta_{\dag}:\mathcal{L}_{\dag} \to {\Gamma_{\dag}}^+$ as follows:
  \begin{align*}
    \theta_{\dag}(L_1) = t_1 t_2 t_3, \; \theta_{\dag}(L_2)=t_1 t_3, \; \theta_{\dag}(L_3)=t_2, \; \theta_{\dag}(L_4)=t_3.
  \end{align*}
  We define a class $\delta_{\mathcal{L}_{\dag}} = \{ \delta_{L_1},\ldots,\delta_{L_4} \}$ of Boolean functions $\delta_L$ from $L \; (\in \mathcal{L}_{\dag})$ to $\{0,1\}$ as follows:
  \begin{align*}
    \delta_{L_1}(a^{15})=0, \; \delta_{L_2}(a^{15})=0, \; \delta_{L_2}(b^{7})=1, \; \delta_{L_3}(c^{5})=0, \; \delta_{L_4}(c^{5})=0, \; \delta_{L_4}(d^{4})=0.
  \end{align*}
  Note that only $\delta_{L_2}(b^7)$ is $1$.
  We can verify that this $\theta_{\dag}$ and $\delta_{\mathcal{L}_{\dag}}$ satisfy the condition (C) with respect to $\mathcal{L}_{\dag}$, $M_{\dag}$, and $({\Gamma_{\dag}},\preceq_{\dag})$.
  Note that the condition (c1) of Definition \ref{defi:condition-syn} holds
  since $\mathrm{Alph}(\theta_{\dag}(\mathcal{L}_{\dag})) = {\Gamma_{\dag}}$ holds.
  It suffices to show that the condition (c2) of Definition \ref{defi:condition-syn} holds.

  For $L_1 \in \mathcal{L}_{\dag}$ and $a^{15} \in L_1$, we can verify that an integer sequence $\lambda_{L_1,a^{15}} = <4,5,6> \in \mathrm{Div}(|\theta_{\dag}(L_1)|,|a^{15}|+\delta_{L_1}(a^{15}),[4,8])=\mathrm{Div}(3,15,[4,8])$ satisfies the statements (s1) and (s2) as follows.
  Firstly, we verify the statement (s1).
  In the case of $L'=L_1$ or $L_2$, the statement (s1) holds since $a^{15} \in L'$.
  In the case of $L'=L_3$ or $L_4$, the statement (s1) holds since we have that by Proposition \ref{prop:haji-ga-not-preceq} there exists no integer sequence $\lambda' \in {\mathbb{Z}_{\geq 1}}^+$ such that $(\theta_{\dag}(L_1),<4,5,6>) \underset{M_{\dag},\preceq_{\dag}}{\Rightarrow} (\theta_{\dag}(L'),\lambda')$.
  Secondly, we verify the statement (s2).
  Consider the case of $L'=L_1$.
  In the case of $\tau=t_1$ or $t_1 t_2$, by Proposition \ref{prop:haji-ga-not-preceq} there exists no integer sequence $\lambda' \in {\mathbb{Z}_{\geq 1}}^+$ such that $(\theta_{\dag}(L_1),<4,5,6>) \underset{M_{\dag},\preceq_{\dag}}{\Rightarrow} (\tau,\lambda')$, and
  in the case of $\tau \neq t_1, t_1 t_2$, $\tau$ is not a proper prefix of $\theta_{\dag}(L')$, and thus, the statement (s2) holds.
  Consider the case of $L'=L_2$.
  In the case of $\tau=t_1$, by Proposition \ref{prop:haji-ga-not-preceq} there exists no integer sequence $\lambda' \in {\mathbb{Z}_{\geq 1}}^+$ such that $(\theta_{\dag}(L_1),<4,5,6>) \underset{M_{\dag},\preceq_{\dag}}{\Rightarrow} (\tau,\lambda')$, and
  in the case of $\tau \neq t_1$, $\tau$ is not a proper prefix of $\theta_{\dag}(L')$, and thus, the statement (s2) holds.
  Consider the case of $L'=L_3$ or $L_4$.
  Any $\tau \in {\Gamma_{\dag}}^+$ is not a proper prefix of $\theta_{\dag}(L')$, and thus, the statement (s2) holds.
  Therefore, $\lambda_{L_1,a^{15}} = <4,5,6>$ satisfies the statements (s1) and (s2).
  
  In the same way, the statements (s1) and (s2) are satisfied by giving an integer sequence
  $\lambda_{L_2,a^{15}} = <8,7>$ for $L_2$ and $a^{15} \in L_2$,
  $\lambda_{L_2,b^{7}} = <4,4>$ for $L_2$ and $b^{7} \in L_2$,
  $\lambda_{L_3,c^{5}} = <5>$ for $L_3$ and $c^{5} \in L_3$,
  $\lambda_{L_4,c^{5}} = <5>$ for $L_4$ and $c^{5} \in L_4$, and
  $\lambda_{L_4,d^{4}} = <4>$ for $L_4$ and $d^{4} \in L_4$.
  \halmos
\end{rei}

\begin{rei}
  \label{rei:jouken-1-rei2}
  We consider $\mathcal{L}_{\dag}$, $M_{\dag}$, and $\Gamma_{\dag}$ defined in Example \ref{rei:jouken-1}.
  Let $\preceq_{\dag}' = \{ (t,t) \mid t \in \Gamma \} \cup \{ (t_1,t_2),(t_2,t_3),(t_1,t_3) \}$ be a partial order over ${\Gamma_{\dag}}$,
  Then, we can show that there exist no injection $\theta_{\dag}':\mathcal{L}_{\dag} \to {\Gamma_{\dag}}^+$
  and no class $\delta_{\mathcal{L}_{\dag}}' = \{ \delta_{L_1}',\ldots,\delta_{L_4}' \}$ of Boolean functions $\delta_L'$ from $L \; (\in \mathcal{L}_{\dag})$ to $\{0,1\}$ satisfying the condition (C) with respect to $\mathcal{L}_{\dag}$, $M_{\dag}$, and $({\Gamma_{\dag}},\preceq_{\dag}')$.
  We can show this by contradiction.
  Assume that an injection $\theta_{\dag}'$
  and a class $\delta_{\mathcal{L}_{\dag}}' = \{ \delta_{L_1}',\ldots,\delta_{L_4}' \}$ of Boolean functions
  satisfy the condition (C) with respect to $\mathcal{L}_{\dag}$, $M_{\dag}$, and $({\Gamma_{\dag}},\preceq_{\dag}')$.
  By (c2) of Definition \ref{defi:condition-syn}, for any $L \in \mathcal{L}_{\dag}$ and for any $w \in L$,
  there exists $\lambda \in \mathrm{Div}(|\theta_{\dag}'(L)|,|w|+\delta_{L}'(w),[4,8])$ such that (s1) and (s2) hold.
  We write such $\lambda$ as $\lambda_{L,w}$.
  We have $\lambda_{L_4,d^4} \in \mathrm{Div}(|\theta_{\dag}'(L_4)|,|d^4|+\delta_{L_4}'(d^4),[4,8])$, which implies $|\theta_{\dag}'(L_4)|=1$.
  In the same way, we have $|\theta_{\dag}'(L_2)|=2$ and $|\theta_{\dag}'(L_3)|=1$.
  Let $L=L_3$ and $L'=L_2$ in the statement (s2).
  If $\theta_{\dag}'(L_3)=t_1$ holds, for any $\tau \in \Gamma_{\dag}$, we have $(\theta_{\dag}'(L_3),\lambda_{L_3,c^5}) \underset{M_{\dag},\preceq_{\dag}}{\Rightarrow} (\tau,\lambda_{L_3,c^5})$.
  Then, (s2) implies that $t_1$, $t_2$, and $t_3$ are not a proper prefix of $\theta_{\dag}'(L_2)$, which is a contradiction.
  Therefore, we have $\theta_{\dag}'(L_3)=t_2$ or $t_3$. In the same way, $\theta_{\dag}'(L_4)=t_2$ or $t_3$.
  By (s1), we have $\theta_{\dag}'(L_3)=t_2$ and $\theta_{\dag}'(L_4)=t_3$.
  Then, $\lambda' \in {\mathbb{Z}_{\geq 1}}^+$ satisfying $(\theta_{\dag}'(L_2),\lambda_{L_2,b^7}) \underset{M_{\dag},\preceq_{\dag}}{\Rightarrow} (\theta_{\dag}'(L_4),\lambda')$ always exists, which contradicts (s1) since $b^7 \not\in L_4$ holds.
  \halmos
\end{rei}

\begin{prop}
  \label{prop:preceq-subset-c}

  Let $\preceq_1$ and $\preceq_2$ be partial orders over $\Gamma$.
  Assume that $\preceq_1 \; \subseteq \; \preceq_2$ holds.
  If $\theta$ and $\delta_{\mathcal{L}}$ satisfy the condition (C) with respect to $\mathcal{L}$, $M$, and $(\Gamma,\preceq_2)$,
  then, $\theta$ and $\delta_{\mathcal{L}}$ satisfy the condition (C) with respect to $\mathcal{L}$, $M$, and $(\Gamma,\preceq_1)$.
  
\end{prop}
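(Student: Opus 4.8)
The plan is to observe that the condition (C) depends on the partial order only through the relation $\underset{M,\preceq}{\Rightarrow}$, and that this relation is monotone in $\preceq$ by Proposition \ref{prop:preceq-subset-rightarrow-subset}. First I would dispose of (c1): the requirement $\mathrm{Alph}(\theta(\mathcal{L})) = \Gamma$ does not mention $\preceq$ at all, so it holds for $(\Gamma,\preceq_1)$ precisely because it holds for $(\Gamma,\preceq_2)$.

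For (c2) I would keep the \emph{same} witnesses. Fix $L \in \mathcal{L}$ and $w \in L$. Since $\theta$ and $\delta_{\mathcal{L}}$ satisfy (C) with respect to $(\Gamma,\preceq_2)$, there is some $\lambda \in \mathrm{Div}(|\theta(L)|,|w|+\delta_L(w),\mu_b)$ for which (s1) and (s2) hold with $\preceq$ instantiated to $\preceq_2$. The index set $\mathrm{Div}(|\theta(L)|,|w|+\delta_L(w),\mu_b)$ is independent of the partial order, so this same $\lambda$ is an admissible candidate for $(\Gamma,\preceq_1)$; it remains to verify (s1) and (s2) for $\preceq_1$ with this $\lambda$.

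The key step is the transfer. By Proposition \ref{prop:preceq-subset-rightarrow-subset}, since $\preceq_1 \;\subseteq\; \preceq_2$, we have $\underset{M,\preceq_1}{\Rightarrow} \;\subseteq\; \underset{M,\preceq_2}{\Rightarrow}$. Hence for any $\tau \in \Gamma^+$ and any $\lambda' \in {\mathbb{Z}_{\geq 1}}^+$, the statement $(\theta(L),\lambda) \underset{M,\preceq_1}{\Rightarrow} (\tau,\lambda')$ implies $(\theta(L),\lambda) \underset{M,\preceq_2}{\Rightarrow} (\tau,\lambda')$. Consequently, whenever the hypothesis of the implication in (s1) (resp. (s2)) holds for $\preceq_1$ — i.e.\ some $\lambda'$ witnesses $(\theta(L),\lambda) \underset{M,\preceq_1}{\Rightarrow} (\theta(L'),\lambda')$ (resp. $(\theta(L),\lambda) \underset{M,\preceq_1}{\Rightarrow} (\tau,\lambda')$) — the corresponding hypothesis for $\preceq_2$ also holds, so the conclusion ($w \in L'$ for (s1); $\tau$ not a proper prefix of $\theta(L')$ for (s2)) follows from the assumption that (s1) and (s2) hold for $\preceq_2$. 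This gives (s1) and (s2) for $\preceq_1$, completing (c2) and hence the proof.

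There is essentially no obstacle here: the argument is a one-line monotonicity observation — a smaller derivability relation makes the existential antecedents of (s1) and (s2) harder to satisfy, so the implications remain valid — together with the fact that neither (c1) nor the set $\mathrm{Div}(\cdots)$ appearing in (c2) depends on $\preceq$. The only point requiring care is the direction of the inclusion: we need $\underset{M,\preceq_1}{\Rightarrow}$ to be the \emph{smaller} relation, which is exactly what Proposition \ref{prop:preceq-subset-rightarrow-subset} provides.
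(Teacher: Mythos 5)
Your proposal is correct and follows essentially the same route as the paper: both keep the same witnesses $\lambda$, note that (c1) and $\mathrm{Div}(\cdots)$ are independent of $\preceq$, and invoke Proposition \ref{prop:preceq-subset-rightarrow-subset} to weaken the existential antecedents of (s1) and (s2) from $\preceq_1$ to $\preceq_2$. Your added remark about the direction of the inclusion is exactly the point the paper's proof relies on implicitly.
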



\begin{proof}

  Assume that $\theta$ and $\delta_{\mathcal{L}}$ satisfy the condition (C) with respect to $\mathcal{L}$, $M=(\mu_t,\mu_b)$, and $(\Gamma,\preceq_2)$.
  Then, by Definition \ref{defi:condition-syn}, we have that
  \begin{align*}
    \mbox{(c1)} \; & \mathrm{Alph}(\theta(\mathcal{L})) = \Gamma \mbox{ holds}, \mbox{ and}\\
    \mbox{(c2$_{\preceq_2}$)} \; & \mbox{for any } L \in \mathcal{L}, \mbox{ for any } w \in L, \mbox{ there exists } \lambda \in \mathrm{Div}(|\theta(L)|,|w|+\delta_{L}(w),\mu_b) \\
    & \mbox{such that the following (s1$_{\preceq_2}$) and (s2$_{\preceq_2}$) hold}: \\
    & \mbox{(s1$_{\preceq_2}$)} \; \forall L' \in \mathcal{L} \left( \left( \exists \lambda' \in {\mathbb{Z}_{\geq 1}}^+ \left( (\theta(L),\lambda) \underset{M,\preceq_2}{\Rightarrow} (\theta(L'),\lambda') \right) \right) \mbox{ implies } w \in L' \right), \\
    & \mbox{(s2$_{\preceq_2}$)} \; \forall L' \in \mathcal{L}, \forall \tau \in \Gamma^+
    \left(
    \begin{aligned}
      & \left( \exists \lambda' \in {\mathbb{Z}_{\geq 1}}^+ \left( (\theta(L),\lambda) \underset{M,\preceq_2}{\Rightarrow} (\tau,\lambda') \right) \right) \mbox{ implies } \\
      & \quad \left( \tau \mbox{ is not a proper prefix of } \theta(L') \right)
    \end{aligned}
    \right).
  \end{align*}
  By Proposition \ref{prop:preceq-subset-rightarrow-subset}, we have $\underset{M,\preceq_1}{\Rightarrow} \; \subseteq \; \underset{M,\preceq_2}{\Rightarrow}$.
  Therefore, by (c2$_{\preceq_2}$), we have that
  \begin{align*}
    \mbox{(c2$_{\preceq_1}$)} \; & \mbox{for any } L \in \mathcal{L}, \mbox{ for any } w \in L, \mbox{ there exists } \lambda \in \mathrm{Div}(|\theta(L)|,|w|+\delta_{L}(w),\mu_b) \\
    & \mbox{such that the following (s1$_{\preceq_1}$) and (s2$_{\preceq_1}$) hold}: \\
    & \mbox{(s1$_{\preceq_1}$)} \; \forall L' \in \mathcal{L} \left( \left( \exists \lambda' \in {\mathbb{Z}_{\geq 1}}^+ \left( (\theta(L),\lambda) \underset{M,\preceq_1}{\Rightarrow} (\theta(L'),\lambda') \right) \right) \mbox{ implies } w \in L' \right), \\
    & \mbox{(s2$_{\preceq_1}$)} \; \forall L' \in \mathcal{L}, \forall \tau \in \Gamma^+
    \left(
    \begin{aligned}
      & \left( \exists \lambda' \in {\mathbb{Z}_{\geq 1}}^+ \left( (\theta(L),\lambda) \underset{M,\preceq_1}{\Rightarrow} (\tau,\lambda') \right) \right) \mbox{ implies } \\
      & \quad \left( \tau \mbox{ is not a proper prefix of } \theta(L') \right)
    \end{aligned}
    \right).
  \end{align*}
  Therefore, since (c1) and (c2$_{\preceq_1}$) hold,
  $\theta$ and $\delta_{\mathcal{L}}$ satisfy the condition (C) with respect to $\mathcal{L}$, $M$, and $(\Gamma,\preceq_1)$.
\end{proof}

\subsection{Construction of $H_*$ and $C_*$}
\label{subsec:construction-of-h-c-syn}

Let $\mathcal{L}$ be a finite class of non-empty finite languages over a finite alphabet $\Sigma$,
$M=(\mu_t,\mu_b)$ be a GC, and
$(\Gamma,\preceq)$ be a partially ordered finite control alphabet.
Assume that there exist
an injection $\theta:\mathcal{L} \to \Gamma^+$ and
a class $\delta_{\mathcal{L}}=\{ \delta_L \mid L \in \mathcal{L} \}$ of Boolean functions $\delta_L$ from $L \; (\in \mathcal{L})$ to $\{ 0,1 \}$
satisfying the condition (C) with respect to $\mathcal{L}$, $M$, and $(\Gamma,\preceq)$.
Using the definitions of $\theta$ and $\delta_{\mathcal{L}}$,
we will give constructions of an RLUB $H_*=(G,M)$ and a control system $C_*=(\Gamma,\phi,T)$
such that
$\preceq = \preceq_{C_*}$ holds and
$H_*$ and $C_*$ synchronously generate $\mathcal{L}$ in the erasing mode.

We first give a construction of an RLUB $H_*=(G,M)$.
For each $L \in \mathcal{L}$ and $w \in L$, we will define a set $P$ of production rules of $G$
so that the length of derivation of $w \in L$ should be $|w| + \delta_{L}(w)$.
In other words, in the case of $\delta_{L}(w)=0$, the number of production rules to generate $w \in L$ is $|w|$, and
in the case of $\delta_{L}(w)=1$, the number of production rules to generate $w \in L$ is $|w|+1$.
The additional derivation step of length $1$ in the case of $\delta_{L}(w)=1$
is achieved by the use of $\epsilon$ rule of the form $A \to \epsilon$.
Furthermore, we construct the production rules so that all nonterminal symbols used in them are different from each other except for the start symbol $S$.
Formally, $H_*$ is defined as follows:
\begin{align*}
  \mbox{(D1)} \quad & H_*=((V,\Sigma,S,P),M), \\
  & \quad V = \{ S \} \cup \left( \bigcup_{L \in \mathcal{L}} V(L) \right) \cup \{ Z^{(L,w)} \mid L \in \mathcal{L}, w \in L \}, \displaybreak[1] \\
  & \quad\quad V(L) = \bigcup_{w \in L} V(L,w) \quad (\mbox{for any } L \in \mathcal{L}), \displaybreak[1] \\
  & \quad\quad\quad V(L,w) = \left\{
  \begin{aligned}
    & \{ A_i^{(L,w)} \mid 1 \leq i \leq |w|-1 \} \quad (\mbox{if } \delta_{L}(w)=0) \\
    & \{ A_i^{(L,w)} \mid 1 \leq i \leq |w| \} \quad (\mbox{if } \delta_{L}(w)=1)
  \end{aligned}
  \right. \\
  & \quad\quad\quad\quad\quad\quad\quad\quad\quad\quad\quad\quad (\mbox{for any } L \in \mathcal{L} \mbox{ and } w \in L), \displaybreak[1] \\
  & \quad P = P_0 \cup (\bigcup_{L \in \mathcal{L}} P(L)), \displaybreak[1] \\
  & \quad\quad P_0 = \{ Z^{(L,w)} \to \epsilon \mid L \in \mathcal{L}, w \in L \}, \displaybreak[1] \\
  & \quad\quad P(L) = \bigcup_{w \in L} P(L,w) \quad (\mbox{for any } L \in \mathcal{L}), \displaybreak[1] \\
  & \quad\quad\quad P(L,w) = \left\{
  \begin{aligned}
    & \{ S \to w[1] A_1^{(L,w)},
    A_1^{(L,w)} \to w[2] A_2^{(L,w)},
    A_2^{(L,w)} \to w[3] A_3^{(L,w)}, \ldots , \\
    & \quad A_{|w|-2} \to w[|w|-1] A_{|w|-1}^{(L,w)},
    A_{|w|-1}^{(L,w)} \to w[|w|] \} \quad (\mbox{if } \delta_{L}(w)=0) \\
    & \{ S \to w[1] A_1^{(L,w)},
    A_1^{(L,w)} \to w[2] A_2^{(L,w)},
    A_2^{(L,w)} \to w[3] A_3^{(L,w)}, \ldots , \\
    & \quad A_{|w|-1} \to w[|w|] A_{|w|}^{(L,w)},
    A_{|w|}^{(L,w)} \to \epsilon \} \quad (\mbox{if } \delta_{L}(w)=1) \\
  \end{aligned}
  \right. \\
  & \quad\quad\quad\quad\quad\quad\quad\quad\quad\quad\quad\quad (\mbox{for any } L \in \mathcal{L} \mbox{ and } w \in L \mbox{ such that } |w|+\delta_{L}(w) \geq 2 ), \\
  & \quad\quad\quad P(L,w) = \{ S \to w \} \quad (\mbox{for any } L \in \mathcal{L} \mbox{ and } w \in L \mbox{ such that } |w|+\delta_{L}(w) = 1).
\end{align*}
Note that since $\mathcal{L}$ is a finite class of non-empty finite languages, the sets $V$ and $P$ are finite sets, which implies that $H_*$ is well-defined.

Let $L \in \mathcal{L}$ and $w \in L$.
For any integer $i$ with $1 \leq i \leq |P(L,w)|$,
by $r_i^{(L,w)}$, we denote the element of $P(L,w)$ which is applied the $i$-th in the process of deriving $w$ using $P(L,w)$.
We define $R(L,w)$ as the sequence from $r_1^{(L,w)}$ to $r_{|P(L,w)|}^{(L,w)}$,
that is, we define $R(L,w) \overset{\mathrm{def}}{=} r_1^{(L,w)} \cdots r_{|P(L,w)|}^{(L,w)}$.
Note that we have $S \underset{R(L,w)}{\Rightarrow} w$.
In addition,
by $r_0^{(L,w)}$, we denote the rule $Z^{(L,w)} \to \epsilon$ in $P_0$.
The rules in $P_0$ are used in the definition (D2) mentioned later.

\begin{rei}
  \label{rei:example-of-h-syn}

  We give an example construction of $H_*$.
  We consider $\mathcal{L}_{\dag}$, $M_{\dag}$, $\Gamma_{\dag}$, $\preceq_{\dag}$, $\theta_{\dag}$, and $\delta_{\mathcal{L}_{\dag}}$ defined in Example \ref{rei:jouken-1}.
  By the definition (D1),
  we define $H_*=((V,\Sigma,S,P),M_{\dag})$, where
  $V = \{ S \} \cup V(L_1) \cup \cdots \cup V(L_4)$ and
  $P = P(L_1) \cup \cdots \cup P(L_4)$.
  For example, the definition (D1) leads to $V(L_2)=V(L_2,a^{15}) \cup V(L_2,b^{7})$ and $P(L_2) = P(L_2,a^{15}) \cup P(L_2,b^{7})$, where
  $V(L_2,a^{15})=\{ A_{1}^{(L_2,a^{15})},\ldots,A_{14}^{(L_2,a^{15})} \}$,
  $V(L_2,b^{7})=\{ A_{1}^{(L_2,b^{7})},\ldots,A_{7}^{(L_2,b^{7})} \}$,
  $P(L_2,a^{15}) = \{ S \to a A_{1}^{(L_2,a^{15})},A_{1}^{(L_2,a^{15})} \to a A_{2}^{(L_2,a^{15})}, \ldots, A_{13}^{(L_2,a^{15})} \to a A_{14}^{(L_2,a^{15})}, A_{14}^{(L_2,a^{15})} \to a \}$, and
  $P(L_2,b^{7}) = \{ S \to b A_{1}^{(L_2,b^{7})},A_{1}^{(L_2,b^{7})} \to b A_{2}^{(L_2,b^{7})}, \ldots, A_{6}^{(L_2,b^{7})} \to b A_{7}^{(L_2,b^{7})},A_{7}^{(L_2,b^{7})} \to \epsilon \}$.
  For any $L \in \mathcal{L}_{\dag}$ and any $w \in L$,
  the set $P(L,w)$ of production rules is defined in order to derive $w$ at $|w|+\delta_{L}(w)$ steps.

  The notations $r_i^{(L,w)}$ are used to specify each element of $P(L,w)$.
  For example, for $L_2 \in \mathcal{L}_{\dag}$ and $b^{7} \in L_2$, we have
  $r_{1}^{(L_2,b^{7})} : S \to b A_{1}^{(L_2,b^{7})}$,
  $r_{2}^{(L_2,b^{7})} : A_{1}^{(L_2,b^{7})} \to b A_{2}^{(L_2,b^{7})}$,
  $\ldots$,
  $r_{7}^{(L_2,b^{7})} : A_{6}^{(L_2,b^{7})} \to b A_{7}^{(L_2,b^{7})}$,
  $r_{8}^{(L_2,b^{7})} : A_{7}^{(L_2,b^{7})} \to \epsilon$.
  Moreover, we have $R(L_2,b^{7}) = r_{1}^{(L_2,b^{7})} \cdots r_{8}^{(L_2,b^{7})}$ and
  $S \underset{R(L_2,b^{7})}{\Rightarrow} b^{7}$.
  \halmos
\end{rei}

For $L \in \mathcal{L}$ and $w \in L$, we divide the set $P(L,w)$ into disjoint subsets
based on the integer sequence $\lambda \in {\mathbb{Z}_{\geq 0}}^+$ such that $\sum \lambda=|P(L,w)|$.
In the case of $\lambda = <l_1,\ldots,l_k>$ such that $\lambda \neq <1>$, we divide $P(L,w)$ into $k$ pieces of disjoint subsets
so that
the 1st subset $P(L,w)_{\lambda}^{(1)}$ contains the 1st $l_1$ rules,
the 2nd subset $P(L,w)_{\lambda}^{(2)}$ contains the 2nd $l_2$ rules,
$\ldots$,
the $k$-th subset $P(L,w)_{\lambda}^{(k)}$ contains the last $k$-th $l_k$ rules
according to the application order in the derivation process of $w$ in $L$.
In the case of $\lambda=<1>$, we put $r_0^{(L,w)}$ as well as $r_1^{(L,w)}$ into $P(L,w)_{\lambda}^{(1)}$.
Formally, we divide $P(L,w)$ as follows:
\begin{align*}
  \mbox{(D2)} \quad P(L,w)_{\lambda}^{(i)} = \left\{
  \begin{aligned}
    & \{ r_0^{(L,w)}, r_1^{(L,w)} \} \quad (\mbox{if } \lambda=<1>)\\
    & \{ r_{k}^{(L,w)} \mid \sum_{j=1}^{i-1} \lambda[j] + 1 \leq k \leq \sum_{j=1}^{i} \lambda[j] \} \quad (\mbox{otherwise}).\\
  \end{aligned}
  \right. \quad (1 \leq i \leq |\lambda|)
\end{align*}

\begin{defi}
  \label{defi:lambda-l-w}
  Since $\theta$ and $\delta_{\mathcal{L}}$ satisfy the condition (C) with respect to $\mathcal{L}$, $M=(\mu_t,\mu_b)$, and $(\Gamma,\preceq)$,
  we have that
  for any $L \in \mathcal{L}$ and for any $w \in L$, there exists $\lambda \in \mathrm{Div}(|\theta(L)|,|w|+\delta_{L}(w),\mu_b)$ such that the statements (s1) and (s2) of Definition \ref{defi:condition-syn} hold.
  We write such $\lambda$ as $\lambda_{L,w}$.
  \halmos
\end{defi}

It is straightforward to show the following Fact \ref{fact:lambda-l-w-elements-geq-1}, Fact \ref{fact:lambda-l-w-theta-l-length-equal}, Fact \ref{fact:p-l-w-neq-emptyset}, and Fact \ref{fact:different-p-l-w-dont-have-same-element}.
\begin{fact}
  \label{fact:lambda-l-w-elements-geq-1}
  If $0 \not\in \mu_b$ holds,
  we have $\lambda_{L,w} \in {\mathbb{Z}_{\geq 1}}^+$ for any $L \in \mathcal{L}$ and $w \in L$.
\end{fact}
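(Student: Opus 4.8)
The plan is to unwind the definitions and show that the assumption $0 \notin \mu_b$ forces every coordinate of $\lambda_{L,w}$ to be a positive integer. Fix $L \in \mathcal{L}$ and $w \in L$. By Definition \ref{defi:lambda-l-w}, $\lambda_{L,w}$ is a particular element of $\mathrm{Div}(|\theta(L)|,|w|+\delta_L(w),\mu_b)$ witnessing the statements (s1) and (s2) of Definition \ref{defi:condition-syn} for this $L$ and $w$. So it suffices to observe that \emph{every} element of $\mathrm{Div}(n,m,\mu_b)$ lies in ${\mathbb{Z}_{\geq 1}}^+$ whenever $0 \notin \mu_b$.

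First I would recall the definition $\mathrm{Div}(n,m,\mu_b) = \{ \lambda \in {\mu_b}^n \mid \sum \lambda = m \}$. Thus for $\lambda \in \mathrm{Div}(|\theta(L)|,|w|+\delta_L(w),\mu_b)$ we have $\lambda \in {\mu_b}^{|\theta(L)|}$, meaning $\lambda$ is a sequence of length $|\theta(L)| \geq 1$ each of whose entries belongs to the interval $\mu_b$. Since $\mu_b \subseteq \mathbb{Z}_{\geq 0}$ and $0 \notin \mu_b$ by hypothesis, we get $\mu_b \subseteq \mathbb{Z}_{\geq 1}$, hence $\lambda[k] \geq 1$ for every $k \in [1,|\lambda|]$; that is, $\lambda \in {\mathbb{Z}_{\geq 1}}^+$. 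Applying this to $\lambda = \lambda_{L,w}$ yields $\lambda_{L,w} \in {\mathbb{Z}_{\geq 1}}^+$, and since $L$ and $w$ were arbitrary the claim follows.

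There is essentially no obstacle here: the statement is a direct consequence of the definition of $\mathrm{Div}$ together with the definition of an interval as a subset of $\mathbb{Z}_{\geq 0}$. The only thing to be careful about is that $\mathrm{Div}(|\theta(L)|,|w|+\delta_L(w),\mu_b)$ is guaranteed to be non-empty (so that $\lambda_{L,w}$ exists in the first place), but this is already supplied by the hypothesis that $\theta$ and $\delta_{\mathcal{L}}$ satisfy the condition (C), which is why Definition \ref{defi:lambda-l-w} is well-posed. Hence the argument is a one-line appeal to the definitions, and I would present it as such.
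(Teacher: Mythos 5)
Your argument is correct and is exactly the intended one: the paper states this Fact without proof (calling it straightforward), and the straightforward justification is precisely your observation that $\lambda_{L,w}$ lies in $\mathrm{Div}(|\theta(L)|,|w|+\delta_L(w),\mu_b)\subseteq{\mu_b}^{|\theta(L)|}$ and that $0\not\in\mu_b$ forces every entry to be at least $1$. Nothing is missing.
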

\begin{fact}
  \label{fact:lambda-l-w-theta-l-length-equal}
  We have $|\lambda_{L,w}|=|\theta(L)|$ for any $L \in \mathcal{L}$ and $w \in L$.
\end{fact}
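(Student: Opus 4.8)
The plan is simply to unwind the definition of $\lambda_{L,w}$. By Definition~\ref{defi:lambda-l-w}, the sequence $\lambda_{L,w}$ is (a choice of) an element of $\mathrm{Div}(|\theta(L)|,|w|+\delta_{L}(w),\mu_b)$, and this set is guaranteed to be nonempty precisely because $\theta$ and $\delta_{\mathcal{L}}$ satisfy condition~(C), in particular clause~(c2) of Definition~\ref{defi:condition-syn}. Hence the entire content of the claim reduces to showing that every element of a set of the form $\mathrm{Div}(n,m,\mu)$ is a sequence of length exactly $n$.

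First I would recall the definition $\mathrm{Div}(n,m,\mu) \overset{\mathrm{def}}{=} \{\lambda \in \mu^n \mid \sum \lambda = m\}$, so in particular $\mathrm{Div}(n,m,\mu) \subseteq \mu^n$. By the conventions for $n$-term Cartesian powers set up in Section~\ref{sec:preliminaries}, every element of $\mu^n$ is a sequence over $\mu$ of length exactly $n$; in particular $|\lambda| = n$ for every $\lambda \in \mathrm{Div}(n,m,\mu)$. Instantiating $n = |\theta(L)|$, $m = |w| + \delta_{L}(w)$, and $\mu = \mu_b$, and applying this to $\lambda_{L,w} \in \mathrm{Div}(|\theta(L)|,|w|+\delta_{L}(w),\mu_b)$, yields $|\lambda_{L,w}| = |\theta(L)|$, as required.

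I do not expect any genuine obstacle: the statement is a direct consequence of the definition of $\lambda_{L,w}$ together with the definition of $\mathrm{Div}$, so no induction, case analysis, or appeal to the earlier propositions and lemmas is needed. The only real "step" is to cite Definition~\ref{defi:lambda-l-w} and the definition of $\mathrm{Div}$ and observe the length bookkeeping; this is why the fact is stated as being straightforward to show.
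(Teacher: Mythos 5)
Your proof is correct and matches the intended argument: the paper offers no explicit proof (it labels the fact as straightforward), and the only content is exactly what you identify, namely that $\lambda_{L,w}\in\mathrm{Div}(|\theta(L)|,|w|+\delta_L(w),\mu_b)\subseteq\mu_b^{\,|\theta(L)|}$ by Definition~\ref{defi:lambda-l-w}, so its length is $|\theta(L)|$ by the definition of the $n$-fold Cartesian power.
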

\begin{fact}
  \label{fact:p-l-w-neq-emptyset}
  If $0 \not\in \mu_b$ holds,
  we have that
  for any $L \in \mathcal{L}$, $w \in L$, and integer $i$ such that $1 \leq i \leq |\lambda_{L,w}|$,
  we have $P(L,w)_{\lambda_{L,w}}^{(i)}$ includes some rule other than $S \to x \; (x \in \Sigma \cup \{ \epsilon \})$.
\end{fact}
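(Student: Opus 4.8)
The plan is to read the conclusion directly off the definitions (D1) and (D2), using the hypothesis $0\notin\mu_b$ only through Fact~\ref{fact:lambda-l-w-elements-geq-1}, which guarantees that $\lambda_{L,w}\in{\mathbb{Z}_{\geq 1}}^+$, i.e.\ $\lambda_{L,w}[j]\geq 1$ for every $j\in[1,|\lambda_{L,w}|]$. The key structural observation from (D1) is that among all rules of $P(L,w)$ the only one whose left-hand side is $S$ is $r_1^{(L,w)}$, that $r_0^{(L,w)}$ has left-hand side $Z^{(L,w)}\neq S$, and that when $|w|+\delta_{L}(w)\geq 2$ the rule $r_1^{(L,w)}$ equals $S\to w[1]A_1^{(L,w)}$, whose right-hand side has length $2$ and hence is not in $\Sigma\cup\{\epsilon\}$. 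Consequently the only rule of $P(L,w)$ of the forbidden shape $S\to x$ with $x\in\Sigma\cup\{\epsilon\}$ is $r_1^{(L,w)}$, and only in the degenerate case $|w|+\delta_{L}(w)=1$ (where $P(L,w)=\{S\to w\}$ with $|w|=1$).

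Fix $L\in\mathcal{L}$, $w\in L$, and $i\in[1,|\lambda_{L,w}|]$, and split on the shape of $\lambda_{L,w}$. If $\lambda_{L,w}=<1>$, then $|\lambda_{L,w}|=1$ forces $i=1$, and (D2) gives $P(L,w)_{\lambda_{L,w}}^{(1)}=\{r_0^{(L,w)},r_1^{(L,w)}\}$; since $r_0^{(L,w)}:Z^{(L,w)}\to\epsilon$ is not of the form $S\to x$, we are done. Otherwise $\lambda_{L,w}\neq<1>$, so $|w|+\delta_{L}(w)=\sum\lambda_{L,w}\geq 2$, and (D2) gives $P(L,w)_{\lambda_{L,w}}^{(i)}=\{\,r_k^{(L,w)}\mid \sum_{j=1}^{i-1}\lambda_{L,w}[j]+1\leq k\leq\sum_{j=1}^{i}\lambda_{L,w}[j]\,\}$, which is non-empty because $\lambda_{L,w}[i]\geq 1$. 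If this block contains some $r_k^{(L,w)}$ with $k\geq 2$, that rule has left-hand side $A_{k-1}^{(L,w)}\neq S$ and we are done; this disposes of every $i\geq 2$, since the smallest index in the block is then $\sum_{j=1}^{i-1}\lambda_{L,w}[j]+1\geq i\geq 2$, and also of the case $i=1$ with $\lambda_{L,w}[1]\geq 2$, where $r_2^{(L,w)}$ lies in the block. The only remaining subcase is $i=1$ and $\lambda_{L,w}[1]=1$, where the block is the singleton $\{r_1^{(L,w)}\}$; here (D1), together with $|w|+\delta_{L}(w)\geq 2$, gives $r_1^{(L,w)}:S\to w[1]A_1^{(L,w)}$, which is not of the form $S\to x$ with $x\in\Sigma\cup\{\epsilon\}$. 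In every subcase $P(L,w)_{\lambda_{L,w}}^{(i)}$ contains a rule other than $S\to x$.

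The argument is pure bookkeeping, and I do not expect a genuine obstacle. The only spot demanding a moment's care is the last subcase, where the block degenerates to the single rule $r_1^{(L,w)}$ and one must argue from its precise form $S\to w[1]A_1^{(L,w)}$ rather than from the left-hand-side criterion used everywhere else; the companion degenerate case $|w|+\delta_{L}(w)=1$ is the other place where the argument branches, and there it is exactly the rule $r_0^{(L,w)}$ thrown in by (D2) that rescues the conclusion.
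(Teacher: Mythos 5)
Your proof is correct: the case split on $\lambda_{L,w}=<1>$ versus $\lambda_{L,w}\neq<1>$, the use of Fact \ref{fact:lambda-l-w-elements-geq-1} to get non-empty blocks, and the careful handling of the singleton block $\{r_1^{(L,w)}\}$ via the explicit form $S \to w[1]A_1^{(L,w)}$ all check out against (D1) and (D2). The paper itself omits the proof as ``straightforward,'' and your argument is exactly the routine verification it intends, so there is nothing to compare beyond that.
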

\begin{fact}
  \label{fact:different-p-l-w-dont-have-same-element}
  Assume that $0 \not\in \mu_b$ holds.
  For any $L \in \mathcal{L}$, $L' \in \mathcal{L}$, $w \in L$, $w' \in L'$, integer $i$ such that $1 \leq i \leq |\lambda_{L,w}|$, and integer $i'$ such that $1 \leq i '\leq |\lambda_{L',w'}|$,
  we have that $P(L,w)_{\lambda_{L,w}}^{(i)} \cap P(L',w')_{\lambda_{L',w'}}^{(i')} \cap(P - \{ S \to x \mid x \in \Sigma \cup \{ \epsilon \} \}) \neq \emptyset$ implies $L=L'$, $w=w'$, and $i=i'$.
\end{fact}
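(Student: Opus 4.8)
The plan is to exploit the key design feature of the construction (D1): apart from the start symbol $S$, every decorated nonterminal $A_i^{(L,w)}$ and $Z^{(L,w)}$ occurring in $P$ is attached to a unique pair $(L,w)$ (this is the ``all nonterminals different from each other except $S$'' remark just before (D1)). So I would begin by classifying the rules of $P$ that are \emph{not} of the form $S\to x$ with $x\in\Sigma\cup\{\epsilon\}$. By inspection of the clauses defining $P(L,w)$ and $P_0$, each such rule has exactly one of the shapes: (i) $r_k^{(L,w)}$ with left-hand side $A_{k-1}^{(L,w)}$ for a unique triple $(L,w,k)$ with $k\ge 2$; (ii) the $\epsilon$-rule $r_0^{(L,w)}:Z^{(L,w)}\to\epsilon$ in $P_0$, for a unique pair $(L,w)$; or (iii) the first rule $r_1^{(L,w)}:S\to w[1]A_1^{(L,w)}$ of some $P(L,w)$ with $|w|+\delta_L(w)\ge 2$, for a unique pair $(L,w)$. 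In cases (i) and (ii) the left-hand side already names the owner $(L,w)$; in case (iii) the left-hand side is the shared symbol $S$, so the owner must be read off the right-hand side $w[1]A_1^{(L,w)}$ instead. The rule $S\to w$ that occurs when $|w|+\delta_L(w)=1$ always lies in the excluded set $\{S\to x\mid x\in\Sigma\cup\{\epsilon\}\}$, so it never arises here.

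Next I would take $r$ to be an element of the intersection in the statement. Since $r\in P(L,w)_{\lambda_{L,w}}^{(i)}\subseteq P(L,w)$ and also $r\in P(L',w')_{\lambda_{L',w'}}^{(i')}\subseteq P(L',w')$, I apply the above classification to $r$ relative to the $(L,w)$-side and relative to the $(L',w')$-side and observe that both applications land in the same case and return the same owner; by the freshness of the decorated nonterminals this forces $L=L'$ and $w=w'$ (and in case (i), the same index $k$). That disposes of the first two conclusions.

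For the remaining conclusion $i=i'$, I would use that $\lambda_{L,w}=\lambda_{L',w'}$ now, and invoke Fact \ref{fact:lambda-l-w-elements-geq-1} (valid because $0\notin\mu_b$) to get that every entry of $\lambda_{L,w}$ is at least $1$. Hence the index sets in the ``otherwise'' clause of (D2) are nonempty, pairwise disjoint consecutive intervals partitioning $\{1,\dots,|P(L,w)|\}$, while $r_0^{(L,w)}$ is placed in a block only when $\lambda_{L,w}=\langle 1\rangle$, in which case it sits (together with $r_1^{(L,w)}$) in block $1$. Thus the index of $r$ inside $P(L,w)$ — namely $k$ in case (i), $0$ in case (ii), $1$ in case (iii) — belongs to exactly one of the blocks $P(L,w)_{\lambda_{L,w}}^{(1)},\dots,P(L,w)_{\lambda_{L,w}}^{(|\lambda_{L,w}|)}$; in case (i) that block is the one indexed by $(\lambda_{L,w},k)$ by Fact \ref{fact:lambda}, and in cases (ii) and (iii) it is block $1$. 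Since $r$ lies in both $P(L,w)_{\lambda_{L,w}}^{(i)}$ and $P(L,w)_{\lambda_{L,w}}^{(i')}$, I conclude $i=i'$.

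The argument is essentially bookkeeping, and I expect no deep obstacle; the only places that need real care are the two degenerate pieces — the $\epsilon$-rule $r_0^{(L,w)}$, which is glued into a block only in the $\lambda_{L,w}=\langle 1\rangle$ case, and the start rule $r_1^{(L,w)}$, whose left-hand side is the shared symbol $S$, so that in case (iii) the owner has to be recovered from the right-hand side rather than the left. Once these special rules are isolated, every case follows immediately from the distinctness of the decorated nonterminals.
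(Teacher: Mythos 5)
Your proof is correct. The paper in fact gives no proof of this Fact (it is dismissed as ``straightforward''), and your bookkeeping argument --- reading off the unique owner $(L,w)$ from the decorated nonterminals $A_k^{(L,w)}$ and $Z^{(L,w)}$, excluding the shared rules $S\to x$ with $x\in\Sigma\cup\{\epsilon\}$, and then using $0\notin\mu_b$ (via Fact \ref{fact:lambda-l-w-elements-geq-1} and Fact \ref{fact:lambda}) to get that the blocks of (D2) for a fixed $(L,w)$ are pairwise disjoint --- is exactly the intended argument. One small imprecision: the inclusion $P(L,w)_{\lambda_{L,w}}^{(i)}\subseteq P(L,w)$ you assert is false when $\lambda_{L,w}=\langle 1\rangle$, since that block also contains $r_0^{(L,w)}\in P_0$; but your case (ii) and your final paragraph already treat $r_0^{(L,w)}$ explicitly, so the argument itself is unaffected.
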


We will next give the construction of $C_*$ using the definition of $\theta$ and $\lambda_{L,w} \mbox{'s} \; (L \in \mathcal{L}, w \in L)$.
For each $t \in \Gamma$, $\phi(t)$ is constructed in the following manner:
starting from the initialization $\phi(t) = \emptyset$,
for each $L \in \mathcal{L}$, $w \in L$, and $i$ with $1 \leq i \leq |\lambda_{L,w}|$,
we put the elements of $P(L,w)_{\lambda_{L,w}}^{(i)}$ into $\phi(t)$ if $\theta(L)[i] \preceq t$ holds.
Formally, $C_*$ is defined as follows\footnote{
Note that for a class $\mathcal{X}$ of sets, we define $\bigcup \mathcal{X} \overset{\mathrm{def}}{\equiv} \bigcup_{X \in \mathcal{X}} X$.
}:
\begin{align*}
  \mbox{(D3)} \quad & C_*=(\Gamma,\phi,T), \\
  & \quad \phi(t) = \bigcup \{ P(L,w)_{\lambda_{L,w}}^{(i)} \mid L \in \mathcal{L}, w \in L, i \in [1,|\theta(L)|], \mbox{ and }\theta(L)[i] \preceq t \} \\
  & \quad\quad\quad\quad\quad\quad\quad\quad\quad\quad\quad\quad \quad\quad\quad\quad\quad\quad\quad\quad\quad\quad\quad\quad (\mbox{for any } t \in \Gamma), \\
  & \quad T = \theta(\mathcal{L}).
\end{align*}
Note that by Fact \ref{fact:lambda-l-w-theta-l-length-equal}, $[1,|\theta(L)|] = [1,|\lambda_{L,w}|]$ holds for any $L \in \mathcal{L}$ and $w \in L$.

We will give important remarks Remark \ref{rem:relation-r-phi-lambda}, Remark \ref{rem:preceq-preceq-c-equal}, and Remark \ref{rem:phi-of-Cstar-is-injective}
about the construction of $H_*$ and $C_*$.
In order to make the discussion clear, we recall the following setting once again.

Let $\mathcal{L}$ be a finite class of non-empty finite languages over a finite alphabet $\Sigma$,
$M=(\mu_t,\mu_b)$ be a GC, and
$(\Gamma,\preceq)$ be a partially ordered finite control alphabet.
Assume that there exist
an injection $\theta : \mathcal{L} \to \Gamma^+$ and
a class $\delta_{\mathcal{L}}=\{ \delta_L \mid L \in \mathcal{L} \}$ of Boolean functions $\delta_L$ from $L \; (\in \mathcal{L})$ to $\{ 0,1 \}$
satisfying the condition (C).
For $L \in \mathcal{L}$ and $w \in L$,
by $\lambda_{L,w}$, we denote an integer sequence $\lambda$ satisfying the statements (s1) and (s2) of Definition \ref{defi:condition-syn}.
Then, by using these $\lambda_{L,w}$'s, we construct an RLUB $H_*=(G,M)$ based on (D1) and a control system $C_*=(\Gamma,\phi,T)$ based on (D2) and (D3).

\begin{rem}
  \label{rem:relation-r-phi-lambda}
  Assume that $0 \not\in \mu_b$ holds.
  Let $L \in \mathcal{L}$, $w \in L$, $j \in [1,|P(L,w)|]$, and $t \in \Gamma$.
  We have that $r_j^{(L,w)} \in \phi(t)$ holds if and only if $\theta(L)[(\lambda_{L,w},j)] \preceq t$ holds.
\end{rem}
\begin{proof}
  By Fact \ref{fact:lambda-l-w-elements-geq-1},
  we have $\lambda_{L,w} \in {\mathbb{Z}_{\geq 1}}^+$.
  Then,
  by the definition (D2) and Fact \ref{fact:lambda}, we have that
  \begin{align}
    \label{eq:rem:relation-r-phi-lambda:relation-r-lambda}
    \mbox{for any } i \in [1,|\lambda_{L,w}|], \; \;r_j^{(L,w)} \in P(L,w)_{\lambda_{L,w}}^{(i)} \mbox{holds if and only if } i=(\lambda_{L,w},j) \mbox{ holds}.
  \end{align}
  
  Assume that $r_j^{(L,w)} \in \phi(t)$ holds.
  By the definition (D3), we have that there exists an integer $i' \in [1,|\theta(L)|]$ such that $r_j^{(L,w)} \in P(L,w)_{\lambda_{L,w}}^{(i')}$ and $\theta(L)[i'] \preceq t$ hold.
  By (\ref{eq:rem:relation-r-phi-lambda:relation-r-lambda}), we have $i'=(\lambda_{L,w},j)$.
  Therefore, $\theta(L)[i'] \preceq t$ implies $\theta(L)[(\lambda_{L,w},j)] \preceq t$.
  We obtain the only if direction.

  Assume that $\theta(L)[(\lambda_{L,w},j)] \preceq t$ holds.
  By the definition (D3),
  we have $P(L,w)_{\lambda_{L,w}}^{((\lambda_{L,w},j))} \subseteq \phi(t)$.
  By (\ref{eq:rem:relation-r-phi-lambda:relation-r-lambda}), we have $r_j^{(L,w)} \in P(L,w)_{\lambda_{L,w}}^{((\lambda_{L,w},j))}$.
  Therefore, $r_j^{(L,w)} \in \phi(t)$ holds.
  We obtain the if direction.
\end{proof}

\begin{rem}
  \label{rem:preceq-preceq-c-equal}
  Assume that $0 \not\in \mu_b$ holds.
  Then,
  $\preceq = \preceq_{C_*}$ holds.
\end{rem}


\begin{proof}

  We will show the following claim (\ref{eq:rem:preceq-preceq-c-equal:only-if-and-only-if}):
  \begin{align}
    \label{eq:rem:preceq-preceq-c-equal:only-if-and-only-if}
    \mbox{for any } t_1 \mbox{ and } t_2 \mbox{ in } \Gamma, \; t_1 \preceq t_2 \mbox{ holds if and only if } \phi(t_1) \subseteq \phi(t_2) \mbox{ holds}.
  \end{align}

  Let $t_1$ and $t_2$ be any elements in $\Gamma$.
  
  Assume that $t_1 \preceq t_2$ holds.
  By the definition (D3), we have
  \begin{align*}
    & \phi(t_1) = \bigcup \{ P(L,w)_{\lambda_{L,w}}^{(i)} \mid L \in \mathcal{L}, w \in L, i \in [1,|\theta(L)|], \mbox{ and }\theta(L)[i] \preceq t_1 \}, \mbox{ and} \\
    & \phi(t_2) = \bigcup \{ P(L,w)_{\lambda_{L,w}}^{(i)} \mid L \in \mathcal{L}, w \in L, i \in [1,|\theta(L)|], \mbox{ and }\theta(L)[i] \preceq t_2 \}.
  \end{align*}
  Therefore, $\phi(t_1) \subseteq \phi(t_2)$ holds.
  Thus, we obtain the only if direction of (\ref{eq:rem:preceq-preceq-c-equal:only-if-and-only-if}).
  
  Assume that $\phi(t_1) \subseteq \phi(t_2)$ holds.
  Since $\mathrm{Alph}(\theta(\mathcal{L})) = \Gamma$ holds by (c1) of Definition \ref{defi:condition-syn},
  there exist $L_* \in \mathcal{L}$ and $i_* \in [1,|\theta(L_*)|]$
  such that $\theta(L_*)[i_*]=t_1$ holds.
  Let $w_* \in L_*$.
  By the assumption $\phi(t_1) \subseteq \phi(t_2)$,
  we have $P(L_*,w_*)_{\lambda_{L_*,w_*}}^{(i_*)} \subseteq \phi(t_2)$.
  Since by Fact \ref{fact:p-l-w-neq-emptyset}, $P(L_*,w_*)_{\lambda_{L_*,w_*}}^{(i_*)}$ includes some rule other than $S \to x \; (x \in \Sigma \cup \{ \epsilon \})$,
  there exist $L' \in \mathcal{L}$, $w' \in L'$, and $i' \in [1,|\theta(L')|]$
  such that $\theta(L')[i'] \preceq t_2$ and $P(L_*,w_*)_{\lambda_{L_*,w_*}}^{(i_*)} \cap P(L',w')_{\lambda_{L',w'}}^{(i')} \cap(P - \{ S \to x \mid x \in \Sigma \cup \{ \epsilon \} \}) \neq \emptyset$ hold.
  Then, by Fact \ref{fact:different-p-l-w-dont-have-same-element},
  we have $L_*=L'$, $w_*=w'$, and $i_*=i'$.
  Therefore, we have $t_1 = \theta(L_*)[i_*] = \theta(L')[i'] \preceq t_2$.
  Thus, we obtain the if direction of (\ref{eq:rem:preceq-preceq-c-equal:only-if-and-only-if}).

  Since $\phi(t_1) \subseteq \phi(t_2) \Leftrightarrow t_1 \preceq_{C_*} t_2$ holds
  by the definition of $\preceq_{C_*}$ in section \ref{sec:rlub-and-its-control},
  we have $t_1 \preceq t_2 \Leftrightarrow t_1 \preceq_{C_*} t_2$.
  Thus, we have $\preceq = \preceq_{C_*}$.
\end{proof}

\begin{rem}
  \label{rem:phi-of-Cstar-is-injective}
  Assume that $0 \not\in \mu_b$ holds.
  Then, the control function $\phi$ of $C_*$ is an injection.
\end{rem}
\begin{proof}
  Since $0 \not\in \mu_b$ holds,
  we have the claim (\ref{eq:rem:preceq-preceq-c-equal:only-if-and-only-if}) of Remark \ref{rem:preceq-preceq-c-equal}.
  Let $t_1,t_2 \in \Gamma$ such that $\phi(t_1)=\phi(t_2)$ holds.
  By the claim (\ref{eq:rem:preceq-preceq-c-equal:only-if-and-only-if}) of Remark \ref{rem:preceq-preceq-c-equal},
  $\phi(t_1) \subseteq \phi(t_2)$ and $\phi(t_2) \subseteq \phi(t_1)$ imply $t_1 \preceq t_2$ and $t_2 \preceq t_1$.
  Thus, since $\preceq$ is antisymmetric, we have $t_1 = t_2$, which implies that $\phi$ is an injection.
\end{proof}

\begin{rei}
  \label{rei:example-of-c-syn}

  We consider $\mathcal{L}_{\dag}$, $M_{\dag}$, $\Gamma_{\dag}$, $\preceq_{\dag}$, $\theta_{\dag}$, and $\delta_{\mathcal{L}_{\dag}}$ defined in Example \ref{rei:jouken-1}
  and consider $H_*$ defined in Example \ref{rei:example-of-h-syn}.
  For any $L \in \mathcal{L}_{\dag}$ and $w \in L$, we divide $P(L,w)$ using $\lambda_{L,w}$'s
  which were found during the verification steps of the condition (C) (see Example \ref{rei:jouken-1}).
  For example, by the definition (D2), we divide $P(L_1,a^{15})$ using $\lambda_{L_1,a^{15}}=<4,5,6>$ into the following three subsets:
  $P(L_1,a^{15})_{<4,5,6>}^{(1)} = \{ r_{1}^{(L_1,a^{15})},\ldots,r_{4}^{(L_1,a^{15})} \}$,
  $P(L_1,a^{15})_{<4,5,6>}^{(2)} = \{ r_{5}^{(L_1,a^{15})},\ldots,r_{9}^{(L_1,a^{15})} \}$, and
  $P(L_1,a^{15})_{<4,5,6>}^{(3)} = \{ r_{10}^{(L_1,a^{15})},\ldots,r_{15}^{(L_1,a^{15})} \}$.

  By the definition (D3) based on $\theta_{\dag}$ and $\lambda_{L,w}$'s,
  we define $C_*=(\Gamma,\phi,T)$, where
  $\phi(t_1)=P(L_1,a^{15})_{<4,5,6>}^{(1)} \cup P(L_2,a^{15})_{<8,7>}^{(1)} \cup P(L_2,b^{7})_{<4,4>}^{(1)}$,
  $\phi(t_2)=P(L_1,a^{15})_{<4,5,6>}^{(2)} \cup P(L_3,c^{5})_{<5>}^{(1)}$, and
  $\phi(t_3)=P(L_1,a^{15})_{<4,5,6>}^{(2)} \cup P(L_1,a^{15})_{<4,5,6>}^{(3)} \cup P(L_2,a^{15})_{<8,7>}^{(2)} \cup P(L_2,b^{7})_{<4,4>}^{(2)} \cup P(L_3,c^{5})_{<5>}^{(1)} \cup P(L_4,c^{5})_{<5>}^{(1)} \cup P(L_4,d^{4})_{<4>}^{(1)}$, and
  $T = \theta_{\dag}(\mathcal{L}_{\dag}) \;\; (= \{ t_1 t_2 t_3, t_1 t_3, t_2, t_3 \})$.
  Note that $\phi(t_2) \subseteq \phi(t_3)$ holds and
  that $\preceq = \preceq_{C_*}$ holds.
  \halmos
\end{rei}

\subsection{Necessary and sufficient conditions for the controlled generation of RLUBs}
\label{subsec:proof-characterization-syn}

We first show that
the existence of $\theta$ and $\delta_{\mathcal{L}}$ satisfying the condition (C) is a sufficient condition for the controlled synchronous generation of RLUBs in the erasing mode.
\begin{theo}
  \label{theo:sufficient-syn}

  Let $\mathcal{L}$ be a finite class of non-empty finite languages over a finite alphabet $\Sigma$,
  $M=(\mu_t,\mu_b)$ be a GC with $0 \not\in \mu_t$, and
  $(\Gamma,\preceq)$ be a partially ordered finite control alphabet.
  Assume that
  there exist
  an injection $\theta:\mathcal{L} \to \Gamma^+$ and
  a class $\delta_{\mathcal{L}}=\{ \delta_L \mid L \in \mathcal{L} \}$ of Boolean functions $\delta_L$ from $L \; (\in \mathcal{L})$ to $\{ 0,1 \}$
  satisfying the condition (C) with respect to $\mathcal{L}$, $M$, and $(\Gamma,\preceq)$.
  Then,
  there exist
  an RLUB $H=(G,M)$ and a control system $C=(\Gamma,\phi,T)$ for $H$
  such that
  (i) $\mathrm{Alph}(T)=\Gamma$ holds,
  (ii) $\preceq = \preceq_{C}$ holds,
  (iii) $\mathcal{L}_e^{\mathrm{syn}}(H,C)=\mathcal{L}$ holds, and
  (iv) for any $\tau_1, \tau_2 \in T$, $\tau_1 \neq \tau_2$ implies $L_e^{\mathrm{syn}}(H,C,\tau_1) \neq L_e^{\mathrm{syn}}(H,C,\tau_2)$.

\end{theo}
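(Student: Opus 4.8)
The plan is to take $H = H_*$ and $C = C_*$, the RLUB and control system constructed from $\theta$ and $\delta_{\mathcal{L}}$ in Section \ref{subsec:construction-of-h-c-syn}, and to verify (i)--(iv) directly. Since $0 \not\in \mu_t$ and $\mu_b \subseteq \mu_t$, we have $0 \not\in \mu_b$, so all the preparatory remarks of Section \ref{subsec:construction-of-h-c-syn} apply: by Remark \ref{rem:phi-of-Cstar-is-injective} the function $\phi$ is injective, so $C_*$ is a genuine control system, and by Remark \ref{rem:preceq-preceq-c-equal} we have $\preceq = \preceq_{C_*}$, which is (ii); statement (i) is immediate from $T = \theta(\mathcal{L})$ and (c1) of Definition \ref{defi:condition-syn}. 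Everything then reduces to the sharp claim that $L_e^{\mathrm{syn}}(H_*,C_*,\theta(L)) = L$ for every $L \in \mathcal{L}$: this gives (R1') and (R2'), hence (iii), and then (iv) follows because $\theta$ is an injection and the members of $\mathcal{L}$ are pairwise distinct.

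So fix $L \in \mathcal{L}$ and write $\theta(L) = t_1 \cdots t_n$. By Theorem \ref{theo:generation-kimoto} it suffices to treat the two generative conditions $(\mu_t,\mu_t)$ and $(\mu_b,\mu_b)$, i.e.\ the two forced behavior sequences $R_i = \phi(t_i)^{\mu_t}$ and $R_i = \phi(t_i)^{\mu_b}$; in each case the sets $X_1, \ldots, X_n$ with $\{S\}{\underset{R_1}{\Rightarrow}}^{e}X_1{\underset{R_2}{\Rightarrow}}^{e}\cdots{\underset{R_n}{\Rightarrow}}^{e}X_n$ are uniquely determined, and I must show $X_n \cap \Sigma^* = L$ and $(X_1 \cup \cdots \cup X_{n-1}) \cap \Sigma^* = \emptyset$. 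The inclusion $L \subseteq X_n \cap \Sigma^*$ is the easy half: for $w \in L$, run the intended derivation that at step $i$ applies the block $P(L,w)_{\lambda_{L,w}}^{(i)}$ as a sequence of $\lambda_{L,w}[i]$ consecutive rules; since $\lambda_{L,w}[i] \in \mu_b \subseteq \mu_t$ and, by (D3) and reflexivity of $\preceq$, these rules belong to $\phi(t_i)$, the sequence lies in $\phi(t_i)^{\mu_b} \subseteq \phi(t_i)^{\mu_t}$, and an induction over the steps --- using Fact \ref{fact:lambda-l-w-elements-geq-1} and Fact \ref{fact:lambda-l-w-theta-l-length-equal} to see the $n$ blocks are all non-empty --- shows $w \in X_n$.

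The substantive half is $X_n \cap \Sigma^* \subseteq L$ together with condition (r3). The structural fact driving it, a consequence of (D1) in which $S$ is the only nonterminal shared between distinct pairs, is the following: if a terminal string $v$ lies in $X_i$, then tracing a generating chain $\{S\} = x_0, x_1, \ldots, x_i = v$ backwards forces $v = w_0$ for some $L_0 \in \mathcal{L}$ with $w_0 \in L_0$, and forces the rules $r_1^{(L_0,w_0)}, \ldots, r_{|P(L_0,w_0)|}^{(L_0,w_0)}$ to have been applied in order, partitioned into $i$ non-empty consecutive blocks, one per step --- non-emptiness because $0 \not\in \mu_t$ excludes $\epsilon$-derivations, and $x_0, \ldots, x_{i-1}$ are not terminal strings since each lies in some $E(X_j, R_{j+1})$ and hence carries a nonterminal. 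Letting $\lambda''$ denote the sequence of per-step rule counts, one checks (A1)--(A3) of Definition \ref{def:Rightarrow-M-preceq}: (A1) from $\sum \lambda_{L_0,w_0} = |P(L_0,w_0)| = \sum \lambda''$; (A2) because $\lambda_{L_0,w_0} \in {\mu_b}^{|\theta(L_0)|}$ holds automatically and every block length lies in $\mu_t$ (in the $\mu_b$-case even in $\mu_b \subseteq \mu_t$); and (A3) via Fact \ref{fact:lambda} (to identify the step of each rule) and Remark \ref{rem:relation-r-phi-lambda}, which turns the membership $r_k^{(L_0,w_0)} \in \phi(\theta(L)[(\lambda'',k)])$ into the relation $\theta(L_0)[(\lambda_{L_0,w_0},k)] \preceq \theta(L)[(\lambda'',k)]$. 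This yields $(\theta(L_0),\lambda_{L_0,w_0}) \underset{M,\preceq}{\Rightarrow} (\theta(L)[1,i],\lambda'')$ with $\lambda'' \in {\mathbb{Z}_{\geq 1}}^+$. If $i = n$, then $\theta(L)[1,n] = \theta(L)$, and (s1) of Definition \ref{defi:condition-syn} applied to $(L_0,w_0)$, with $L' = L$ there, gives $w_0 = v \in L$, so $X_n \cap \Sigma^* \subseteq L$. If $i < n$, then $\theta(L)[1,i]$ is a proper prefix of $\theta(L)$, contradicting (s2) applied to $(L_0,w_0)$, again with $L' = L$; hence no terminal string occurs in $X_1 \cup \cdots \cup X_{n-1}$. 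Consequently both extreme behavior sequences give $X_n \cap \Sigma^* = L$ and $(X_1 \cup \cdots \cup X_{n-1}) \cap \Sigma^* = \emptyset$, and Theorem \ref{theo:generation-kimoto} delivers $L_e^{\mathrm{syn}}(H_*,C_*,\theta(L)) = L$, completing (iii) and with it the theorem.

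The step I expect to be the main obstacle is exactly this backward trace through the erasing mode: one must argue rigorously that a terminal string in $X_i$ determines a single pair $(L_0,w_0)$, that its defining rule sequence was applied in full and in order, that the induced partition into non-empty per-step blocks is legitimate, and that the degenerate cases ($n = 1$, or $|w| + \delta_L(w) = 1$ with $P(L,w) = \{S \to w\}$) behave as claimed; once that bookkeeping is in place, the passage to $\underset{M,\preceq}{\Rightarrow}$ and the appeals to (s1) and (s2) are routine.
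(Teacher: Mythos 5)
Your proposal is correct and follows essentially the same route as the paper's proof: use the constructed $H_*$ and $C_*$, get (i) from $T=\theta(\mathcal{L})$ with (c1) and (ii) from Remark \ref{rem:preceq-preceq-c-equal}, reduce to the two extreme behavior sequences via Theorem \ref{theo:generation-kimoto}, prove the easy inclusion by applying the blocks $P(L,w)_{\lambda_{L,w}}^{(i)}$, and prove the backward-trace claim that a terminal string in $X_k$ yields $(\theta(L_0),\lambda_{L_0,w_0}) \underset{M,\preceq}{\Rightarrow} (\theta(L)[1,k],\lambda'')$, finishing with (s1) at $k=n$ and (s2) at $k<n$. The bookkeeping you flag as the main obstacle is exactly the paper's claim (\ref{eq:theo:sufficient-syn:claim-star}), handled there with the same ingredients (Fact \ref{fact:lambda}, Remark \ref{rem:relation-r-phi-lambda}, and $0 \not\in \mu_t$ for non-emptiness of the blocks).
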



\begin{proof}

  Assume that there exist
  an injection $\theta:\mathcal{L} \to \Gamma^+$ and
  a class $\delta_{\mathcal{L}}=\{ \delta_L \mid L \in \mathcal{L} \}$ of Boolean functions $\delta_L$ from $L \; (\in \mathcal{L})$ to $\{ 0,1 \}$
  satisfying the condition (C) with respect to $\mathcal{L}$, $M$, and $(\Gamma,\preceq)$.

  Since $\theta$ and $\delta_{\mathcal{L}}$ satisfy the condition (C) with respect to $\mathcal{L}$, $M$, and $(\Gamma,\preceq)$,
  we have that
  for any $L \in \mathcal{L}$ and for any $w \in L$, there exists $\lambda \in \mathrm{Div}(|\theta(L)|,|w|+\delta_{L}(w),\mu_b)$ such that the statements (s1) and (s2) of Definition \ref{defi:condition-syn} hold, where we should recall that the statements (s1) and (s2) are as follows:
  \begin{align*}
    & \mbox{(s1)} \; \forall L' \in \mathcal{L} \left( \left( \exists \lambda' \in {\mathbb{Z}_{\geq 1}}^+ \left( (\theta(L),\lambda) \underset{M,\preceq}{\Rightarrow} (\theta(L'),\lambda') \right) \right) \mbox{ implies } w \in L' \right), \\
    & \mbox{(s2)} \; \forall L' \in \mathcal{L}, \forall \tau \in \Gamma^+
    \left(
    \begin{aligned}
      & \left( \exists \lambda' \in {\mathbb{Z}_{\geq 1}}^+ \left( (\theta(L),\lambda) \underset{M,\preceq}{\Rightarrow} (\tau,\lambda') \right) \right) \mbox{ implies } \\
      & \quad \left( \tau \mbox{ is not a proper prefix of } \theta(L') \right)
    \end{aligned}
    \right).
  \end{align*}
  We write such $\lambda$ as $\lambda_{L,w}$.

  Let $H_*=((V,\Sigma,S,P),M)$ be an RLUB defined by the definition (D1) using $\mathcal{L}$.
  Since $\mathcal{L}$ is a finite class of non-empty finite languages,
  the construction of $H_*$ leads to that the sets $V$ and $P$ of $H_*$ are finite,
  and thus, $H_*$ is well-defined.
  Let $C_*=(\Gamma,\phi,T)$ be a control system for $H_*$ defined by the definition (D3), where we use integer sequences $\lambda_{L,w}$'s and the definition (D2).
  
  Since $T=\theta(\mathcal{L})$ holds by the definition (D3)
  and the statement (c1) of Definition \ref{defi:condition-syn} holds,
  we have that (i) $\mathrm{Alph}(T)=\Gamma$ holds.

  By Remark \ref{rem:preceq-preceq-c-equal},
  (ii) $\preceq = \preceq_{C_*}$ holds.

  Let $L \in \mathcal{L}$ and $\theta(L)=t_1 \cdots t_n$ for some $n$ in $\mathbb{Z}_{\geq 1}$ and some $t_i$'s in $\Gamma$.
  Let $X_i \subseteq (V \cup \Sigma)^*$ for each $i \in [1,n]$ such that
  \begin{align}
    \label{eq:theo:sufficient-syn:derivation-detail-upper}
    \{ S \}
    \underset{{\phi(t_1)}^{{\mu}_{t}}}{\Rightarrow^e} X_1
    \underset{{\phi(t_2)}^{{\mu}_{t}}}{\Rightarrow^e}
    \cdots
    \underset{{\phi(t_n)}^{{\mu}_{t}}}{\Rightarrow^e} X_n.
  \end{align}
  Let $Y_i \subseteq (V \cup \Sigma)^*$ for each $i \in [1,n]$ such that
  \begin{align}
    \label{eq:theo:sufficient-syn:derivation-detail-lower}
    \{ S \}
    \underset{{\phi(t_1)}^{{\mu}_{b}}}{\Rightarrow^e} Y_1
    \underset{{\phi(t_2)}^{{\mu}_{b}}}{\Rightarrow^e}
    \cdots
    \underset{{\phi(t_n)}^{{\mu}_{b}}}{\Rightarrow^e} Y_n .
  \end{align}
  By the definition (D3), for any $w \in L$, we have $P(L,w)_{\lambda_{L,w}}^{(i)} \subseteq \phi(\theta(L)[i]) = \phi(t_i)$ for each $i \in [1,n]$.
  Moreover, by the definition (D2), we have $|P(L,w)_{\lambda_{L,w}}^{(i)}| = \lambda_{L,w}[i] \in \mu_b \subseteq \mu_t$.
  Therefore, at each stage $i$ of (\ref{eq:theo:sufficient-syn:derivation-detail-upper}) (and (\ref{eq:theo:sufficient-syn:derivation-detail-lower}), respectively),
  we can apply the rules  in $P(L,w)_{\lambda_{L,w}}^{(i)}$ using the behavior $\phi(t_i)^{\mu_t}$ (and $\phi(t_i)^{\mu_b}$, respectively).
  Thus, we have that
  \begin{align}
    \label{eq:theo:sufficient-syn:sono1}
    \mbox{for any } w \in L, \mbox{ it holds that }w \in X_n \mbox{ and } w \in Y_n .
  \end{align}
  
  We will show the following claim (\ref{eq:theo:sufficient-syn:claim-star}):
  \begin{align}
    \label{eq:theo:sufficient-syn:claim-star}
    \left.
    \begin{aligned}
      & \mbox{for any } w'' \in \Sigma^* \mbox{ and } k \in [1,n], \mbox{ we have that} \\
      & \; w'' \in X_k \mbox{ implies that there exist } L'' \in \mathcal{L} \mbox{ with } w'' \in L'' \mbox{ and } \lambda \in {\mathbb{Z}_{\geq 1}}^+ \\
      & \; \mbox{satisfying } (\theta(L''),\lambda_{L'',w''}) \underset{M,\preceq}{\Rightarrow} (\theta(L)[1,k],\lambda).
    \end{aligned}
    \right\}
  \end{align}
  Let $w'' \in \Sigma^*$ and $k \in [1,n]$.
  Assume that $w'' \in X_k$ holds.
  By the definition (D1), $w''$ is generated using the rules of $P(L'',w'')$ for some $L'' \in \mathcal{L}$ with $w'' \in L''$.
  Then, by (\ref{eq:theo:sufficient-syn:derivation-detail-upper}),
  there exist behaviors $\alpha_1, \ldots, \alpha_k \in {P(L'',w'')}^*$
  such that
  $\alpha_1 \cdots \alpha_k = R(L'',w'')$ and
  $\alpha_1 \in {\phi(t_1)}^{{\mu}_{t}}$, $\ldots$, $\alpha_k \in {\phi(t_k)}^{{\mu}_{t}}$.
  Let $\lambda$ be an integer sequence such that $\lambda=<|\alpha_1|,\ldots,|\alpha_k|>$.
  By $0 \not\in \mu_t$, we have $ \lambda \in {\mathbb{Z}_{\geq 1}}^+$.
  We will show that $(\theta(L''),\lambda_{L'',w''}) \underset{M,\preceq}{\Rightarrow} (\theta(L)[1,k],\lambda)$ holds.
  Since $\lambda_{L'',w''} \in \mathrm{Div}(|\theta(L'')|,|w''|+\delta_{L''}(w''),\mu_b)$ holds,
  we have $\sum \lambda_{L'',w''} = |w''|+\delta_{L''}(w'')$.
  By the definition (D1), we have $|w''|+\delta_{L''}(w'') = |P(L'',w'')|$.
  Then, $\sum \lambda_{L'',w''} = |P(L'',w'')|$ holds.
  Moreover, $\sum \lambda = |\alpha_1| + \cdots + |\alpha_k| = |R(L'',w'')| = |P(L'',w'')|$,
  which implies (A1) of Definition \ref{def:Rightarrow-M-preceq}.
  Let $m$ be an integer such that $\sum \lambda = \sum \lambda_{L'',w''} = m$ holds.
  Let $i$ be any integer in $[1,k]$ and $j$ be any integer with $\sum_{p=1}^{i-1} \lambda[p] + 1 \leq j \leq \sum_{p=1}^{i} \lambda[p]$.
  By the definition of $\lambda$, $\alpha_i = R(L'',w'')[\sum_{p=1}^{i-1} \lambda[p] + 1, \sum_{p=1}^{i} \lambda[p]]$ holds\footnote{
  We should recall that $R(L'',w'')$ is a string consisting of rules as symbols and
  $R(L'',w'')[i,j]$ with $1 \leq i \leq j \leq |R(L'',w'')|$ is the substring starting from the $i$-th letter and ending at the $j$-th letter of $R(L'',w'')$.
  }.
  Then,
  since $\alpha_i \in {\phi(t_i)}^{\mu_t}$ holds,
  we have $r_{j}^{(L'',w'')} \in \phi(t_i)$, which implies $r_{j}^{(L'',w'')} \in \phi(\theta(L)[i])$.
  Therefore, by Remark \ref{rem:relation-r-phi-lambda}, we have $\theta(L'')[(\lambda_{L'',w''},j)] \preceq \theta(L)[i]$.
  Since $i=(\lambda,j)$ holds by Fact \ref{fact:lambda}, we have $\theta(L'')[(\lambda_{L'',w''},j)] \preceq \theta(L)[(\lambda,j)]$,
  which impiles $\theta(L'')[(\lambda_{L'',w''},j)] \preceq \theta(L)[1,k][(\lambda,j)]$.
  Since $j$ can be any integer in $[1,m]$, we obtain (A3) of Definition \ref{def:Rightarrow-M-preceq}.
  We have that $|\alpha_i| \in \mu_t$ holds for any $i \in [1,k]$.
  Moreover, we have $|\lambda|=k$.
  Therefore, $\lambda \in {\mu_t}^{k} = {\mu_t}^{|\theta(L)[1,k]|}$,
  which implies (A2) of Definition \ref{def:Rightarrow-M-preceq}.
  Thus, since we obtain (A1), (A2), and (A3) of Definition \ref{def:Rightarrow-M-preceq},
  we have $(\theta(L''),\lambda_{L'',w''}) \underset{M,\preceq}{\Rightarrow} (\theta(L)[1,k],\lambda)$,
  which completes the proof of the claim (\ref{eq:theo:sufficient-syn:claim-star}).

  Let $w''' \in \Sigma^*$.
  Assume that $w''' \in X_n$ holds.
  By the claim (\ref{eq:theo:sufficient-syn:claim-star}), there exist $L''' \in \mathcal{L}$ with $w''' \in L'''$ and $\lambda \in {\mathbb{Z}_{\geq 1}}^+$
  satisfying $(\theta(L'''),\lambda_{L''',w'''}) \underset{M,\preceq}{\Rightarrow} (\theta(L)[1,n],\lambda)$.
  Note that $\theta(L)[1,n] = \theta(L)$ holds.
  Then, by (s1) of Definition \ref{defi:condition-syn}, we have $w''' \in L$.
  Thus, we have that
  \begin{align}
    \label{eq:theo:sufficient-syn:sono2}
    \mbox{for any } w''' \in \Sigma^*, w''' \in X_n \mbox{ implies } w''' \in L.
  \end{align}
  By (\ref{eq:theo:sufficient-syn:sono2}), since $w''' \in Y_n$ implies $w''' \in X_n$, we have that
  \begin{align}
    \label{eq:theo:sufficient-syn:sono3}
    \mbox{for any } w''' \in \Sigma^*, w''' \in Y_n \mbox{ implies } w''' \in L.
  \end{align}

  By (\ref{eq:theo:sufficient-syn:sono1}), (\ref{eq:theo:sufficient-syn:sono2}), and (\ref{eq:theo:sufficient-syn:sono3}),
  we have $X_n \cap \Sigma^* = Y_n \cap \Sigma^* = L$.

  Let $w'''' \in \Sigma^*$.
  Let $k$ be an integer with $1 \leq k < n$.
  Assume that $w'''' \in X_k$ holds.
  By (\ref{eq:theo:sufficient-syn:claim-star}), there exist $L'''' \in \mathcal{L}$ with $w'''' \in L''''$ and $\lambda \in {\mathbb{Z}_{\geq 1}}^+$
  satisfying $(\theta(L''''),\lambda_{L'''',w''''}) \underset{M,\preceq}{\Rightarrow} (\theta(L)[1,k],\lambda)$.
  Then, by (s2) of Definition \ref{defi:condition-syn}, we have that $\theta(L)[1,k]$ is not a proper prefix of $\theta(L)$,
  which is a contradiction since $k < n$ holds.
  Therefore, we have that $w'''' \not\in X_k$ holds.
  Since $Y_i \subseteq X_i$ holds for $i \in [1,n]$ by $\mu_b \subseteq \mu_t$, we have that $w'''' \not\in Y_k$ holds.
  Therefore, we have $(X_1 \cup \cdots \cup X_{n-1}) \cap \Sigma^* = \emptyset$ and $(Y_1 \cup \cdots \cup Y_{n-1}) \cap \Sigma^* = \emptyset$.
  Thus, since $X_n \cap \Sigma^* = Y_n \cap \Sigma^* = L$ holds,
  we have $L_e^{\mathrm{syn}}(((V,\Sigma,S,P),(\mu_t,\mu_t)),C_*,\theta(L)) = L$ and $ L_e^{\mathrm{syn}}(((V,\Sigma,S,P),(\mu_b,\mu_b)),C_*,\theta(L)) = L$.
  By Theorem \ref{theo:generation-kimoto}, we have $L_e^{\mathrm{syn}}(H_*,C_*,\theta(L)) = L$.
  Therefore, we have (iii) $\mathcal{L}_e^{\mathrm{syn}}(H_*,C_*) = \mathcal{L}$.
  Moreover, for any $\theta(L_1),\theta(L_2) \in T$, we have that (iv) $\theta(L_1) \neq \theta(L_2)$ implies $L_e^{\mathrm{syn}}(H_*,C_*,\theta(L_1)) \neq L_e^{\mathrm{syn}}(H_*,C_*,\theta(L_2))$ since $\theta$ is an injection.
\end{proof}

We next show that
the existence of $\theta$ and $\delta_{\mathcal{L}}$ satisfying the condition (C) is a necessary condition for the controlled synchronous generation of RLUBs in the erasing mode.
\begin{theo}
  \label{theo:necessary-syn}

  Let $\mathcal{L}$ be a finite class of non-empty finite languages over a finite alphabet $\Sigma$,
  $M=(\mu_t ,\mu_b)$ be a GC, and
  $(\Gamma,\preceq)$ be a partially ordered finite control alphabet.
  Assume that
  there exist
  an RLUB $H=(G,M)$ and a control system $C=(\Gamma,\phi,T)$ for $H$
  such that
  (i) $\mathrm{Alph}(T)=\Gamma$ holds,
  (ii) $\preceq = \preceq_{C}$ holds,
  (iii) $\mathcal{L}_e^{\mathrm{syn}}(H,C)=\mathcal{L}$ holds, and
  (iv) for any $\tau_1, \tau_2 \in T$, $\tau_1 \neq \tau_2$ implies $L_e^{\mathrm{syn}}(H,C,\tau_1) \neq L_e^{\mathrm{syn}}(H,C,\tau_2)$.
  Then,
  there exist
  an injection $\theta:\mathcal{L} \to \Gamma^+$ and
  a class $\delta_{\mathcal{L}}=\{ \delta_L \mid L \in \mathcal{L} \}$ of Boolean functions $\delta_L$ from $L \; (\in \mathcal{L})$ to $\{0,1\}$
  satisfying the condition (C) with respect to $\mathcal{L}$, $M$, and $(\Gamma,\preceq)$.

\end{theo}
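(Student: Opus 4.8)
The plan is to extract $\theta$ and $\delta_{\mathcal{L}}$ directly from the assumed $H$ and $C$, to reduce everything to the two ``sharp'' RLUBs $(G,(\mu_t,\mu_t))$ and $(G,(\mu_b,\mu_b))$ via Theorem~\ref{theo:generation-kimoto}, and to verify (s1) and (s2) using Lemma~\ref{lemm:binaryrelation-subset}. First I would define $\theta$. By (iii) the map $\tau\mapsto L_e^{\mathrm{syn}}(H,C,\tau)$ is a well-defined surjection of $T$ onto $\mathcal{L}$ (this is precisely (R1') and (R2')), and by (iv) it is injective, hence a bijection $T\to\mathcal{L}$; let $\theta:\mathcal{L}\to\Gamma^+$ be its inverse. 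Then $\theta$ is an injection and $\theta(\mathcal{L})=T$, so (c1) is immediate from (i): $\mathrm{Alph}(\theta(\mathcal{L}))=\mathrm{Alph}(T)=\Gamma$.

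Next I would define $\delta_{\mathcal{L}}$ together with the witness sequences $\lambda_{L,w}$. Fix $L\in\mathcal{L}$ and write $\theta(L)=t_1\cdots t_n$. By (iii) and Theorem~\ref{theo:generation-kimoto} we have $L_e^{\mathrm{syn}}((G,(\mu_b,\mu_b)),C,\theta(L))=L$, and since for the GC $(\mu_b,\mu_b)$ the behaviour at step $i$ is forced to be ${\phi(t_i)}^{\mu_b}$, this yields the (unique) chain $\{S\}{\underset{{\phi(t_1)}^{\mu_b}}{\Rightarrow}}^{e}Y_1{\underset{{\phi(t_2)}^{\mu_b}}{\Rightarrow}}^{e}\cdots{\underset{{\phi(t_n)}^{\mu_b}}{\Rightarrow}}^{e}Y_n$ with $(Y_1\cup\cdots\cup Y_{n-1})\cap\Sigma^*=\emptyset$ and $Y_n\cap\Sigma^*=L$. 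For $w\in L$ we have $w\in Y_n$, and unwinding the definition of ${\underset{R}{\Rightarrow}}^{e}$ backwards from $Y_n$ produces a $G$-derivation $S=x^{(0)}\Rightarrow_{\alpha_1}x^{(1)}\Rightarrow_{\alpha_2}\cdots\Rightarrow_{\alpha_n}x^{(n)}=w$ with $\alpha_i\in{\phi(t_i)}^{\mu_b}$ and $x^{(i-1)}\in E(Y_{i-1},{\phi(t_i)}^{\mu_b})$ for every $i$. I set $\lambda_{L,w}:=<|\alpha_1|,\ldots,|\alpha_n|>$. Because $G$ is right linear, every complete derivation of $w$ from $S$ has length $|w|$ (if it ends with a rule $A\to a$) or $|w|+1$ (if it ends with a rule $A\to\epsilon$), so $\delta_L(w):=\sum\lambda_{L,w}-|w|\in\{0,1\}$ is a legitimate Boolean value and $\lambda_{L,w}\in\mathrm{Div}(|\theta(L)|,|w|+\delta_L(w),\mu_b)$; this $\lambda_{L,w}$ is the sequence $\lambda$ demanded in (c2).

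The core of the proof is then the verification of (s1) and (s2) for $\lambda_{L,w}$, and the common ingredient is the following: whenever $(\theta(L),\lambda_{L,w})\underset{M,\preceq}{\Rightarrow}(\tau,\lambda')$ with $\lambda'\in{\mathbb{Z}_{\geq 1}}^+$, the string $w$ lies in the set reached from $\{S\}$ by the ordinary ${\mu_t}$-run along $\tau$ after $|\tau|$ stages. To see this I would first show that $w=x^{(n)}$ lies in the set $X$ with $\{S\}\overset{\lambda_{L,w}}{\underset{{\phi(\theta(L))}^{\mu_b}}{\Rightarrow^e}}X$, by an induction on $i$ showing $x^{(i)}$ lies in the $i$-th set of that $l$-step chain, the point being that the predicate defining $E(\,\cdot\,,R)$ does not depend on the ambient set, so the memberships $x^{(i-1)}\in E(Y_{i-1},{\phi(t_i)}^{\mu_b})$ carry over. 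Then I would invoke Lemma~\ref{lemm:binaryrelation-subset} with $W=Y=\{S\}$, $\alpha=\theta(L)$, $\lambda_\alpha=\lambda_{L,w}$, $\beta=\tau$, $\lambda_\beta=\lambda'$ (legitimate since $\preceq=\preceq_C$ by (ii)) to get $X\subseteq Z$ for the set $Z$ with $\{S\}\overset{\lambda'}{\underset{{\phi(\tau)}^{\mu_t}}{\Rightarrow^e}}Z$, and by Remark~\ref{rem:subset-rightarrow} together with the straightforward monotonicity of ${\underset{R}{\Rightarrow}}^{e}$ in its first argument, $Z$ is contained in the $|\tau|$-th set of the ordinary ${\mu_t}$-run along $\tau$, so $w$ lies there. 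For (s1) I take $\tau=\theta(L')$: then $w$ lies in the final set of the ordinary ${\mu_t}$-run along $\theta(L')$, which by Theorem~\ref{theo:generation-kimoto} (applied to $L_e^{\mathrm{syn}}(H,C,\theta(L'))=L'$) meets $\Sigma^*$ in exactly $L'$, so $w\in L'$. For (s2), if $\tau$ were a proper prefix of some $\theta(L')$, then $w$ would lie in a strictly intermediate set of the ordinary ${\mu_t}$-run along $\theta(L')$, and all such sets are disjoint from $\Sigma^*$ by the synchronization clause (r3) contained in the definedness of $L_e^{\mathrm{syn}}(H,C,\theta(L'))$, contradicting $w\in\Sigma^*$. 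Together with (c1) this establishes (C).

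The hard part will be this last step: translating the abstract relation $\underset{M,\preceq}{\Rightarrow}$ into a concrete set inclusion through Lemma~\ref{lemm:binaryrelation-subset} while tracking the distinction between $\overset{l}{\underset{R}{\Rightarrow^{e}}}$ and ${\underset{R}{\Rightarrow}}^{e}$ and the set-independence of $E$, and then reading the contradiction off the synchronization condition. A secondary nuisance is the boundary case $0\in\mu_b$, where some entries of $\lambda_{L,w}$ vanish so that $\lambda_{L,w}\notin{\mathbb{Z}_{\geq 1}}^+$; this causes no real trouble, since Lemma~\ref{lemm:binaryrelation-subset} only requires $\lambda_\alpha\in{\mathbb{Z}_{\geq 0}}^+$, and the $E$-memberships inherited from the lower-bound run keep the ``$w\in X$'' induction valid.
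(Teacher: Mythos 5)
Your proposal is correct and takes essentially the same route as the paper's own proof: $\theta$ is the inverse of the bijection $\tau \mapsto L_e^{\mathrm{syn}}(H,C,\tau)$ given by (iii) and (iv), $\delta_L(w)$ and the witness $\lambda_{L,w}$ are read off a lower-bound ($\mu_b$) run along $\theta(L)$, and (s1)/(s2) are verified by transporting $w$ via Lemma~\ref{lemm:binaryrelation-subset} (using (ii) to replace $\preceq$ by $\preceq_C$) into the $\mu_t$-run along $\tau$, then concluding from the definedness of $L_e^{\mathrm{syn}}(H,C,\theta(L'))$, respectively from the synchronization condition (r3), exactly as in the paper. The only differences are cosmetic: you invoke Theorem~\ref{theo:generation-kimoto} where the paper directly instantiates the behaviors $R_i={\phi(t_i)}^{\mu_b}$ or ${\phi(t_i)}^{\mu_t}$ in the definition of $L_e^{\mathrm{syn}}$, and you spell out the induction and first-argument monotonicity details that the paper compresses into ``By Remark~\ref{rem:subset-rightarrow}''.
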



\begin{proof}

  Assume that an RLUB $H=(G,M)$ with $G=(V,\Sigma,S,P)$ and a control system $C=(\Gamma,\phi,T)$ for $H$ satisfy (i), (ii), (iii), and (iv).
  
  We first define $\theta$.
  By (iii) and (iv), for any $L \in \mathcal{L}$, there exists a unique $\tau_L \in T$ such that $L_e^{\mathrm{syn}}(H,C,\tau_L)=L$.
  We define $\theta(L) = \tau_L$ for any $L \in \mathcal{L}$.  
  Let $L_1,L_2 \in \mathcal{L}$.
  Then, $\theta(L_1)=\theta(L_2)$ implies 
  $L_1 = L_e^{\mathrm{syn}}(H,C,\tau_{L_1}) = L_e^{\mathrm{syn}}(H,C,\theta(L_1)) = L_e^{\mathrm{syn}}(H,C,\theta(L_2)) = L_e^{\mathrm{syn}}(H,C,\tau_{L_2}) = L_2$.
  Therefore, $\theta$ is an injection.

  We next define $\delta_{\mathcal{L}}$.
  Let $L$ be any language in $\mathcal{L}$ and $w$ be any string in $L$.
  Let $\theta(L) = t_1 \cdots t_n$ for some $n$ in ${\mathbb{Z}_{\geq 1}}$ and some $t_i$'s in $\Gamma$.
  By the definition of $\theta$, there exist $X_i \subseteq (V \cup \Sigma)^*$ for each $i \in [1,n]$ such that
  \begin{align}
    \label{eq:theo:necessary-syn:generation-detail}
    \{ S \}
    \underset{{\phi(t_1)}^{{\mu}_{b}}}{\Rightarrow^e} X_1
    \underset{{\phi(t_2)}^{{\mu}_{b}}}{\Rightarrow^e}
    \cdots
    \underset{{\phi(t_n)}^{{\mu}_{b}}}{\Rightarrow^e} X_n
    \quad \mbox{and} \quad w \in X_n .
  \end{align}
  Then, there exist $\alpha_1,\ldots,\alpha_n \in P^*$
  such that
  $S \underset{\alpha_1 \cdots \alpha_n}{\Rightarrow} w$ and
  $\alpha_1 \in {\phi(t_1)}^{{\mu}_{b}},\ldots,\alpha_n \in {\phi(t_n)}^{{\mu}_{b}}$.
  We define $\delta_{\mathcal{L}}$ as follows:
  \begin{align*}
    \delta_L(w) = \left\{
    \begin{aligned}
      & 1 \quad (\mbox{if } |\alpha_1 \cdots \alpha_n| = |w|+1) \\
      & 0 \quad (\mbox{if } |\alpha_1 \cdots \alpha_n| = |w|).
    \end{aligned}
    \right.
  \end{align*}
  Note that the length of derivation of $w$ should be either $|w|$ or $|w|+1$ since $G$ is a right linear grammar.

  We next show that $\theta$ and $\delta_{\mathcal{L}}$ satisfy the condition (C) with respect to $\mathcal{L}$, $M$, and $(\Gamma,\preceq)$.
  By the definition of $\theta$, $\theta(\mathcal{L}) \subseteq T$ holds.
  By (iii) and (iv), $T \subseteq \theta(\mathcal{L})$ holds.
  Then, we have $\theta(\mathcal{L}) = T$.
  Since (i) holds, we have $\mathrm{Alph}(\theta(\mathcal{L})) = \mathrm{Alph}(T) = \Gamma$,
  which implies (c1) of Definition \ref{defi:condition-syn}.
  We next show that (c2) holds.
  In order to show that (c2) holds, we should prove the existence of $\lambda$ which satisfies (s1) and (s2).
  
  Let $L$ be any language in $\mathcal{L}$ and $w$ be any string in $L$.
  Such $w$ always exists since $L$ is a non-empty language.
  Let $\theta(L) = t_1 \cdots t_n$ for some $n$ in ${\mathbb{Z}_{\geq 1}}$ and some $t_i$'s in $\Gamma$.
  Recall the derivation process (\ref{eq:theo:necessary-syn:generation-detail}) for generating $w$ and production rule sequences $\alpha_1,\ldots,\alpha_n \in P^*$ used for defining $\delta_L(w)$.
  We define $\lambda=<|\alpha_1|,\ldots,|\alpha_n|>$.
  Then, we show that $\lambda$ is an element in $\mathrm{Div}(|\theta(L)|,|w|+\delta_{L}(w),\mu_b)$.
  We have $|\lambda| = |\theta(L)| \; (= n)$.
  By the definition of $\delta_{\mathcal{L}}$, we have $\sum \lambda = |\alpha_1 \cdots \alpha_n| = |w|+\delta_{L}(w)$.
  Moreover, we have $\lambda[1]=|\alpha_1| \in \mu_b$, $\ldots$, $\lambda[n]=|\alpha_n| \in \mu_b$.
  Therefore, we have $\lambda \in \mathrm{Div}(|\theta(L)|,|w|+\delta_{L}(w),\mu_b)$.
  We will show that (s1) and (s2) of Definition \ref{defi:condition-syn} are satisfied for this $\lambda$.

  We first show that (s1) are satisfied.
  Let $L'$ be any language in $\mathcal{L}$.
  Let $\lambda'$ be an integer sequence in ${\mathbb{Z}_{\geq 1}}^+$ such that $(\theta(L),\lambda) \underset{M,\preceq}{\Rightarrow} (\theta(L'),\lambda')$ holds.
  Since $\preceq = \preceq_C$ holds by (ii), we have $(\theta(L),\lambda) \underset{M,\preceq_C}{\Rightarrow} (\theta(L'),\lambda')$.
  By (A2) of Definition \ref{def:Rightarrow-M-preceq}, since $\lambda \in {\mu_b}^{|\theta(L)|}$ holds, we have $\lambda' \in {\mu_t}^{|\theta(L')|}$.
  Therefore, there exists $Z \subseteq (V \cup \Sigma)^*$
  such that $\{ S \} \overset{\lambda'}{\underset{{\phi(\theta(L'))}^{{\mu}_{t}}}{\Rightarrow^e}} Z$,
  where we should recall Definition \ref{defi:rightarrow-R-l}, \ref{defi:rightarrow-gamma-lambda}, and \ref{defi:phi-tau-mu}.
  Let $Y \subseteq (V \cup \Sigma)^*$ such that $\{ S \} \overset{\lambda}{\underset{{\phi(\theta(L))}^{{\mu}_{b}}}{\Rightarrow^e}} Y$.
  By (\ref{eq:theo:necessary-syn:generation-detail}) and the definition of $\lambda$, we have $w \in Y$.
  By Lemma \ref{lemm:binaryrelation-subset}, we have $Y \subseteq Z$.
  Therefore, we have $w \in Z$.
  Since $L_e(H,C,\theta(L'))$ is defined,
  there exist $W_i \subseteq (V \cup \Sigma)^*$ for each $i \in [1,|\theta(L')|]$ such that
  \begin{align*}
    \{ S \}
    \underset{{\phi(\theta(L')[1])}^{{\mu}_{t}}}{\Rightarrow^e} W_1
    \underset{{\phi(\theta(L')[2])}^{{\mu}_{t}}}{\Rightarrow^e}
    \cdots
    \underset{{\phi(\theta(L')[|\theta(L')|])}^{{\mu}_{t}}}{\Rightarrow^e} W_{|\theta(L')|}.
  \end{align*}
  Since $L_e(H,C,\theta(L')) = L'$ holds, we have  $W_{|\theta(L')|} \cap \Sigma^* = L'$.
  Then, by Remark \ref{rem:subset-rightarrow}, we have $Z \subseteq W_{|\theta(L')|}$.
  Therefore, we have $w \in Z \cap \Sigma^* \subseteq W_{|\theta(L')|} \cap \Sigma^* = L'$, which completes the proof of the statement (s1).

  We next show that (s2) are satisfied.
  Let $L'$ be any language in $\mathcal{L}$ and $\tau$ be any control sequence in $\Gamma^+$.
  Let $\lambda'$ be an integer sequence in ${\mathbb{Z}_{\geq 1}}^+$ such that $(\theta(L),\lambda) \underset{M,\preceq}{\Rightarrow} (\tau,\lambda')$ holds.
  We will prove by contradiction that $\tau$ is not a proper prefix of $\theta(L')$.
  Assume that $\tau$ is a proper prefix of $\theta(L')$.
  Since $\preceq = \preceq_C$ holds by (ii), we have $(\theta(L),\lambda) \underset{M,\preceq_C}{\Rightarrow} (\tau,\lambda')$.
  By (A2) of Definition \ref{def:Rightarrow-M-preceq}, since $\lambda \in {\mu_b}^{|\theta(L)|}$ holds, we have $\lambda' \in {\mu_t}^{|\tau|}$.
  Therefore, there exists $Z' \subseteq (V \cup \Sigma)^*$
  such that $\{ S \} \overset{\lambda'}{\underset{{\phi(\tau)}^{{\mu}_{t}}}{\Rightarrow^e}} Z'$.
  Let $Y' \subseteq (V \cup \Sigma)^*$ such that $\{ S \} \overset{\lambda}{\underset{{\phi(\theta(L))}^{{\mu}_{b}}}{\Rightarrow^e}} Y'$.
  By (\ref{eq:theo:necessary-syn:generation-detail}) and the definition of $\lambda$, we have $w \in Y'$.
  By Lemma \ref{lemm:binaryrelation-subset}, we have $Y' \subseteq Z'$.
  Therefore, we have $w \in Z'$.
  Since $L_e^{\mathrm{syn}}(H,C,\theta(L'))$ is defined,
  there exist $A_i \subseteq {(V \cup \Sigma)}^*$ for each $i \in [1,|\theta(L')|]$ such that
  \begin{align}
    \label{eq:theo:necessary-syn:theta-l-dash-process}
    \{ S \}
    \underset{{\phi(\theta(L')[1])}^{{\mu}_{t}}}{\Rightarrow^e} A_1
    \underset{{\phi(\theta(L')[2])}^{{\mu}_{t}}}{\Rightarrow^e}
    \cdots
    \underset{{\phi(\theta(L')[|\theta(L')|])}^{{\mu}_{t}}}{\Rightarrow^e} A_{|\theta(L')|}.
  \end{align}
  By the assumption that $\tau$ is a proper prefix of $\theta(L')$, we have $\theta(L') = \tau \tau'$ for some control sequence $\tau'$ in $\Gamma^+$.
  Therefore, by (\ref{eq:theo:necessary-syn:theta-l-dash-process}),
  there exists $A' \in \{ A_1,\ldots, A_{|\theta(L')|-1} \}$ such that
  \begin{align*}
    \{ S \}
    \underset{{\phi(\tau[1])}^{{\mu}_{t}}}{\Rightarrow^e}
    \cdots
    \underset{{\phi(\tau[|\tau|])}^{{\mu}_{t}}}{\Rightarrow^e} A'
    \underset{{\phi(\tau'[1])}^{{\mu}_{t}}}{\Rightarrow^e}
    \cdots
    \underset{{\phi(\tau'[|\tau'|])}^{{\mu}_{t}}}{\Rightarrow^e} A_{|\theta(L')|} .
  \end{align*}
  By Remark \ref{rem:subset-rightarrow}, we have $Z' \subseteq A'$, which implies $w \in A'$.
  Therefore, we have $w \in A' \cap \Sigma^*$.
  However, since $L_e^{\mathrm{syn}}(H,C,\theta(L'))$ is defined, we have $(A_1 \cup \cdots \cup A_{|\theta(L')|-1}) \cap \Sigma^* = \emptyset$
  and thus, $A' \cap \Sigma^* = \emptyset$.
  This is a contradiction.
  Therefore, we have that $\tau$ is not a proper prefix of $\theta(L')$, which completes the proof of the statement (s2).

  Thus, $\theta$ and $\delta_{\mathcal{L}}$ satisfy the condition (C) with respect to $\mathcal{L}$, $M$, and $(\Gamma,\preceq)$.
\end{proof}
\begin{rei}
  We consider $\mathcal{L}_{\dag}$, $M_{\dag}$, $\Gamma_{\dag}$, $\preceq_{\dag}$, $\theta_{\dag}$, and $\delta_{\mathcal{L}_{\dag}}$ defined in Example \ref{rei:jouken-1}.
   By Example \ref{rei:jouken-1}, this $\theta_{\dag}$ and $\delta_{\mathcal{L}_{\dag}}$ satisfy the condition (C) with respect to $\mathcal{L}_{\dag}$, $M_{\dag}$, and $(\Gamma_{\dag},\preceq_{\dag})$.  
  Then, by Theorem \ref{theo:sufficient-syn},
  there exist
  an RLUB $H=(G,M_{\dag})$ and a control system $C=(\Gamma_{\dag},\phi,T)$ for $H$
  such that
  (i) $\mathrm{Alph}(T)=\Gamma_{\dag}$ holds,
  (ii) $\preceq_{\dag} = \preceq_{C}$ holds,
  (iii) $\mathcal{L}_e(H,C)=\mathcal{L}_{\dag}$ holds, and
  (iv) for any $\tau_1, \tau_2 \in T$, $\tau_1 \neq \tau_2$ implies $L_e(H,C,\tau_1) \neq L_e(H,C,\tau_2)$.
  Actually,
  the RLUB $H_*$ defined in Example \ref{rei:example-of-h-syn} and
  the control system $C_*$ defined in Example \ref{rei:example-of-c-syn}
  satisfy (i),(ii),(iii),(iv).

  We consider the another partial order $\preceq_{\dag}'$ defined in Example \ref{rei:jouken-1-rei2}.
  By Example \ref{rei:jouken-1-rei2}, there exist no injection $\theta_{\dag}':\mathcal{L}_{\dag} \to {\Gamma_{\dag}}^+$
  and no class $\delta_{\mathcal{L}_{\dag}}' = \{ \delta_{L_1}',\ldots,\delta_{L_4}' \}$ of Boolean functions $\delta_L'$ from $L \; (\in \mathcal{L}_{\dag})$ to $\{0,1\}$ satisfying the condition (C) with respect to $\mathcal{L}_{\dag}$, $M_{\dag}$, and $({\Gamma_{\dag}},\preceq_{\dag}')$.
  Then, by Theorem \ref{theo:necessary-syn},
  there exist
  no RLUB $H=(G,M_{\dag})$ and no control system $C=(\Gamma_{\dag},\phi,T)$ for $H$
  such that
  (i) $\mathrm{Alph}(T)=\Gamma_{\dag}$ holds,
  (ii) $\preceq_{\dag}' = \preceq_{C}$ holds,
  (iii) $\mathcal{L}_e^{\mathrm{syn}}(H,C)=\mathcal{L}_{\dag}$ holds, and
  (iv) for any $\tau_1, \tau_2 \in T$, $\tau_1 \neq \tau_2$ implies $L_e^{\mathrm{syn}}(H,C,\tau_1) \neq L_e^{\mathrm{syn}}(H,C,\tau_2)$.
  \halmos
\end{rei}

Finally, we will give necessary and sufficient conditions for the case of
(possibly) non-synchronous controlled generation. 
The definition of synchronous controlled
generation requires the additional condition (r3) on page \pageref{page:condition_r3} in section \ref{sec:rlub-and-its-control} as well as the conditions (r1) and (r2).
The proofs of Theorem \ref{theo:sufficient-syn}
and Theorem \ref{theo:necessary-syn} say that
the condition (r3) directly corresponds to the statement (s2)
of the condition (C) in if and only if directions independently
of the other conditions (r1) and (r2). Therefore,
it is straightforward to see that the following modified
condition (C') obtained by removing statement (s2) from (C)
can contribute to the characterization of (possibly) non-synchronous
controlled generation of RLUBs.

\begin{defi}
  \label{defi:condition-not-syn}
  Let $\mathcal{L}$ be a finite class of non-empty finite languages over a finite alphabet $\Sigma$,
  $M=(\mu_t,\mu_b)$ be a GC, and
  $(\Gamma,\preceq)$ be a partially ordered finite control alphabet.
  Let $\theta$ be an injection from $\mathcal{L}$ to $\Gamma^+$ and
  $\delta_{\mathcal{L}}=\{ \delta_L \mid L \in \mathcal{L} \}$ be a class of Boolean functions $\delta_L$ from $L \; (\in \mathcal{L})$ to $\{ 0,1 \}$.
  We say that {\em $\theta$ and $\delta_{\mathcal{L}}$ satisfy the condition (C') with respect to $\mathcal{L}$, $M$, and $(\Gamma,\preceq)$} if the following (c1) and (c2') hold:
  \begin{align*}
    \mbox{(c1)} \; & \mathrm{Alph}(\theta(\mathcal{L})) = \Gamma \mbox{ holds}, \mbox{ and}\\
    \mbox{(c2')} \; & \mbox{for any } L \in \mathcal{L}, \mbox{ for any } w \in L, \mbox{ there exists } \lambda \in \mathrm{Div}(|\theta(L)|,|w|+\delta_{L}(w),\mu_b) \\
    & \mbox{such that the following (s1) holds}: \\
    & \mbox{(s1)} \; \forall L' \in \mathcal{L} \left( \left( \exists \lambda' \in {\mathbb{Z}_{\geq 1}}^+ \left( (\theta(L),\lambda) \underset{M,\preceq}{\Rightarrow} (\theta(L'),\lambda') \right) \right) \mbox{ implies } w \in L' \right).
  \end{align*}
  \halmos
\end{defi}

\begin{theo}
  \label{theo:sufficient-not-syn}

  Let $\mathcal{L}$ be a finite class of non-empty finite languages over a finite alphabet $\Sigma$,
  $M=(\mu_t,\mu_b)$ be a GC with $0 \not\in \mu_t$, and
  $(\Gamma,\preceq)$ be a partially ordered finite control alphabet.
  Assume that
  there exist
  an injection $\theta:\mathcal{L} \to \Gamma^+$ and
  a class $\delta_{\mathcal{L}}=\{ \delta_L \mid L \in \mathcal{L} \}$ of Boolean functions $\delta_L$ from $L \; (\in \mathcal{L})$ to $\{ 0,1 \}$
  satisfying the condition (C') with respect to $\mathcal{L}$, $M$, and $(\Gamma,\preceq)$.
  Then,
  there exist
  an RLUB $H=(G,M)$ and a control system $C=(\Gamma,\phi,T)$ for $H$
  such that
  (i) $\mathrm{Alph}(T)=\Gamma$ holds,
  (ii) $\preceq = \preceq_{C}$ holds,
  (iii) $\mathcal{L}_e(H,C)=\mathcal{L}$ holds, and
  (iv) for any $\tau_1, \tau_2 \in T$, $\tau_1 \neq \tau_2$ implies $L_e(H,C,\tau_1) \neq L_e(H,C,\tau_2)$.

\end{theo}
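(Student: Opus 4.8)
The plan is to re-use the construction of $H_*$ and $C_*$ given in the proof of Theorem~\ref{theo:sufficient-syn}, essentially without change. Since $\theta$ and $\delta_{\mathcal{L}}$ satisfy the condition~(C') with respect to $\mathcal{L}$, $M=(\mu_t,\mu_b)$, and $(\Gamma,\preceq)$, for each $L\in\mathcal{L}$ and $w\in L$ I fix an integer sequence $\lambda_{L,w}\in\mathrm{Div}(|\theta(L)|,|w|+\delta_L(w),\mu_b)$ satisfying the statement~(s1) of Definition~\ref{defi:condition-not-syn} (such a $\lambda_{L,w}$ exists by (c2')). Using these $\lambda_{L,w}$'s I define $H_*=((V,\Sigma,S,P),M)$ by (D1) and $C_*=(\Gamma,\phi,T)$ by (D2) and (D3). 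Note that $0\notin\mu_t$ together with $\mu_b\subseteq\mu_t$ yields $0\notin\mu_b$, so all the remarks and facts of Section~\ref{subsec:construction-of-h-c-syn} that require $0\notin\mu_b$, in particular Remark~\ref{rem:relation-r-phi-lambda}, Remark~\ref{rem:preceq-preceq-c-equal}, and Fact~\ref{fact:lambda-l-w-elements-geq-1}, are available.

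Properties (i) and (ii) are then obtained exactly as in the proof of Theorem~\ref{theo:sufficient-syn}: (i) follows from $T=\theta(\mathcal{L})$ together with (c1), and (ii) is Remark~\ref{rem:preceq-preceq-c-equal}. For (iii), I fix $L\in\mathcal{L}$ and write $\theta(L)=t_1\cdots t_n$. By Theorem~\ref{theo:generation-kimoto} it suffices to prove that $L_e(((V,\Sigma,S,P),(\mu_t,\mu_t)),C_*,\theta(L))=L$ and $L_e(((V,\Sigma,S,P),(\mu_b,\mu_b)),C_*,\theta(L))=L$; and for each of these two generative conditions there is a single admissible sequence of behaviors, namely $\phi(\theta(L))^{\mu_t}$ and $\phi(\theta(L))^{\mu_b}$ respectively, which (since $\Rightarrow^e$ is a function) determines the set sequences $X_1,\dots,X_n$ of (\ref{eq:theo:sufficient-syn:derivation-detail-upper}) and $Y_1,\dots,Y_n$ of (\ref{eq:theo:sufficient-syn:derivation-detail-lower}). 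I then copy the relevant part of the proof of Theorem~\ref{theo:sufficient-syn}: the fact~(\ref{eq:theo:sufficient-syn:sono1}) that every $w\in L$ lies in $X_n$ and in $Y_n$ (obtained by applying the rules of $P(L,w)^{(i)}_{\lambda_{L,w}}$ at stage $i$), together with the claim~(\ref{eq:theo:sufficient-syn:claim-star}) and the statement~(s1), give (\ref{eq:theo:sufficient-syn:sono2}) and (\ref{eq:theo:sufficient-syn:sono3}), hence $X_n\cap\Sigma^*=Y_n\cap\Sigma^*=L$. Thus $L_e(H_*,C_*,\theta(L))=L$ for every $L\in\mathcal{L}$, so (R1) holds and, since $T=\theta(\mathcal{L})$ with $\theta$ injective, $\mathcal{L}_e(H_*,C_*)=\{L_e(H_*,C_*,\theta(L))\mid L\in\mathcal{L}\}=\mathcal{L}$, which is (iii). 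Finally (iv) is immediate: a control sequence in $T$ is $\theta(L)$ for a unique $L$, and $\theta(L_1)\neq\theta(L_2)$ forces $L_1\neq L_2$, so $L_e(H_*,C_*,\theta(L_1))=L_1\neq L_2=L_e(H_*,C_*,\theta(L_2))$.

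The point of the argument, and the reason the proof is essentially a restriction of the one for Theorem~\ref{theo:sufficient-syn}, is that the claim~(\ref{eq:theo:sufficient-syn:claim-star}) and the derivations (\ref{eq:theo:sufficient-syn:sono1})--(\ref{eq:theo:sufficient-syn:sono3}) establishing the conditions (r1) and (r2) (equivalently, $X_n\cap\Sigma^*=Y_n\cap\Sigma^*=L$) never use the statement~(s2) of the condition~(C); in the proof of Theorem~\ref{theo:sufficient-syn}, (s2) was invoked solely to derive the synchronization condition (r3), that is, $(X_1\cup\cdots\cup X_{n-1})\cap\Sigma^*=\emptyset$. Dropping (s2) from (C) to obtain (C') therefore costs exactly (r3), which is precisely the condition that (possibly) non-synchronous generation does not impose. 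Consequently I do not expect a genuine new obstacle here; the only care required is to confirm, line by line, that the portion of the proof of Theorem~\ref{theo:sufficient-syn} being imported is independent of (s2), so that it remains valid under the weaker hypothesis (C').
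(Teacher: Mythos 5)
Your proposal is correct and is essentially the paper's own argument: the paper proves Theorem~\ref{theo:sufficient-not-syn} by observing that in the proof of Theorem~\ref{theo:sufficient-syn} the statement (s2) is used only to establish the synchronization condition (r3), so the same construction of $H_*$ and $C_*$ with $\lambda_{L,w}$'s chosen via (c2') yields (i)--(iv) under (C'). Your line-by-line check that the claim~(\ref{eq:theo:sufficient-syn:claim-star}), (\ref{eq:theo:sufficient-syn:sono1})--(\ref{eq:theo:sufficient-syn:sono3}), and the supporting remarks (using $0\notin\mu_b$ from $0\notin\mu_t$ and $\mu_b\subseteq\mu_t$) are independent of (s2) matches the paper's intended reasoning.
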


\begin{theo}
  \label{theo:necessary-not-syn}

  Let $\mathcal{L}$ be a finite class of non-empty finite languages over a finite alphabet $\Sigma$,
  $M=(\mu_t ,\mu_b)$ be a GC, and
  $(\Gamma,\preceq)$ be a partially ordered finite control alphabet.
  Assume that
  there exist
  an RLUB $H=(G,M)$ and a control system $C=(\Gamma,\phi,T)$ for $H$
  such that
  (i) $\mathrm{Alph}(T)=\Gamma$ holds,
  (ii) $\preceq = \preceq_{C}$ holds,
  (iii) $\mathcal{L}_e(H,C)=\mathcal{L}$ holds, and
  (iv) for any $\tau_1, \tau_2 \in T$, $\tau_1 \neq \tau_2$ implies $L_e(H,C,\tau_1) \neq L_e(H,C,\tau_2)$.
  Then,
  there exist
  an injection $\theta:\mathcal{L} \to \Gamma^+$ and
  a class $\delta_{\mathcal{L}}=\{ \delta_L \mid L \in \mathcal{L} \}$ of Boolean functions $\delta_L$ from $L \; (\in \mathcal{L})$ to $\{0,1\}$
  satisfying the condition (C') with respect to $\mathcal{L}$, $M$, and $(\Gamma,\preceq)$.

\end{theo}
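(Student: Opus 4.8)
The plan is to follow the proof of Theorem~\ref{theo:necessary-syn} essentially line by line, discarding only the parts that concern the statement (s2), and to rely on the observation (already recorded in the text preceding Definition~\ref{defi:condition-not-syn}) that in that proof the verification of (s1) never invokes the synchronisation condition (r3), whereas the verification of (s2) is precisely where (r3) was used. Since $L_e$ is furnished directly by hypothesis (iii) --- condition (R1) says $L_e(H,C,\tau)$ is defined for every $\tau\in T$ --- every appearance of ``$L_e^{\mathrm{syn}}(H,C,\cdot)$ is defined'' in the synchronous argument gets replaced by ``$L_e(H,C,\cdot)$ is defined'', and no use of Theorem~\ref{theo:generation-kimoto} is needed.

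First I would define $\theta$ exactly as before: by (iii) and (iv) each $L\in\mathcal{L}$ has a unique $\tau_L\in T$ with $L_e(H,C,\tau_L)=L$, so setting $\theta(L)\overset{\mathrm{def}}{=}\tau_L$ yields a well-defined injection, and (iii) and (iv) give $\theta(\mathcal{L})=T$; hence by (i) we get $\mathrm{Alph}(\theta(\mathcal{L}))=\mathrm{Alph}(T)=\Gamma$, which is (c1). Next, for $L\in\mathcal{L}$ and $w\in L$, writing $\theta(L)=t_1\cdots t_n$, the choice $R_i={\phi(t_i)}^{\mu_b}$ is admissible, so since $L_e(H,C,\theta(L))=L$ is defined there exist $X_1,\ldots,X_n$ with
\begin{align*}
  \{ S \}
  \underset{{\phi(t_1)}^{{\mu}_{b}}}{\Rightarrow^e} X_1
  \underset{{\phi(t_2)}^{{\mu}_{b}}}{\Rightarrow^e}
  \cdots
  \underset{{\phi(t_n)}^{{\mu}_{b}}}{\Rightarrow^e} X_n
\end{align*}
and $X_n\cap\Sigma^*=L\ni w$; thus there are $\alpha_1\in{\phi(t_1)}^{\mu_b},\ldots,\alpha_n\in{\phi(t_n)}^{\mu_b}$ with $S\underset{\alpha_1\cdots\alpha_n}{\Rightarrow}w$, and I set $\delta_L(w)=1$ if $|\alpha_1\cdots\alpha_n|=|w|+1$ and $\delta_L(w)=0$ if $|\alpha_1\cdots\alpha_n|=|w|$, these being the only possibilities since $G$ is right linear.

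To establish (c2') I would take $\lambda=<|\alpha_1|,\ldots,|\alpha_n|>$. As in the proof of Theorem~\ref{theo:necessary-syn} one checks $|\lambda|=|\theta(L)|$, $\sum\lambda=|w|+\delta_L(w)$, and $\lambda[i]=|\alpha_i|\in\mu_b$, so $\lambda\in\mathrm{Div}(|\theta(L)|,|w|+\delta_L(w),\mu_b)$. For (s1): given $L'\in\mathcal{L}$ and $\lambda'\in{\mathbb{Z}_{\geq 1}}^+$ with $(\theta(L),\lambda)\underset{M,\preceq}{\Rightarrow}(\theta(L'),\lambda')$, I use (ii) to obtain $(\theta(L),\lambda)\underset{M,\preceq_C}{\Rightarrow}(\theta(L'),\lambda')$, then (A2) of Definition~\ref{def:Rightarrow-M-preceq} together with $\lambda\in{\mu_b}^{|\theta(L)|}$ gives $\lambda'\in{\mu_t}^{|\theta(L')|}$. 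I then take $Z$ with $\{S\}\overset{\lambda'}{\underset{{\phi(\theta(L'))}^{\mu_t}}{\Rightarrow^e}}Z$ and $Y$ with $\{S\}\overset{\lambda}{\underset{{\phi(\theta(L))}^{\mu_b}}{\Rightarrow^e}}Y$; the displayed derivation gives $w\in Y$, Lemma~\ref{lemm:binaryrelation-subset} (applied with $W=Y=\{S\}$) gives $Y\subseteq Z$, hence $w\in Z$. Finally, since $\theta(L')\in T$, $L_e(H,C,\theta(L'))$ is defined and equals $L'$, so there are $W_i$ with $\{S\}\underset{{\phi(\theta(L')[1])}^{\mu_t}}{\Rightarrow^e}W_1\cdots\underset{{\phi(\theta(L')[|\theta(L')|])}^{\mu_t}}{\Rightarrow^e}W_{|\theta(L')|}$ and $W_{|\theta(L')|}\cap\Sigma^*=L'$; Remark~\ref{rem:subset-rightarrow} applied stagewise gives $Z\subseteq W_{|\theta(L')|}$, whence $w\in Z\cap\Sigma^*\subseteq W_{|\theta(L')|}\cap\Sigma^*=L'$. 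This is (s1), so $\theta$ and $\delta_{\mathcal{L}}$ satisfy (C').

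I do not expect a genuine obstacle: the whole content is the bookkeeping observation that the (s1)-part of the proof of Theorem~\ref{theo:necessary-syn} is logically self-contained and never touches (r3), hence transfers verbatim once ``synchronously'' is deleted and $L_e^{\mathrm{syn}}$ is replaced by $L_e$; the (s2)-part, which did depend on (r3), is exactly what we are entitled to drop for the non-synchronous characterisation. The only point to check with care is that $L_e(H,C,\tau)$ really is defined for every $\tau\in T$ --- immediate from hypothesis (iii) via (R1) --- so that this suffices at each place where a ``defined'' hypothesis was used in the synchronous proof.
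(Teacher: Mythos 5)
Your proposal is correct and is exactly the argument the paper intends: the paper itself proves Theorem~\ref{theo:necessary-not-syn} only by noting that the proof of Theorem~\ref{theo:necessary-syn} goes through with $L_e^{\mathrm{syn}}$ replaced by $L_e$ once the (s2)/(r3) part is dropped, which is precisely what you carry out (including the correct stagewise use of Lemma~\ref{lemm:binaryrelation-subset} and Remark~\ref{rem:subset-rightarrow} for (s1)).
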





\section{Conclusions and Future Works}
\label{sec:Conclusions}

This paper aimed to greatly strengthen the theoretical foundation
for controlled generation of RLUBs proposed in \cite{Kimoto,Kimoto2}.
We first introduced a partial order $\preceq_C$ over $\Gamma$
of a control system $C=(\Gamma,\phi,T)$,
which reflects the physical constraints of
control devices used in $C$.
Although we only considered the case that $\preceq_C$
is a {\em total} order
over $\Gamma$ in the previous works (\cite{Kimoto,Kimoto2}),
this paper made a detailed analysis on the language classes
generated by a control system $C$ such that $\preceq_C$ is
a {\em partial} order.
The goal of this paper was to answer to the question
informally explained as follows:
``given a finite class $\mathcal{L}$ of finite languages, a generative
condition $M$ and a partial order $\preceq$ over the control alphabet $\Gamma$,
answer whether there exist an RLUB $H$ using $M$
and its control system $C$ using $\Gamma$ such that $H$ and $C$ generate
$\mathcal{L}$ and $\preceq_C = \preceq$ holds.''
For this purpose, for any given $M$ and $\preceq$,
we introduce the important relation $\underset{M,\preceq}{\Rightarrow}$
over $\Phi(\Gamma)$. 
Using the relation $\underset{M,\preceq}{\Rightarrow}$,
under the assumption that $M=(\mu_t,\mu_b)$ satisfies
$0\not\in\mu_t$, we gave necessary conditions and
sufficient conditions to answer ``yes''
to the above question.

We have several problems which remain to be solved in the future works.
First, we could not succeed in removing the assumption
$0\not\in\mu_t$ of the generative condition $M=(\mu_t,\mu_b)$.
At this point, we do not have a good idea for
obtaining characterization theorems in the case of $0\in\mu_t$.
Another related unsolved problem is how to characterize
class of finite languages to be generated by
RLUBs and their control systems in the {\em remaining} mode.
Although we know some relationship between generative capacity
of RLUBs in the erasing mode and that in the remaining mode
(Theorem 1 in \cite{Kimoto}), it is not enough to reveal
the computational capability of RLUBs in the remaining mode
through the characterization in the case of erasing mode.
Finally, it is interesting to apply those characterization
theorems to reveal the hierarchy of generative capacity
of RLUBs $H$ and control systems $C$
with various physical constraints $\preceq_C$
imposed by control devices which we can use to implement $C$.
The obtained theorems in this paper
could help understand the computational capability
of the developed control devices such as temperature dependent
DNA devices, photo-responsive DNA devices, etc.

\section*{Acknowledgement}

This work was supported in part by a 
Grant-in-Aid for Scientific Research (B) (No. 19H04204) of 
Japan Society of Promotion of Science.


\end{document}